\newtheorem{theorem}{Theorem}[section]
\newtheorem{lemma}[theorem]{Lemma}
\newtheorem{corollary}[theorem]{Corollary}
\newtheorem{definition}[theorem]{Definition}
\newtheorem{fact}[theorem]{Fact}
\newtheorem{assumption}[theorem]{Assumption}
\newtheorem{thm}{Theorem}
\newcommand{\cOap}{\mathscr{O}_{\mathrm{approx}}}
\newcommand{\cOaug}{\mathscr{O}_{\mathrm{aug}}}
\newcommand{\Id}{I_d}
\title{Robust Gaussian Covariance Estimation in Nearly-Matrix Multiplication Time}
\author{
  Jerry Li
  \thanks{Microsoft Research AI. Email: {\tt jerrl@microsoft.com}}
  \and
  Guanghao Ye \thanks{University of Washington. Email: {\tt ghye@uw.edu}. Research supported in part by NSF Awards CCF-1740551, CCF-1749609, and DMS-1839116.}
}
\begin{document}
\maketitle
\thispagestyle{empty}

\begin{abstract}
  Robust covariance estimation is the following, well-studied problem in high dimensional statistics: given $N$ samples from a $d$-dimensional Gaussian $\+N(\boldsymbol{0}, \Sigma)$, but where an $\eps$-fraction of the samples have been arbitrarily corrupted, output $\widehat{\Sigma}$ minimizing the total variation distance between $\+N(\boldsymbol{0}, \Sigma)$ and $\+N(\boldsymbol{0}, \widehat{\Sigma})$.
  This corresponds to learning $\Sigma$ in a natural affine-invariant variant of the Frobenius norm known as the \emph{Mahalanobis norm}.
  Previous work of~\cite{DBLP:conf/colt/0002D0W19} demonstrated an algorithm that, given $N = \Omega (d^2 / \eps^2)$ samples, achieved a near-optimal error of $O(\eps \log 1 / \eps)$, and moreover, their algorithm ran in time $\wt{O}(T(N, d) \log \kappa / \poly (\eps))$, where $T(N, d)$ is the time it takes to multiply a $d \times N$ matrix by its transpose, and $\kappa$ is the condition number of $\Sigma$.
  When $\eps$ is relatively small, their polynomial dependence on $1/\eps$ in the runtime is prohibitively large.
  In this paper, we demonstrate a novel algorithm which achieves the same statistical guarantees, but which runs in time $\wt{O} (T(N, d) \log \kappa)$.
  In particular our runtime has no dependence on $\eps$.
  When $\Sigma$ is reasonably conditioned, our runtime matches that of the fastest algorithm for covariance estimation without outliers, up to poly-logarithmic factors, showing that we can get robustness essentially ``for free.''
\end{abstract}

\clearpage
\pagenumbering{arabic}

\section{Introduction}
Covariance estimation is one of the most fundamental high dimensional statistical estimation tasks, see e.g.~\cite{bickel2008covariance,bickel2008regularized}, and references therein.
In this paper, we study the problem of covariance estimation in high dimensions, in the presence of a small fraction of adversarial data.
We consider the following standard generative model: we are given samples $X_1, \ldots, X_N$ drawn from a Gaussian $\+N (\boldsymbol{0}, \Sigma)$, but an $\eps$-fraction of these points have been arbitrarily corrupted.
The goal is then to output $\widehat{\Sigma}$ minimizing the total variation distance between $\+N(\boldsymbol{0}, \Sigma)$ and $\+N(\boldsymbol{0}, \widehat{\Sigma})$.
As we shall see, this naturally corresponds to learning $\Sigma$ in an affine-invariant version of the Frobenius norm, known as the \emph{Mahalanobis norm} (see \cref{sec:Mahalanobis-distabce}).

In the non-robust setting, where there are no corruptions, the problem is well-understood from both a information-theoretic and computational perspective.
It is known that the empirical covariance of the data converges to the true covariance at an optimal statistical rate: the empirical covariance matrix has expected Mahalanobis error at most $O(d/\sqrt{N})$; and this is the optimal bound up to a constant factor.
That is, when we have $N=\Omega(d^2/\eps^2)$, the empirical covariance matrix will have Mahalanobis error $O(\eps)$.
In fact, it satisfies a stronger and more natural affine-invariant error guarantee, as we will discuss later in this Section.
Moreover, it is easy to compute: it can be computed in time $T(N, d)$, where $T(n, m)$ is the time it takes to multiply a $m \times n$ matrix by its transpose.
When $N = \Theta (d^2 / \eps^2)$, the currently known best runtime for this is $\widetilde{O} (N d^{1.252})$~\cite{DBLP:conf/soda/GallU18}.\footnote{Throughout this paper, we say $f = \widetilde{O}(g)$ if $f = O(g \log^c g)$ for some universal constant $c > 0$.}
Moreover, this runtime is unlikely to improve without improving the runtime of rectangular matrix multiplication.

The situation is a bit muddier in the robust setting.
If there are an $\eps$-fraction of corrupted samples, the information-theoretically optimal error for covariance estimation of $\+N(\boldsymbol{0}, \Sigma)$ is $O(\eps+d/\sqrt{N})$.
In particular, when $N=\Omega(d^2/\eps^2)$, we can achieve error $O(\eps)$~\cite{rousseeuw1985multivariate,chen2018robust}.
However, the algorithms which achieve this rate run in time which is exponential in the dimension $d$.
In \cite{DBLP:conf/focs/DiakonikolasKK016}, the authors gave the first polynomial-time algorithm for this problem, which given enough samples, achieves error which is independent of the dimension.
Specifically, they achieve an error of $O(\eps \log 1 / \eps)$, which matches the information-theoretic limit, up to logarithmic factors, and is likely optimal for efficient algorithms~\cite{diakonikolas2017statistical}, up to constants.
However, their sample complexity and runtime---while polynomial---are somewhat large, and limit their applicability to very large, high dimensional datasets.
More recently,~\cite{DBLP:conf/colt/0002D0W19} gave an algorithm which runs in time $\widetilde{O} (T(N, d) / \eps^8)$.
When $\eps$ is constant, the runtime of this algorithm nearly matches that of the non-robust algorithm.
However, the dependence on $\eps$ is prohibitive for $\eps$ even moderately small.
This raises a natural question: \emph{can we obtain algorithms for robust covariance estimation of a Gaussian whose runtimes (nearly) match rectangular matrix multiplication?}

In this paper, we resolve this question in the affirmative.
Informally, we achieve the following guarantee:
\begin{thm}[informal, see Theorem~\ref{thm:main-thm}]
Let $D$ be a Gaussian distribution with unknown covariance $\Sigma$, where $\Sigma$ has polynomial condition number. Let $0<\eps < \eps_0$ for some universal constant $\eps_0$. Given a set of $N= \widetilde{\Omega}(d^2/\eps^2)$ samples from $D$, where an $\eps$-fraction of these samples have been arbitrarily corrupted, there is an algorithm that runs in time $\widetilde{O}(T(N, d))$ and outputs $\widehat{\Sigma} \in \R^{d\times d}$ such that the Malahanobis distance between $\Sigma$ and $\widehat{\Sigma}$ is at most $O(\eps \log 1 / \eps)$.
\end{thm}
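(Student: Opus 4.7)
The plan is to build on the matrix-multiplicative-weights (MMW) / filtering framework of Cheng, Diakonikolas, Ge, and Woodruff, and to remove the $\poly(1/\eps)$ overhead that arises in their iteration count.

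\textbf{Step 1 (Preprocessing / whitening).} First I would reduce to the case where $\Sigma$ is nearly isotropic. Starting from a crude robust second-moment estimate $\widetilde{\Sigma}$ satisfying $\tfrac{1}{2}\Sigma \preceq \widetilde{\Sigma} \preceq 2\Sigma$, obtained in $\widetilde{O}(T(N,d))$ time by iterated spectral filtering of the empirical second moments, I would whiten the samples by $\widetilde{\Sigma}^{-1/2}$. Alternating whitening with re-estimation for $O(\log \kappa)$ rounds drives the working covariance to within a constant factor of $\Id$, and accounts for the $\log \kappa$ factor in the runtime. After this reduction it suffices to prove the theorem under $\Sigma \approx \Id$.

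\textbf{Step 2 (Fourth-moment filter, implicitly represented).} In the nearly-isotropic case, the degree-$4$ moment tensor of $\+N(\boldsymbol{0},\Sigma)$ has an explicit closed form; viewing $M \in \R^{d \times d}$ as a vector, the map $M \mapsto \E[\langle X X^\top, M \rangle X X^\top] - 2 M - \mathrm{tr}(M)\Id$ vanishes exactly when $\Sigma = \Id$. On the empirical distribution, any deviation from this identity (as a $d^2 \times d^2$ operator) must be explained either by sampling noise, which has spectral norm $\lesssim \eps$ w.h.p.\ when $N \gtrsim d^2/\eps^2$, or by the $\eps$-fraction of corrupted points; in the latter case the top eigenvector exposes a direction along which the outliers concentrate, and we filter along it. The crucial efficiency point is that this $d^2 \times d^2$ operator admits an implicit representation as a weighted sum $\sum_i w_i (X_i X_i^\top)(X_i X_i^\top)^\top$, so matrix-vector products cost $\widetilde{O}(T(N,d))$ via the identity $(X_i X_i^\top)(X_i X_i^\top)^\top M = (X_i^\top M X_i) \cdot X_i X_i^\top$. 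A Lanczos routine then computes an approximate top eigenvector in $\widetilde{O}(T(N,d))$ total time.

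\textbf{Step 3 (Reducing the outer-loop count from $\poly(1/\eps)$ to $\widetilde{O}(1)$).} Rather than performing hard per-point removals, I would maintain a soft weight vector via MMW, with potential given by a log-determinant of the weighted covariance mismatch. The standard MMW regret bound yields an iteration count scaling as $(\text{width}/\eps) \log(1/\eps)$. The main obstacle, and the key point of improvement over \cite{DBLP:conf/colt/0002D0W19}, is to bound the width: the naive bound on $\langle X_i X_i^\top - \Id, v\rangle^2$ is $\Theta(1/\eps)$, which is what produces their $\poly(1/\eps)$ overhead. I would address this with an adaptive truncation: cap each per-sample statistic at a threshold that depends on the current MMW iterate, chosen so that (i) by the sub-exponential tails of quadratic forms of Gaussians and standard concentration over the uncorrupted samples, the truncation perturbs the MMW gradient by at most $O(\eps \log 1/\eps)$ in the relevant operator norm, while (ii) the width is simultaneously $\log^{O(1)}(1/\eps)$. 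This reduces the total number of outer iterations to $\widetilde{O}(1)$.

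\textbf{Step 4 (Assembly).} Combining $\widetilde{O}(1)$ outer iterations, $\widetilde{O}(T(N,d))$ per-iteration eigenvector cost, and the $O(\log \kappa)$ whitening passes yields the claimed $\widetilde{O}(T(N,d)\log\kappa)$ runtime, absorbed into $\widetilde{O}(T(N,d))$ when $\kappa = \poly(d)$. The $O(\eps \log 1/\eps)$ Mahalanobis error is then inherited from the now-standard analysis of fourth-moment filtering, since only the scheduling of the filter --- not its correctness condition --- has been changed. The hardest step is Step 3: proving that the adaptive truncation is simultaneously tight enough to bound width by $\log^{O(1)}(1/\eps)$ and loose enough to preserve the filtering guarantee, which requires a coupled analysis of the MMW potential against the (unknown) set of inliers.
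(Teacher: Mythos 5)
There is a genuine gap, and it sits exactly where you flag it: Step 3. The entire difficulty of removing the $\poly(1/\eps)$ factor is the claim that an ``adaptive truncation'' can make the MMW width $\log^{O(1)}(1/\eps)$ while perturbing the gradient by only $O(\eps\log 1/\eps)$, and you give no argument for this beyond asserting that sub-exponential tails plus ``a coupled analysis of the MMW potential against the (unknown) set of inliers'' should work. That coupled analysis is the theorem you would need to prove; as written, the proposal assumes it. The paper does not prove such a truncation-based width bound at all. Instead it reduces to the quantum-entropy-score (QUE) filtering guarantees of~\cite{DHL19}, whose iteration counts are already $O(\log d)$-type and $\eps$-independent, and its actual technical work is elsewhere: (i) establishing the deterministic regularity (``goodness'') conditions for the tensored points $X_i\otimes X_i$ --- which are only sub-exponential and only approximately isotropic --- via Hanson--Wright-type concentration and a union bound over $2\eps N$-subsets, and showing these conditions are preserved under the whitening transformations; and (ii) implementing the approximate augmented score oracle on tensored inputs in $\widetilde{O}(T(N,d))$ time via Taylor approximation of the matrix exponential, Johnson--Lindenstrauss sketching, and the identity $Zv=(Y\,\mathrm{diag}(v)\,Y^{\top})^{\flat}$ (your implicit matrix--vector product in Step 2 is the same idea). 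So while your Steps 1, 2, and 4 are broadly aligned with the paper, the load-bearing step is an unproven plan rather than a proof.

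A second, more quantitative gap: whitening ``to within a constant factor of $I$'' and then running one filtering phase cannot by itself give $O(\eps\log 1/\eps)$ Mahalanobis error. The covariance of $X\otimes X$ is $2\Sigma\otimes\Sigma$, so after whitening by an estimate $\Sigma_t$ with $\|\Sigma_t^{-1/2}\Sigma\Sigma_t^{-1/2}-I\|\le\zeta_t$, the tensored points have covariance within $O(\zeta_t)$ of $2I$, and any mean-estimation filter on them incurs an error floor of order $\sqrt{\eps\,\zeta_t}$ coming from this mismatch (this is the $\sqrt{\eps\xi}$ term in the refined guarantee the paper imports from~\cite{DHL19}). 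With $\zeta_t=\Theta(1)$ this floor is $\Theta(\sqrt{\eps})$, not $O(\eps\log 1/\eps)$. The paper handles this with a second phase of iterative refinement, re-whitening by the improved estimate each round so that $\zeta_{t+1}=O(\sqrt{\eps\zeta_t}+\eps\log 1/\eps)$, which converges to $O(\eps\log 1/\eps)$ only after $O(\log\log 1/\eps)$ additional rounds. Your proposal has no analogue of this inner refinement loop, so even granting Step 3, the error analysis in Step 4 would stall at $O(\sqrt{\eps})$.
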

\noindent
By combining this with the result of~\cite{DHL19}, this allows us to robustly learn a polynomially-conditioned Gaussian to total variation distance $O(\eps \log 1 / \eps)$ in time $\wt{O} (T(N, d))$.

Our algorithm follows the same general framework as the algorithm in~\cite{DBLP:conf/colt/0002D0W19}.
They reduce the problem of covariance estimation given corrupted Gaussian samples $X_1, \ldots, X_N$, to a robust mean estimation problem given samples $Y_i = X_i \otimes X_i$, where $\otimes$ denotes Kronecker product.
Then, their algorithm proceeds in two phases: first, they invoke a robust mean estimation algorithm to achieve a rough estimate of the covariance, then they give a procedure which, given a rough estimate of the covariance, can improve it.
They show that both steps can be reduced to solving a packing SDP to high accuracy, and invoke black-box nearly-linear time SDP solvers~\cite{allen2015spectral,allen2016using} to obtain their desired runtime.
However, both phases incur $\poly(1/\eps)$ running time, because in both cases, they need to solve the packing SDP to $\poly (\eps)$ accuracy, and the black-box packing SDP solvers require $\poly (1 / \eps)$ runtime to do so.

Our main contribution is to demonstrate that both phases of their algorithms can be made faster by using techniques inspired by the quantum entropy scoring algorithm presented in~\cite{DHL19}.
The first phase can be directly improved by using the robust mean estimation in~\cite{DHL19} to replace the robust mean estimation algorithm used in~\cite{DBLP:conf/colt/0002D0W19} that achieves error $O(\sqrt{\eps})$.
Improving the second phase requires more work.
This is because the algorithm in~\cite{DHL19} for robust mean estimation below error $O(\sqrt{\eps})$ requires that the uncorrupted samples are isotropic, i.e. their covariance is the identity, and have sub-gaussian tails.
However, the $Y_i$ are only approximately isotropic, and moreover, have only sub-exponential tails.
Despite this, we demonstrate that we can modify the algorithm and analysis in~\cite{DHL19} to handle both of these additional complications.

\subsection{Related work}
The study of robust statistics can be traced back to foundational work of Anscombe, Huber, Tukey and others in the 1960s \cite{anscombe1960rejection,tukey1960survey,huber1992robust,tukey1975mathematics}.
However, it was only recently that first polynomial time algorithms were demonstrated for a number of basic robust estimation tasks, including robust covariance estimation, with dimension-independent (or nearly dimension-independent) error~\cite{DBLP:conf/focs/DiakonikolasKK016,lai2016agnostic}.
Ever since, there has been a flurry of work on learning algorithms in the presence of adversarial training outliers, and a full survey of this literature is beyond the scope of this paper.
See recent theses \cite{li2018principled,steinhardt2018robust} and the survey~\cite{diakonikolas2019recent}  for a more thorough account.

In particular, we highlight a recent line of work on very efficient algorithms for robust estimation tasks~\cite{cheng2019high,DHL19,lecue2019robust,DBLP:conf/colt/0002D0W19,cherapanamjeri2020list} that leverage ideas from matrix multiplicative weights and fast SDP solvers.
In particular,~\cite{cheng2019high} gave an algorithm for robust mean estimation of a Gaussian in time $\widetilde{O}(N d / \eps^6)$; this was improved via quantum entropy scoring to $\widetilde{O}(N d)$ in~\cite{DHL19}.
Our main contribution is to show that similar techniques can be used to improve the runtime of~\cite{DBLP:conf/colt/0002D0W19} to remove the $\poly (1/\eps)$ dependence.

\section{Formal Problem Statement and Our Results}
\label{sec:prelims}
Here, we formally define the problem we will consider throughout this paper.
Throughout this paper, we let $\| \cdot \|_F$ denote the Frobenius norm of a matrix, and $\| \cdot \|$ denote the spectral norm.

\paragraph*{The $\eps$-corruption model}
We will focus on the following, standard corruption model:
\begin{definition}[$\eps$-corruption, See \cite{DBLP:conf/focs/DiakonikolasKK016}]
Given $\eps > 0 $, and a class of distribution $\+D$, the adversary operates as follows: The algorithm specifies some number of samples $N$. The adversary generate $N$ samples $X_1,X_2,\ldots, X_N$ from some (unknown) distribution $D\in \+D$. The adversary is allowed to inspect the samples, removes $\eps N$ of them, and replaces them with arbitrary points. The set of $N$ points (in any order) is then given to the algorithm. 
\end{definition}
Specifically, we will study the following problem: Given an $\eps$-corrupted set of $N$ samples from an unknown $\+N(\boldsymbol{0}_d,\Sigma)$ over $\R^d$, we want to find an accurate estimate of $\Sigma$.
Throughout this paper, we will assume that $\eps < c$ for some constant $c$ sufficiently small.
The largest $c$ for which our results hold is known as the \emph{breakdown point} of the estimator, however, for simplicity of exposition, we will not attempt to optimize this constant in this paper.

\paragraph*{Mahalanobis distance}\label{sec:Mahalanobis-distabce}
To make this question formal, we also need to define an appropriate measure of distance. 
As discussed in previous works, see e.g.~\cite{DBLP:conf/focs/DiakonikolasKK016}, the natural statistical measure of distance for this problem is the total variation distance, which we will denote $\dtv(\cdot, \cdot)$.
Thus the question is: given an $\eps$-corrupted set of samples from $\+N (\boldsymbol{0}_{d}, \Sigma)$, output $\widehat{\Sigma}$ minimizing $\dtv (\+N (\boldsymbol{0}_{d}, \Sigma), \+N (\boldsymbol{0}_{d}, \widehat{\Sigma}))$.
This turns out to be equivalent to learning to unknown covariance in a preconditioned version of the Frobenius norm, which is also often referred to as the \emph{Mahalanobis norm}:
\begin{fact}[folklore]
Let $\Sigma, \Sigma'$ be positive definite.
Then there exist universal constants $c, C > 0$ so that:
\begin{equation}
   c \cdot \|\Sigma^{-1/2} \Sigma' \Sigma^{-1/2}-I\|_{F} \leq \dtv (\+N (\boldsymbol{0}_{d}, \Sigma), \+N (\boldsymbol{0}_{d}, \Sigma')) \leq C \cdot \min \left( 1, \|\Sigma^{-1/2}\Sigma' \Sigma^{-1/2}-I\|_{F} \right) \; .
\end{equation}
\end{fact}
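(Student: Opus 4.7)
The plan is first to exploit affine invariance: both the total variation distance and the Mahalanobis expression $\|\Sigma^{-1/2}\Sigma'\Sigma^{-1/2} - I\|_F$ are preserved under the change of variables $x \mapsto \Sigma^{-1/2} x$, so the problem reduces to bounding $\dtv(\+N(\boldsymbol{0}, I), \+N(\boldsymbol{0}, M))$ in terms of $\|M - I\|_F$, where $M := \Sigma^{-1/2}\Sigma'\Sigma^{-1/2}$. Since both quantities are also invariant under orthogonal conjugation of $M$, I may further assume $M = \operatorname{diag}(\lambda_1, \ldots, \lambda_d)$, making both measures products of one-dimensional centered Gaussians.

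For the upper bound, my approach is to use the squared Hellinger distance, which tensorizes cleanly. A one-dimensional calculation yields $H^2(\+N(0, 1), \+N(0, \lambda)) = 1 - \sqrt{2\sqrt{\lambda}/(1+\lambda)}$, and tensorization gives
\[
1 - H^2\bigl(\+N(\boldsymbol{0}, I), \+N(\boldsymbol{0}, M)\bigr) \;=\; \prod_{i=1}^d \sqrt{\frac{2\sqrt{\lambda_i}}{1+\lambda_i}} .
\]
Setting $\phi(\lambda) := \log\!\bigl((1+\lambda)/(2\sqrt{\lambda})\bigr)$, Taylor expansion shows that $\phi$ is nonnegative, vanishes to second order at $\lambda=1$, and satisfies $\phi(\lambda) \le C_1(\lambda-1)^2$ on any compact subset of $(0, \infty)$. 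Combined with the standard inequality $\dtv \le \sqrt{2}\,H$, this delivers $\dtv \le C_2 \|M - I\|_F$ whenever $\|M - I\|_F$ lies below a universal constant (which forces every $\lambda_i$ into such a compact range); when $\|M - I\|_F$ exceeds that constant, the trivial bound $\dtv \le 1$ closes the gap, yielding $\dtv \le C \min(1, \|M - I\|_F)$.

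The main obstacle is the lower bound, because the elementary inequality $\dtv \ge H^2$ yields only $\dtv \ge c_1 \|M - I\|_F^2$, which is quadratic in the small-perturbation regime and therefore too weak. To recover the correct linear bound, my plan is to use the quadratic test statistic $T(X) := X^\top (M - I) X$. Second-moment computations under $P = \+N(\boldsymbol{0}, I)$ and $Q = \+N(\boldsymbol{0}, M)$ give $\mathbb{E}_Q[T] - \mathbb{E}_P[T] = \|M - I\|_F^2$ together with $\operatorname{Var}_P(T) = 2\|M - I\|_F^2$, so the signal-to-noise ratio of $T$ is of order $\|M - I\|_F$. Since $T$ is a weighted sum of independent (shifted) chi-squared variables under both $P$ and $Q$, it enjoys sub-exponential concentration around its mean; an appropriately chosen threshold test on $T$ then yields an event whose probabilities under $P$ and $Q$ differ by at least $c_2 \min(1, \|M - I\|_F)$, completing the lower bound.
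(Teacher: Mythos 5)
The paper itself gives no proof of this fact (it is cited as folklore), so your argument has to stand on its own. The reduction to $M=\Sigma^{-1/2}\Sigma'\Sigma^{-1/2}$ diagonal and the upper bound are sound: the one-dimensional Hellinger affinity, tensorization, the estimate $\log\bigl((1+\lambda)/(2\sqrt{\lambda})\bigr)\le C_1(\lambda-1)^2$ once $\|M-I\|_F$ is below a small constant (which pins every $\lambda_i$ into a compact window), and $\dtv\le\sqrt{2}H$ together give $\dtv\le C\min(1,\|M-I\|_F)$. One remark on the target: the left-hand inequality as printed cannot hold verbatim (take $\Sigma'=tI$, $t\to\infty$: the Frobenius term blows up while $\dtv\le 1$), so the correct folklore statement has a $\min(1,\cdot)$ on both sides; your plan, which aims for $\dtv\ge c\min(1,\|M-I\|_F)$, targets the right version.

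The genuine gap is the last step of the lower bound. Write $\Delta=\|M-I\|_F$ and work in the regime $\Delta\le 1/2$, which is the only regime where the linear bound is nontrivial (for large $\Delta$ your Hellinger bound $\dtv\ge H^2\ge c$ already suffices). There your statistic $T(X)=X^\top(M-I)X$ has mean gap $\Delta^2$ but standard deviation $\Theta(\Delta)$ under both measures, so the separation is only a $\Theta(\Delta)$-fraction of the noise. Sub-exponential concentration controls tail probabilities from above and therefore cannot certify that a threshold test has advantage $\Omega(\Delta)$; indeed, the natural concentration-based repair (clip $T$ at $K$ standard deviations and use the bounded-statistic bound $\dtv\ge(\mathbb{E}_Q f-\mathbb{E}_P f)/(2\|f\|_\infty)$) forces $K\approx\log(1/\Delta)$ and only yields $\dtv\ge c\,\Delta/\log(1/\Delta)$, losing a logarithmic factor. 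Note also that passing from $P$ to $Q$ is not a pure location shift of $T$: under $P$ it is $\sum_i(\lambda_i-1)g_i^2$ while under $Q$ it is $\sum_i\lambda_i(\lambda_i-1)g_i^2$, so variance and higher moments change as well, and a plain Berry--Esseen comparison does not rescue the argument either, since a single eigenvalue may carry a constant fraction of $\Delta^2$ and even in the spread case the Berry--Esseen error is a constant, swamping the $\Theta(\Delta)$ advantage. What the linear bound genuinely requires is an anti-concentration ingredient for Gaussian quadratic forms, e.g.\ working with the log-likelihood ratio $L$ (a weighted chi-square sum with $\|L\|_{L^2}=\Theta(\Delta)$) and lower bounding $\mathbb{E}_P|e^{L}-1|\gtrsim\mathbb{E}\min(|L|,1)\gtrsim\Delta$ via an $L^1$--$L^2$ moment comparison (hypercontractivity) or a Carbery--Wright-type bound, possibly with a case split when one eigenvalue dominates (then project to that coordinate and use the one-dimensional bound $\dtv(\mathcal{N}(0,1),\mathcal{N}(0,\lambda_j))\ge c\min(1,|\lambda_j-1|)$ and monotonicity of $\dtv$ under marginalization). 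This is how the complete folklore proofs (e.g.\ Devroye--Mehrabian--Reddad) proceed; as written, your final step does not follow from the tools you invoke.
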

Thus, the question becomes: 
given an $\eps$-corrupted set of samples from $\+N(\boldsymbol{0}_{d}, \Sigma)$, output $\widehat{\Sigma}$ which is as close as possible to $\Sigma$ in Mahalanobis norm.

\subsection{Our Main Result}
With this, we can now state our main result:
\begin{restatable}[Main Theorem]{thm}{main}\label{thm:main-thm}
Let $D = \mathcal{N}(\boldsymbol{0}_{d},\Sigma$)
be a zero-mean unknown covariance multivariate Gaussian
over $\R^{d}$. Let $\kappa$ be the condition number of
$\Sigma$. Let $0<\eps<c$, where $c$ is a universal constant. Let
$S$ be a $\eps$-corrupted set of samples from $D$ of size $N=\Omega(d^{2}/\eps^{2})$.
\cref{alg:main} that runs in time $\wt O( T(N,d)\log \kappa)$ takes
$S$ and $\eps$, and outputs a $\wh{\Sigma}$ so that with probability at least $0.99$,
we have 
$\|\Sigma^{-1/2}\widehat{\Sigma}\Sigma^{-1/2}-I\|_{F}\leq O(\eps\log(1/\eps))$.
\end{restatable}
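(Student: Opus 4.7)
The plan is to follow the two-phase framework of~\cite{DBLP:conf/colt/0002D0W19}, reducing robust covariance estimation to robust mean estimation on the Kronecker samples $Y_i := \mathrm{vec}(X_i X_i^\top) \in \R^{d^2}$, whose mean is $\mathrm{vec}(\Sigma)$, but replacing the packing-SDP subroutines with the quantum entropy scoring (QUE) approach of~\cite{DHL19}. The two phases are (i) a \emph{warm-start} phase that produces a constant-factor spectral approximation to $\Sigma$ via iterated coarse estimation, and (ii) a \emph{refinement} phase that, after preconditioning the data by the warm-start, improves the Mahalanobis error all the way down to $O(\eps \log(1/\eps))$. The Kronecker structure is what makes each iteration run in $\widetilde{O}(T(N,d))$ rather than $\widetilde{O}(T(N,d^2))$ time: the key matrix-vector primitives required by QUE applied to the $Y_i$'s reduce, via the identity $\mathrm{vec}(X_iX_i^\top)^\top M \,\mathrm{vec}(X_iX_i^\top) = X_i^\top M X_i X_i^\top X_i$ for symmetric reshapings of $M$, to products that touch the data matrix only through $d\times N$ by $N\times d$ multiplications.

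For the warm-start, I would invoke the $\widetilde{O}(Nd)$-time robust mean estimator of~\cite{DHL19} on the $Y_i$'s. Because the $Y_i$'s themselves have covariance of order $\Sigma \otimes \Sigma$ (with spectral norm $\|\Sigma\|^2$), a single call gives only a relative-spectral approximation of $\Sigma$ up to a factor depending on the condition number. I would therefore iterate: after obtaining $\Sigma^{(t)}$, rescale the raw samples by $(\Sigma^{(t)})^{-1/2}$ and re-run the estimator on the preconditioned Kronecker samples. Each iteration shrinks the effective condition number by a constant factor, so after $O(\log \kappa)$ iterations we arrive at a $\widehat{\Sigma}_0$ satisfying $\|\widehat{\Sigma}_0^{-1/2}\Sigma \widehat{\Sigma}_0^{-1/2} - I\| \le 1/2$, say. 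This phase already achieves Mahalanobis error $O(\sqrt{\eps})$ and contributes the $\log\kappa$ factor to the total runtime.

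The refinement phase is where the real work lies and should be the main obstacle. After preconditioning by $\widehat{\Sigma}_0^{-1/2}$, the true covariance is close to $I$, so the transformed $Y_i$'s have mean near $\mathrm{vec}(I)$ and covariance near $(I \otimes I) + K$ (the Kronecker-identity with the symmetric commutation term), i.e., they are only \emph{approximately isotropic}; moreover, as quadratic forms of Gaussians they have only sub-exponential, not sub-gaussian, tails. The plan is to adapt the QUE analysis of~\cite{DHL19} to this regime. Concretely, I would (a) apply Hanson-Wright to certify that, after discarding an $O(\eps)$ fraction of outlier $Y_i$'s with excessively large projections in directions of large potential, the remaining good samples satisfy the stability conditions QUE requires, with the sub-exponential tails costing only an extra $\log(1/\eps)$ factor (which is precisely what the final bound absorbs); and (b) show that the QUE potential update is robust to $O(1)$-factor perturbations in the isotropy of the clean data, so that the refined-error analysis still yields the $O(\eps\log(1/\eps))$ guarantee. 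The algorithm then outputs the $\Sigma$ implied by the filtered mean estimate of the $Y_i$'s, transformed back through $\widehat{\Sigma}_0^{1/2}$.

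Finally, correctness and runtime combine as follows: standard concentration shows that with $N = \Omega(d^2/\eps^2)$ Gaussian samples, the uncorrupted $Y_i$'s satisfy the sub-exponential stability conditions needed in both phases with probability $0.99$. Each QUE iteration runs in $\widetilde{O}(T(N,d))$ thanks to the Kronecker trick above, and both phases perform $\widetilde{O}(\log\kappa)$ (warm-start) and $\widetilde{O}(1)$ (refinement) such iterations, giving the claimed $\widetilde{O}(T(N,d)\log\kappa)$ total. The main technical work, and the step I expect to occupy most of the paper, is the careful modification of the QUE potential argument to tolerate both the sub-exponential tails of $Y_i$ and the approximate (rather than exact) isotropy of the clean subset.
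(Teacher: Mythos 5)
Your overall plan coincides with the paper's: reduce to robust mean estimation of the tensored samples, replace the packing-SDP subroutines of the prior work with QUE-score filtering, run a coarse phase that pays the $\log\kappa$ factor, then a refinement phase on preconditioned data, with the Kronecker structure giving $\widetilde{O}(T(N,d))$ per iteration. However, there is a genuine gap in your refinement step. The strong QUE guarantee (Theorem 4.7 of \cite{DHL19}, Lemma 3.6 here) yields error $O(\gamma_1+\eps\sqrt{\log 1/\eps}+\sqrt{\eps\xi})$, where $\xi$ contains the spectral deviation of the covariance of the \emph{clean} points from the identity. After preconditioning by a warm start that is only a constant-factor (or even $O(\sqrt{\eps})$-accurate) spectral approximation, that deviation is $\zeta=O(1)$ (resp.\ $O(\sqrt{\eps})$), so a single refinement pass gives only $O(\sqrt{\eps})$ (resp.\ $O(\eps^{3/4})$), not $O(\eps\log 1/\eps)$. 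Hence your claim (b) --- that it suffices to show the QUE potential argument tolerates $O(1)$-factor perturbations of isotropy and the final bound follows in one shot --- fails as stated: the isotropy error enters the achievable mean-estimation error as $\Theta(\sqrt{\eps\zeta})$ and cannot be absorbed into the $\log(1/\eps)$ factor. The paper closes exactly this gap by iterating the refinement itself: each round, run with current accuracy parameter $\zeta_t$ and a fresh preconditioner $\Sigma_t$ satisfying $\Sigma\preceq\Sigma_t\preceq(1+\zeta_t)\Sigma$, produces $\zeta_{t+1}=O(\sqrt{\eps\zeta_t}+\eps\log(1/\eps))$, and $O(\log\log(1/\eps))$ such rounds drive the error to $O(\eps\log(1/\eps))$. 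Your proposal omits both this inner recursion and the maintenance of the one-sided upper bounds $\Sigma_t$ that make the re-preconditioning legitimate.

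Two smaller points. First, the displayed identity $\mathrm{vec}(X_iX_i^\top)^\top M\,\mathrm{vec}(X_iX_i^\top)=X_i^\top M X_i X_i^\top X_i$ does not typecheck; the correct workhorse is $Zv=(Y\,\mathrm{diag}(v)\,Y^\top)^\flat$ for the $d^2\times N$ matrix of tensored points, which is what reduces all matrix--vector products (including those inside the Taylor-approximated matrix exponential and the Johnson--Lindenstrauss sketch) to $d\times N$ times $N\times d$ multiplications; your runtime claim is right but rests on this identity. Second, the coarse phase cannot start from nothing: one needs an initial one-sided bound $\Sigma\preceq\Sigma_0\preceq\kappa\,\mathrm{poly}(d)\,\Sigma$ (Lemma 4.6, imported from the prior work) before the ``shrink the upper bound by a constant factor per iteration'' argument applies; your proposal implicitly assumes such a starting point.
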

\noindent
We make several remarks on this theorem.

First, standard reductions (see e.g.~\cite{DBLP:conf/focs/DiakonikolasKK016}) also allow us to robustly learn the covariance of a Gaussian with arbitrary mean, by doubling $\eps$.
By combining this result with the robust mean estimation result of~\cite{DHL19}, we obtain the following result for learning an arbitrary Gaussian, in total variation distance:
\begin{corollary}
Let $D = \+N(\mathbf{\mu}, \Sigma)$ be an arbitrary Gaussian, and let $\kappa$ be the condition number of $\Sigma$.
Let $\eps < c$ for some universal constant $c$, and let $S$ be an $\eps$-corrupted set of samples from $D$ of size $N = \Omega (d^2 / \eps^2)$.
Then, there is an algorithm which takes $S, \eps$ and outputs $\widehat{\mathbf{\mu}}, \widehat{\Sigma}$ so that $\dtv (D, \+N(\widehat{\mathbf{\mu}}, \widehat{\Sigma)}) \leq O(\eps \log 1 / \eps)$.
Moreover, the algorithm runs in time $\wt{O}(T(N, d) \log \kappa)$.
\end{corollary}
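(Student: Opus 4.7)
The plan is to learn $\Sigma$ first, use the resulting $\widehat\Sigma$ to precondition and robustly estimate the mean, and glue the two estimates via a triangle inequality on total variation distance.

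For the covariance step, I would apply the standard symmetrization trick: form the $\lfloor N/2\rfloor$ pairwise differences $Z_i = (X_{2i-1}-X_{2i})/\sqrt{2}$, each distributed as $\+N(\boldsymbol{0}_d,\Sigma)$. Since each original corruption spoils at most one $Z_i$, the corruption fraction among the $Z_i$'s is at most $2\eps$, still below the breakdown point of \cref{thm:main-thm} provided the universal constant $c$ is chosen small enough. Applying \cref{thm:main-thm} to the $Z_i$ yields, with probability at least $0.99$, an estimate $\widehat\Sigma$ satisfying $\|\Sigma^{-1/2}\widehat\Sigma\Sigma^{-1/2}-I\|_F \leq O(\eps\log 1/\eps)$ in time $\wt O(T(N,d)\log\kappa)$. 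In particular, this forces $\widehat\Sigma^{-1/2}\Sigma\widehat\Sigma^{-1/2} = I+E$ with $\|E\|$ at most an arbitrarily small constant, so preconditioned samples are nearly isotropic.

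For the mean step, I would form $\widetilde X_i = \widehat\Sigma^{-1/2}X_i$; the uncorrupted $\widetilde X_i$ are i.i.d.\ $\+N(\widehat\Sigma^{-1/2}\mu, I+E)$, which is sub-Gaussian with covariance spectrally $(1\pm o(1))$-close to the identity. This is exactly the setting of the quantum-entropy-scoring robust mean estimator of \cite{DHL19}, which runs in time $\wt O(Nd)$ and returns $\widetilde\mu$ with $\|\widetilde\mu - \widehat\Sigma^{-1/2}\mu\|_2 \leq O(\eps\sqrt{\log 1/\eps})$; setting $\widehat\mu = \widehat\Sigma^{1/2}\widetilde\mu$ gives $\|\widehat\Sigma^{-1/2}(\widehat\mu-\mu)\|_2 \leq O(\eps\sqrt{\log 1/\eps})$. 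Computing $\widehat\Sigma^{\pm 1/2}$ via an eigendecomposition costs $\wt O(d^\omega)$, and preconditioning all $N$ samples reduces to a single $d\times d$ by $d\times N$ multiplication of cost $\wt O(T(N,d))$, so the overall runtime remains $\wt O(T(N,d)\log\kappa)$.

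Finally, the triangle inequality
\[
\dtv(\+N(\mu,\Sigma),\+N(\widehat\mu,\widehat\Sigma)) \leq \dtv(\+N(\mu,\Sigma),\+N(\mu,\widehat\Sigma)) + \dtv(\+N(\mu,\widehat\Sigma),\+N(\widehat\mu,\widehat\Sigma))
\]
breaks the desired bound into two pieces: the first is $O(\|\Sigma^{-1/2}\widehat\Sigma\Sigma^{-1/2}-I\|_F) = O(\eps\log 1/\eps)$ by the folklore fact in \cref{sec:Mahalanobis-distabce}, and the second is the TV distance between two Gaussians with shared covariance, which scales as $O(\|\widehat\Sigma^{-1/2}(\widehat\mu-\mu)\|_2) = O(\eps\sqrt{\log 1/\eps})$. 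Summing the two yields the claimed $O(\eps\log 1/\eps)$ bound. The step I expect to require the most care is checking that the \cite{DHL19} estimator -- nominally stated for exactly isotropic sub-Gaussian data -- tolerates the $O(\eps\log 1/\eps)$-sized covariance perturbation $E$ introduced by preconditioning with $\widehat\Sigma^{-1/2}$; this should follow by routine bookkeeping since the stability and regularity certificates used there only require a constant-factor spectral approximation of the identity, but it must be explicitly verified.
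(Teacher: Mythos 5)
Your proposal is correct and follows essentially the same route the paper intends: the standard pairing reduction (doubling $\eps$) to reduce unknown-mean covariance estimation to Theorem~\ref{thm:main-thm}, followed by preconditioned robust mean estimation via~\cite{DHL19} and a triangle inequality on total variation, exactly matching the paper's (unelaborated) sketch. Your closing caveat about near-isotropy is the right thing to flag, and it is handled by the same kind of goodness-under-linear-maps bookkeeping the paper carries out in Lemma~\ref{thm:goodness-rotational-invariant}.
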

\noindent

Second, note that the runtime of our algorithm, up to poly-logarithmic factors, and the logarithmic dependence on $\kappa$, matches that of the best known non-robust algorithm.
This runtime strictly improves upon the runtime of the algorithm in~\cite{DBLP:conf/colt/0002D0W19} with the same guarantee.
The authors of~\cite{DBLP:conf/colt/0002D0W19} also give another algorithm which avoids the $\log \kappa$ dependence in the runtime, but only guarantees that $\| \Sigma - \widehat{\Sigma} \|_F \leq O(\eps \log 1 / \eps) \| \Sigma \|$. 
Note that this weaker guarantee does not yield any meaningful statistical guarantees.

Third, it is well-known (see e.g.~\cite{cai2010optimal}) that, even without corruptions, $\Omega (d^2 / \eps^2)$ samples are necessary to learn the covariance to Mahalanobis distance $O(\eps)$.
Thus, our algorithm is sample optimal for this problem.
Moreover, in the presence of corruptions, it is likely that the $\Omega (d^2)$ in the sample complexity is unavoidable for efficient algorithms, even if we relax the problem and ask for weaker guarantees, such as spectral approximation, or approximation in PSD ordering~\cite{diakonikolas2017statistical}.

Finally, our error guarantee of $O(\eps \log 1 /\eps)$ is off from the optimal error of $O(\eps)$ by a logarithmic factor.
However, this is also likely unavoidable for efficient algorithms in this strong model of corruption~\cite{diakonikolas2017statistical}.
It is known that in slightly weaker notions of corruption such as Huber's contamination model, this can be improved in quasi-polynomial time~\cite{diakonikolas2018robustly}.
It is an interesting open question whether or not this can be achieved in polynomial time.

\section{Mathematical Preliminaries}
\subsection{Notation}
For two functions $f, g$, we say $f = \widetilde{O}(g)$ if $f = O(g \log^c g)$ for some universal constant $c > 0$.
We similarly define $\widetilde{\Omega}$ and $\widetilde{\Theta}$.
For vectors $v \in \R^d$, we let $\|{\cdot}\|_2$ denote the usual $\ell_2$ norm, and $\Iprod{\cdot, \cdot}$ denote the usual inner product between vectors.

For any $N$, we let $\Gamma_N = \{w \in \R^N: 0 \leq w_i \leq 1, \sum w_i \leq 1 \}$ denote the set of vectors which are valid reweightings of $N$.
Note that we allow for these weightings to sum up to less than $1$.
For any $w \in \Gamma_N$, we let $|w| = \sum w_i$ be its mass.
Moreover, given a set of points $Z_1, \ldots, Z_N$, and $w \in \Gamma_N$, let $\mu(w) = \frac{1}{|w|}\sum w_iZ_i$, and $M(w) = \frac{1}{|w|} \sum w_i(Z_i-\mu(w))(Z_i-\mu(w))^\top$ denote the empirical mean and variance of this set of points with the weighting given by $w$, respectively.

For matrices $A, M \in \R^{d \times d}$ we let $\Norm{M}$ denote its spectral norm, we let $\Norm{M}_F$ denote its Frobenius norm, and we let $\Iprod{M, A} = \tr (M^\top A)$ denote the trace inner product between matrices.
For any symmetric matrix $A \in \R^{d \times d}$, let $\exp (A)$ denote the usual matrix exponential of $A$.
Finally, for scalars $x, y \in \R$, and any $\alpha > 0$, we say that $x \approx_\alpha y$ if $\frac{1}{1 + \alpha} x \leq y \leq (1 + \alpha) x$.



\subsection{Naive Pruning}

As a simple but useful preprocessing step, we will need to be able to remove points that are ``obviously'' outliers.  
It's known that there exists a randomized algorithm achieving this with nearly-linear many $\ell_2$ distance queries:
\begin{lemma}[folklore]\label{lem:naive-prune}
There is an algorithm \textsc{NaivePrune} with the following guarantees. 
Let $0<\eps<1/2$. 
Let $S\in \R^{m}$ be a set of $n$ points so that there exists a ball $B$ with radius $r$ and a subset $S'\subseteq S$ so that $|S'|\geq (1-\eps)n$ and $S'\subset B$. Then, with probability $1-\delta$, $\textsc{NaivePrune}(S,r,\delta)$ outputs a set of points $T\subseteq S$ so that $S'\subseteq T$, and $T$ in contained in a ball of radius $4r$. 
Moreover, if all points $Z_i\in S$ are of the form $Z_i=X_i\otimes X_i$ for $X_i \in \R^d$, then $\textsc{NaivePrune}(S,r,\delta)$ can be implemented in $\wt O( T(N,d)\log(1/\delta))$ time.
\end{lemma}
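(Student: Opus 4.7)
The plan is a randomized pivot-and-prune strategy. I would sample $k = \Theta(\log(1/\delta))$ candidate pivots $p_1,\ldots,p_k$ uniformly at random from $S$; for each $p_i$, compute the count $c_i = |\{z \in S : \|z - p_i\|_2 \le 2r\}|$, and call $p_i$ \emph{valid} if $c_i \ge (1-\eps)n$. Letting $p_{i^\star}$ denote the first valid pivot encountered, I would return $T = \{z \in S : \|z - p_{i^\star}\|_2 \le 4r\}$ (and declare failure if no pivot is valid). The key observation driving the whole thing is that any $p \in S'$ is automatically valid: since $S' \subseteq B$, all pairs in $S'$ are at distance at most $2r$, so any such $p$ has at least $|S'| \ge (1-\eps)n$ neighbors in $S$ within radius $2r$.

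For correctness, the event that none of the $k$ pivots lies in $S'$ has probability at most $\eps^k$, which is at most $\delta$ once $k = \Theta(\log(1/\delta))$ (using $\eps < 1/2$ so $\log(1/\eps) = \Omega(1)$); so with probability $1-\delta$ the algorithm finds at least one valid pivot. The subtle point is that the first \emph{valid} pivot need not itself lie in $S'$, so I have to handle two cases for $p_{i^\star}$. If $p_{i^\star} \in S'$, then $\|p_{i^\star} - x\|_2 \le 2r$ for every $x \in S'$, so $S' \subseteq T$ immediately. If $p_{i^\star} \notin S'$, its $(1-\eps)n$ neighbors within distance $2r$ must contain at least one point $q \in S'$ (since $|S \setminus S'| \le \eps n$), so by the triangle inequality $p_{i^\star}$ lies within $2r + r = 3r$ of the center of $B$, and therefore within $4r$ of every $x \in S'$. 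Either way $S' \subseteq T \subseteq \{z : \|z - p_{i^\star}\|_2 \le 4r\}$, a ball of radius $4r$, as required.

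For the runtime in the Kronecker case, a direct expansion gives $\|Z_i - Z_p\|_2^2 = \|X_i\|_2^4 + \|X_p\|_2^4 - 2\langle X_i, X_p\rangle^2$. After an $O(Nd)$ one-time precomputation of $\|X_i\|_2^2$ for all $i$, each pivot $Z_p$ requires only the inner products $\langle X_i, X_p\rangle$ for every $i$, which is a single matrix-vector product against the $d \times N$ data matrix and costs $O(Nd) \le T(N,d)$; an additional $O(N)$ comparisons then produce the count $c_i$ and the corresponding subset. Summing over the $k = O(\log(1/\delta))$ pivots yields the claimed $\wt{O}(T(N,d)\log(1/\delta))$ bound.

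The main obstacle worth flagging is exactly the two-case analysis above: the $2r$ validity threshold is too tight to directly certify coverage of $S'$ around a valid pivot that happens to lie outside $S'$, and it is precisely the fact that such a pivot still lies within $3r$ of the center of $B$ (and hence within $4r$ of all of $S'$) that allows the output radius to be taken as $4r$ rather than something larger. Everything else is bookkeeping on the sampling probability and a standard Kronecker-structure trick for the distance computation.
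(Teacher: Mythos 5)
Your proposal is correct, and its correctness half coincides with the paper's argument: a randomly sampled pivot is tested by counting neighbors within $2r$ (the paper uses the threshold ``more than $n/2$'', you use $(1-\eps)n$; both work since $\eps<1/2$), a pivot from $S'$ passes the test with probability $\geq 1-\eps$ per trial, and the two-case triangle-inequality analysis you flag (a passing pivot outside $S'$ still lies within $2r$ of some point of $S'$, hence within $4r$ of all of $S'$) is exactly how the paper justifies the radius $4r$. Where you genuinely diverge is the implementation for tensored inputs: the paper never computes distances exactly, but instead Johnson--Lindenstrauss-projects all points $Z_i=X_i\otimes X_i$ onto $O(\log d)$ dimensions, computing $J\cdot Z$ via the identity $Zv=(Y\diag(v)Y^\top)^\flat$ and fast rectangular matrix multiplication in $\wt O(T(N,d))$ time, and then works with approximate distances; your route instead uses the exact identity $\|Z_i-Z_p\|_2^2=\|X_i\|_2^4+\|X_p\|_2^4-2\langle X_i,X_p\rangle^2$, so each pivot costs one $O(Nd)$ matrix--vector product and the whole procedure runs in $O(Nd\log(1/\delta))$. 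Your version is arguably cleaner: it needs no fast matrix multiplication at all for this step, and it avoids the (glossed-over in the paper) bookkeeping of how JL distortion perturbs the radii $2r$ and $4r$; the paper's sketch-based approach buys a template that is reused for its score-oracle implementation, where exact computation is not available.
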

For completeness, we prove this lemma in \cref{sec:naive-prune-proof}.

\subsection{Quantum Entropy Score Filtering}

A crucial primitive that we will use throughout this paper is the quantum entropy scoring-based filters of~\cite{DHL19}.
To instantiate the guarantees of these algorithms, we require two ingredients: (1) regularity conditions under which the algorithm is guaranteed to work, and (2) score oracles (or approximate score oracles), which compute the scores which the algorithm will use to downweight outliers.
In this section, we will define these concepts, and state the guarantees that quantum entropy scoring achieves.
The reader is referred to~\cite{DHL19} for more details on the actual implementation of the filtering algorithms.
\subsubsection{Regularity Condition}
The filtering algorithms can be shown to work under a set of general regularity conditions imposed on the original set of uncorrupted data points.
Formally:
\begin{definition}
\label{def:goodness}Let $D$ be a distribution over $\R^m$ with unknown mean
$\mu$ and covariance $\Sigma \preceq \sigma^2 I$.
We say a set of points $S \subseteq \R^m$ is $(\eps,\gamma_{1},\gamma_{2},\beta_{1},\beta_{2})$-good
with respect to $D$ if there exists universal constants $C_{1},C_{2}$
so that the following inequalities are satisfied: 
\begin{itemize}
\item $\|\mu(S)-\mu\|_{2}\leq\sigma \gamma_{1}$ and $\|\frac{1}{|S|}\sum_{i\in S}(X_{i}-\mu(S))(X_{i}-\mu(S))^{\top}-\Sigma \|\leq \sigma^2 \gamma_{2}$.
\item For any subset $T\subset S$ so that $|T|=2\eps|S|$, we have 
\[
\left\Vert \frac{1}{|T|}\sum_{i\in T}X_{i}-\mu\right\Vert \leq\beta_{1},\text{ and }\left\Vert \frac{1}{|T|}\sum_{i\in T}(X_{i}-\mu(S))(X_{i}-\mu(S))^{\top}-\Sigma \right\Vert \leq\beta_{2}\;.
\]
\end{itemize}
If a set of points is $(\eps, \gamma_1, \gamma_2, \infty, \infty)$-good with respect to $D$, we say that it is $(\gamma_1, \gamma_2)$-good with respect to $D$ (Note that when $\beta_1 = \beta_2 = \infty$, the condition now becomes independent of $\eps$).
\end{definition}
Additionally, we will say that a set of points $S$ is $(\eps, \gamma_1, \gamma_2, \beta_1, \beta_2)$-corrupted good (resp. $(\eps, \gamma_1, \gamma_2)$-corrupted good) with respect to $D$ if it can be written as $S = S_g \cup S_b \setminus S_r$, where $S_g$ is $(\eps, \gamma_1, \gamma_2, \beta_1, \beta_2)$-good (resp. $(\eps, \gamma_1, \gamma_2)$-good) with respect to $D$, and we have $|S_r| = |S_b| \leq \eps |S|$.

Intuitively, a set of points is corrupted good if it is an $\eps$-corrupted of a good set of points.

\subsubsection{Score oracles and variants thereof}\label{sec:score-oracle-defs}
The idealized score oracle takes as input an integer $t > 0$, a set of points $Z_1, \ldots, Z_N$, and a sequence of weight vectors $w_0, \ldots, w_{t - 1} \in \Gamma_N$, and outputs $\lambda = \|M(w_0)-I\|_2$ as well as $\tau_{t, i}$, for $i = 1, \ldots, N$, where $\tau_{t, i}$ is the quantum entropy score (QUE score), and is defined to be:
\begin{equation}
\label{eq:tau-def2}
\tau_{t, i} = \Paren{Z_i - \mu(w_t)}^\top U_t \Paren{Z_i - \mu(w_t)} \; ,
\end{equation}
where
\[
U_t = \frac{\exp \Paren{\alpha \sum_{i = 0}^{t - 1} M(w_i)}}{\tr \exp \Paren{\alpha \sum_{i = 0}^{t - 1} M(w_i)}} \; .
\]
Here $\alpha > 0$ is a parameter which will be tuned by the QUE-score filtering algorithm, and we will always choose $\alpha$ so that
\begin{equation}
    \label{eq:alpha}
    \left\| \alpha \sum_{i = 0}^{t - 1} M(w_i) \right\| \leq O(t) \; .
\end{equation}

However, computing this exact score oracle is quite inefficient, so for runtime purposes, we will instead typically work with approximate score oracles.

An approximate score oracle, which we will denote $\cOap$, is any algorithm, which given input as above, instead outputs $\tilde{\lambda}$ and $\tilde{\tau}_{t, i}$ for $i = 1, \ldots, N$ so that $\tilde{\lambda} \approx_{0.1} \| M(w_0) - I \|$, and $\tilde{\tau}_{t, i} \approx_{0.1} \tau_{t, i}$ for all $i = 1, \ldots, N$, where $\tau_{t, i}$ is defined as in~\eqref{eq:tau-def2}.

Note that this is slightly different from the definition of the score oracle in~\cite{DHL19}, as there we do not require that we also output the spectral norm of $M(w_0) - I$.
This is because, in the original setting of~\cite{DHL19}, this computation could be straightforwardly done via power method.
However, our setting is more complicated and doing so requires more work in our setting, and so it will be useful to encapsulate this computation into the definition of the score oracle.
Another slight difference is that here we ask the oracle to output a multiplicative approximation to the spectral norm of $M(w_0) - I$, but in some settings in~\cite{DHL19}, we ask for a multiplicative approximation of $\| M(w_0) \|$.
However, it is easily verified that in the settings we care about, we will always have $\| M(w_0) \| \geq 0.9$, and thus a multiplicative approximation of $\| M(w_0) - I \|$ will always be sufficient for our purposes.

In addition, we say that the score oracle is an approximate augmented score oracle, denoted $\cOaug$, if in addition, it outputs $\tilde{q_{t}}$, which is defined to be any value satisfying:
\begin{equation}\label{eq:q-def}
    |\tilde{q_t}-q_t|\leq 0.1 q_t +0.05\|M(w_t)-I\|, \text{ where } q_t=\langle M(w_t)-I,U_t\rangle \; .
\end{equation}


\subsubsection{Guarantees of QUE score filtering}
Given these two definitions, we can now state the guarantees of the QUE scoring algorithms.
The first theorem allows us to achieve a somewhat coarse error guarantee, under $(\gamma_1, \gamma_2)$-goodness:
\begin{lemma}[Theorem 2.1 in \cite{DHL19}]\label{lem:robust-mean-1}
Let $D$ be a distribution on $\R^{m}$ with unknown mean $\mu$
and covariance $\Sigma\preceq\sigma^{2}I$, for $\sigma \geq 1$. Let $0<\eps<c$ for some
universal constant $c$.
Let $S$ be an $(\eps, O(\sqrt{\eps}), O(1))$-corrupted good set of points with respect to $D$.
Suppose further that $\| X \|_2 \leq R$ for all $X \in S$.
Let $\cOap$ be an approximate score oracle for $S$.
Then, there is an algorithm which outputs
an vector $\hat{\mu}\in\R^{m}$ such that $\|\hat{\mu}-\mu\|\leq O(\sigma\sqrt{\eps}).$ 
Moreover, this algorithm requires $O(\log (mn) \log m)$ calls to $\cOap$ with input $t \leq O(\log m)$ and $\alpha$ satisfying~\cref{eq:alpha}, and requires $\widetilde{O} (n \log (R / \sigma))$ additional computation.
\end{lemma}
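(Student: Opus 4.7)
The lemma is asserted as Theorem 2.1 of \cite{DHL19}, so at the top level my plan is to invoke that result; below I sketch the structure of the proof I would write out if I had to reproduce it from scratch, since we will need to adapt it in later sections. The algorithm maintains a weight vector $w\in \Gamma_N$, initialized uniformly on the $n$ input points. It proceeds in epochs. At the start of each epoch, call $\cOap$ with $t=0$ to obtain $\tilde{\lambda}\approx_{0.1}\|M(w_0)-I\|$. If $\tilde{\lambda}\leq C\eps$ for an appropriately chosen constant $C$, terminate and output $\hat{\mu}=\mu(w_0)$: under $(\eps, O(\sqrt{\eps}), O(1))$-corrupted goodness, a standard ``spectral certificate'' calculation turns $\|M(w_0)-I\|=O(\eps)$ and $|w_0|\geq 1-O(\eps)$ into $\|\mu(w_0)-\mu\|\leq O(\sigma\sqrt{\eps})$. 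Otherwise, enter the inner loop of length $T=O(\log m)$, in which at each step $t$ we query $\cOap$ on the history $w_0,\ldots,w_t$ to obtain scores $\tilde{\tau}_{t,i}$, and update $w_{t+1}$ by a soft filter that down-weights large-score points. The parameter $\alpha$ is scaled to satisfy \cref{eq:alpha}, so that the matrix exponential defining $U_t$ stays well-normalized over the epoch.

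The analysis follows the matrix multiplicative weights template. The key potential is $\Phi_t=\tr\exp(\alpha\sum_{i<t}M(w_i))$; a standard regret bound for matrix exponential updates shows that the ``averaged'' $U_t$ acts as a near-optimal dual witness for $\|M(w_0)-I\|$. Combined with \cref{eq:q-def}, this lets us compare the cumulative good-point score $\sum_{t}\sum_{i\in S_g} w_{t,i}\tilde{\tau}_{t,i}$ to the cumulative bad-point score: the $(\gamma_1,\gamma_2)=(O(\sqrt{\eps}),O(1))$-goodness conditions upper bound the good contribution by $O(1)\cdot|w_t|\cdot\tr(U_t)=O(|w_t|)$, whereas if $\|M(w_0)-I\|$ is still $\Omega(\eps)$ then the total score must be $\Omega(1/\eps)$ times larger, forcing the extra mass onto the bad points. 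A single filter step at the end of the epoch (threshold the cumulative $\tilde{\tau}$'s and randomly drop points above the threshold, as in \cite{DHL19}) then removes strictly more bad than good mass. Since the total bad mass is at most $\eps$ and each epoch removes a constant fraction of the remaining bad mass without removing more than $O(\eps)$ total good mass, after $K=O(\log(mn))$ epochs the algorithm must reach the termination branch, giving the claimed $O(\log(mn)\log m)$ oracle queries.

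The main obstacle is controlling two departures from the cleanest version of \cite{DHL19}. First, scores are only $(1\pm 0.1)$-accurate; I would absorb this by choosing the filter threshold with a multiplicative buffer so that the score-contrast argument still yields strictly more bad than good mass removed, and by choosing the termination threshold $C\eps$ above the oracle's slack so a certified ``small'' $\tilde{\lambda}$ really does imply the mean bound. Second, we only assume $(\gamma_1,\gamma_2)$-goodness, not $(\beta_1,\beta_2)$, so upper-bounding the good-point score directly via subset deviation is not available; instead I would use $\Sigma\preceq\sigma^2 I$ and the fact that $\tr(U_t)=1$ to control the quadratic form $\E_{X\sim D}(X-\mu)^\top U_t (X-\mu)\leq\sigma^2$, transfer this to $S_g$ via the $\gamma_2=O(1)$ bound, and bound the shift $\mu(w_t)-\mu$ by $O(\sigma\sqrt{\eps})$ inductively. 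The remaining $\tilde{O}(n\log(R/\sigma))$ non-oracle computation accounts for maintaining weights and executing the randomized filter across the $\log(R/\sigma)$ dynamic range of per-point scores.
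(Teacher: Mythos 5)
Your top-level move is exactly what the paper does: this lemma is stated as an imported result and justified solely by citation to Theorem 2.1 of \cite{DHL19}, with no proof given in the paper itself. Your accompanying sketch of the QUE-score filtering argument (soft downweighting driven by matrix-multiplicative-weights scores, with the good/bad score-mass comparison) is consistent with the cited source, so the proposal matches the paper's approach.
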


The second theorem allows us to refine our error estimate in the second phase, under a stronger assumption on the goodness of the corrupted set, and with access to an augmented score oracle:
\begin{lemma}[Theorem 4.7 in \cite{DHL19}]\label{lem:robust-mean-2}
	Let $D$ be a distribution on $\R^m$ with covariance $\Sigma$ satisfying $\| \Sigma \| \leq O(1)$. Let $\eps<c$, where $c$ is a universal constant where $c$ is a universal constant, let $\gamma_1,\gamma_2.\beta_1,\beta_2 >0$. Let $S$ be a $(\eps,\gamma_1,\gamma_2.\beta_1,\beta_2)$-corrupted good set with respect to $D$.
	Suppose further that $\| X \|_2 \leq R$ for all $X \in S$.
	Let $\cOaug$ be an approximate augmented score oracle for $S$.
	Then, there is an algorithm which outputs $\hat{\mu}$ so that
	\[
	\| \hat{\mu} - \mu \|_2 \leq O \Paren{\gamma_1 + \eps \sqrt{\log 1 / \eps} + \sqrt{\eps \xi }} \; ,
	\]
	where 
	\begin{equation}
	\label{eq:xi-def}
	\xi = \xi (\eps, \gamma_1, \gamma_2, \beta_1, \beta_2) = \gamma_2 + 2 \gamma_1^2 + 4 \eps^2 \beta_1^2 + 2 \eps \beta_2 + O(\eps \log 1 / \eps) \; .
	\end{equation}
	Moreover, the algorithm requires $O(\log d \log R)$ calls to $\cOaug$ with input $t \leq O(\log d)$ and $\alpha$ satisfying~\cref{eq:alpha}, and requires $\widetilde{O}(n \log R)$ additional computation.
	\end{lemma}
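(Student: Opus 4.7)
The plan is to adapt the quantum-entropy-scoring filter of \cite{DHL19} by maintaining a sequence of weight vectors $w_0, w_1, \ldots \in \Gamma_N$ and iteratively downweighting the points that receive the largest QUE scores $\tilde\tau_{t,i}$ returned by $\cOaug$. Start with $w_0$ uniform on $S$. At step $t$ the oracle returns $\tilde\lambda \approx_{0.1} \|M(w_0) - I\|$, the per-point scores $\tilde\tau_{t,i}$, and the augmented quantity $\tilde q_t$ satisfying \cref{eq:q-def}. If $\tilde q_t$ is below a carefully chosen threshold I stop and output $\mu(w_t)$; otherwise I downweight the points in proportion to $\tilde\tau_{t,i}$ and increment $t$.

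The heart of the analysis is to show that the uncorrupted points carry only a small fraction of the total QUE score mass, so that any step that certifies a large $\tilde q_t$ must be penalizing mostly corrupted points. Writing $\sum_{i \in S_g} w_i \tilde\tau_{t,i}$ as a trace inner product against $U_t$, the first goodness condition controls the bulk contribution by $\gamma_2$, while the $(\beta_1,\beta_2)$ conditions prevent an adversary who has removed a $2\eps$-fraction of $S_g$ from inflating the tail contribution beyond $O(\eps\beta_2 + \eps^2\beta_1^2)$. Summing yields a bound whose dominant terms match exactly the quantity $\xi$ defined in \cref{eq:xi-def}.

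Termination and the final error bound then follow the standard matrix-multiplicative-weights telescoping: the choice of $\alpha$ together with the bound $\|\alpha \sum_{s<t} M(w_s)\| \leq O(t)$ ensures that after $O(\log d)$ inner-loop rounds the log-trace potential $\log \tr \exp(\alpha \sum_s M(w_s))$ cannot grow further, and an outer loop repeats this $O(\log R)$ times to handle the dynamic range of $\|X\|_2$. At termination I decompose
\[
\mu(w_t) - \mu = (\mu(S_g) - \mu) + \tfrac{1}{|w_t|}\sum_{i \in S_g}(w_i - 1)(X_i - \mu) + \tfrac{1}{|w_t|}\sum_{i \in S_b} w_i(X_i - \mu) \; ,
\]
and bound the three summands by $\gamma_1$, $\eps\sqrt{\log 1/\eps}$, and $\sqrt{\eps\xi}$ respectively, the last by Cauchy--Schwarz against the residual second-moment bound certified by the small value of $\tilde q_t$.

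The main obstacle I anticipate is correctly absorbing the approximation slack from $\cOaug$. The multiplicative $\pm 0.1$ slack in $\tilde\tau_{t,i}$ can be handled by inflating the filter threshold by a constant, but the additive $0.05\|M(w_t) - I\|$ slack in $\tilde q_t$ is more delicate: it means that while $\|M(w_t) - I\|$ is still large, $\tilde q_t$ can substantially underestimate $q_t$, so the stopping criterion must be tuned so that premature termination is impossible yet the outer loop still halts after $O(\log d \log R)$ oracle calls. Aligning these constants with the targeted error bound, and verifying that the $(\beta_1,\beta_2)$ parameters enter the analysis only through $\xi$ and not through the iteration count, is where I expect most of the effort to concentrate; the rest largely mirrors the coarser $O(\sqrt{\eps})$-error argument of \cref{lem:robust-mean-1}.
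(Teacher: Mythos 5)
First, a point of calibration: the paper does not prove this statement at all --- it is imported verbatim as Theorem~4.7 of \cite{DHL19} and used as a black box, so the only ``proof'' in the paper is the citation. Your proposal is therefore best judged against the argument in \cite{DHL19} itself, and at the level of architecture it does match it: a QUE-score filter driven by $\cOaug$, a matrix-multiplicative-weights potential over $O(\log d)$ inner rounds and $O(\log R)$ outer rounds, a bound showing the good points carry little score mass (via the $(\gamma_2,\beta_1,\beta_2)$ conditions), and a final good/bad decomposition of $\mu(w_t)-\mu$.

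As a proof, however, the sketch leaves the decisive steps unexecuted. (1) The crux of the filtering analysis is not merely that $\sum_{i\in S_g} w_i\tilde\tau_{t,i}$ is small, but that this bound can be converted into an invariant that each downweighting step removes more weight from $S_b$ than from $S_g$ (equivalently, that the weighted set remains corrupted-good throughout); without this invariant neither the iteration count nor the applicability of the goodness conditions at later rounds is justified, and you never state or prove it. (2) The bound $\sqrt{\eps\xi}$ on the bad-point contribution is asserted by ``Cauchy--Schwarz against the residual second-moment bound certified by the small value of $\tilde q_t$,'' but $\tilde q_t$ small only certifies $\langle M(w_t)-I,U_t\rangle$ small, and converting that into the spectral/second-moment statement actually needed --- with precisely the combination $\gamma_2+2\gamma_1^2+4\eps^2\beta_1^2+2\eps\beta_2+O(\eps\log 1/\eps)$ of \cref{eq:xi-def}, and with the additive $0.05\|M(w_t)-I\|$ slack of \cref{eq:q-def} absorbed --- is exactly where the work lies; you flag this but do not resolve it. (3) Similarly, the middle term of your decomposition is bounded by $\eps\sqrt{\log 1/\eps}$ only via the subset (tail) conditions, which again requires the weight-removal invariant from (1). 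So the route is the right one, but what you have is a faithful roadmap of \cite{DHL19}, not a proof; for the purposes of this paper the citation suffices, and if you want a self-contained argument you must supply the downweighting invariant and the derivation of $\xi$ explicitly.
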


\section{Proof of Theorem~\ref{thm:main-thm}}
In this section, we prove our main theorem modulo a number of key technical lemmata, whose proofs we defer to later sections.
We restate the main theorem here for convenience.
\main*
\noindent We do this in a couple of steps.
The first is a reduction from robust covariance estimation to robust mean estimation.
As observed in \cite{DBLP:conf/colt/0002D0W19}, when $X\sim \+N(\boldsymbol{0}_d,\Sigma)$, we have $\-E[XX^\top] = \Sigma$, 
so basically the covariance estimation problem is equivalent to estimating the mean of the tensor product $X\otimes X$.
One of the main difficulties for adapting the existing algorithms for robust mean estimation is that those algorithms either assume that the distribution is isotropic or has bounded covariance. However, the covariance of $X\otimes X$ corresponds to the fourth moments of $X$, which can depend in a complicated way on the (unknown) $\Sigma$.
To solve  this problem, we adapt the iterative refinement technique from \cite{DBLP:conf/colt/0002D0W19}.
Basically, given an upper bound $\Sigma_{t}\succeq\Sigma$, we can use this upper bound in a robust mean estimation sub-routine to compute a more accurate upper bound $\Sigma_{t+1}$, and recurse.

In prior work of~\cite{DBLP:conf/colt/0002D0W19}, this refinement step was done using a black-box call to a packing SDP.
Our goal is to show that this call can be replaced by a call to a QUE-score filtering algorithm, as this is what will allow us to avoid the $\poly(1/\eps)$ dependence in the runtime.
The main technical work will be to demonstrate that the data has sufficient regularity conditions so that QUE-scoring will succeed, and that we can construct the appropriate approximate score oracles.

\subsection{Deterministic Regularity Conditions}
We first require the following definition:
\begin{definition}
For any positive definite $\Sigma$, let $D_\Sigma$ denote the distribution of $Y = X \otimes X$, where $X \sim \+N(0, \Sigma)$.
\end{definition}
\noindent
Throughout the remainder of the proof, we will condition on the following, deterministic regularity condition on the dataset $S$:
\begin{assumption}
\label{as:regularity}
The dataset $S$ can be written as $S = S_g \cup S_b \setminus S_r$, where $|S_b| = |S_r| = \eps N$, for some $\eps$ sufficiently small, and $S_g = \{X_1, \ldots, X_N\}$, where $X_i = \Sigma^{1/2} \bar{X}_i$, for $i = 1, \ldots, N$, and the set $\{\bar{X}_1 \otimes \bar{X}_1, \ldots, \bar{X}_N \otimes \bar{X}_N\}$ is 
\[
(\eps, O(\eps \sqrt{\log 1 / \eps}), O(\eps \sqrt{\log 1 / \eps}), O(\log 1 / \eps), O(\log^2 1/ \eps))\mbox{-good with respect to $D_I$} \; .
\]
Moreover, all of the $\bar{X}_i$ satisfy $\| X_i \|_2^2 \leq O(d \log N)$.
\end{assumption}
In Section~\ref{sec:regularity}, we demonstrate the following:
\begin{lemma}
\label{lem:conc}
Let $S$ be an $\eps$-corrupted set of samples from $\+N(0, \Sigma)$ of size $N = \Omega \left(\tfrac{d^2}{\eps^2 \log 1 / \eps} \right)$.
Then, with probability $1 - \Omega (d^3)$, the set $S$ satisfies Assumption~\ref{as:regularity}.
\end{lemma}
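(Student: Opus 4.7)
The plan is to reduce to the isotropic setting by setting $\bar X_i := \Sigma^{-1/2} X_i$, so that the $\bar X_i$ are i.i.d.\ standard Gaussians in $\R^d$; the assumption is already stated in these coordinates. It then suffices to show that with high probability (i) $\|\bar X_i\|_2^2 = O(d \log N)$ for every $i$, which is immediate from a sub-exponential tail for $\|\bar X_i\|_2^2 - d$ plus a union bound, and (ii) the set $\{Y_i\}_{i=1}^N$ with $Y_i := \bar X_i \otimes \bar X_i$ is $(\eps, \gamma_1, \gamma_2, \beta_1, \beta_2)$-good with respect to $D_I$ for the stated parameters.

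The two whole-sample bounds come from population-level concentration of $\{Y_i\}$. A direct Gaussian moment computation gives $\mathbb{E}[Y_i] = \mathrm{vec}(I_d)$, $\operatorname{tr}(\mathrm{Cov}(Y_1)) = d(d+1)$, and $\|\mathrm{Cov}(Y_1)\|_{\mathrm{op}} = O(1)$. Vector Bernstein / Hanson--Wright concentration then yields $\|\mu(S_g) - \mathrm{vec}(I_d)\|_2 = O(\sqrt{d^2 \log(1/\eps)/N})$, which at the stated sample size $N = \Omega(d^2/(\eps^2 \log 1/\eps))$ becomes $O(\eps \sqrt{\log 1/\eps})$, giving $\gamma_1$. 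For $\gamma_2$, I would apply matrix Bernstein to the rank-one $d^2 \times d^2$ summands $(Y_i - \mathbb{E} Y_i)(Y_i - \mathbb{E} Y_i)^\top$, truncating at $\|Y_i - \mathbb{E} Y_i\|_2 = O(d \log N)$ via (i), to obtain the analogous spectral-norm rate.

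The delicate step is subset stability. Dualizing, the subset-mean requirement becomes: uniformly over unit $v \in \R^{d^2}$, the average of the top $2\eps N$ values of $|\langle v, Y_i - \mathbb{E} Y_i\rangle|$ is $O(\log 1/\eps)$. Each $\langle v, Y_i - \mathbb{E} Y_i\rangle$ is a centered quadratic form in a standard Gaussian, hence sub-exponential with an $O(1)$ parameter; its top $2\eps N$ order statistics have expected average $O(\log 1/\eps)$, and combining Bernstein concentration with an $\eps$-net on the unit sphere in $\R^{d^2}$ (log-cardinality $O(d^2)$, comfortably absorbed by $N$) yields $\beta_1 = O(\log 1/\eps)$. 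For $\beta_2$ the analogous reduction is to the top $2\eps N$ values of $\langle u, Y_i - \mathbb{E} Y_i\rangle^2$ for unit $u \in \R^{d^2}$, i.e.\ the square of a Gaussian quadratic form, which is sub-Weibull of shape $1/2$; its top $2\eps N$ order statistics have expected average $O(\log^2 1/\eps)$, giving $\beta_2 = O(\log^2 1/\eps)$. The shift from centering at $\mu(S_g)$ versus $\mathbb{E} Y_i$ contributes only $\|\mu(S_g) - \mathbb{E} Y_i\|_2^2 = O(\eps^2 \log 1/\eps)$, absorbed into $\beta_2$.

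The main obstacle is the $\beta_2$ piece: one must uniformly control the sum of the top $2\eps N$ squared Gaussian quadratic forms over an exponentially large net in $\R^{d^2}$. The sample size is calibrated precisely so that a Bernstein-type inequality for sub-Weibull$(1/2)$ variables plus a union bound over the net succeeds with polynomially small failure probability in $d$; this is the one place where the particular $\log(1/\eps)$ improvement in the sample complexity matters. All remaining ingredients are standard Gaussian moment and tail calculations.
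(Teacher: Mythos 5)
Your proposal is essentially sound and reaches the same parameters, but it takes a genuinely different route on the subset conditions. The paper proves a more general parametric statement (Theorem~\ref{thm:goodness-params}) and specializes: for $\gamma_1,\gamma_2$ it combines Hanson--Wright for the linear forms $\langle v, Z_i\rangle$ and the sub-exponential Hanson--Wright bound of \cref{lem:Hanson-Wright-for-SE} with a standard net over the unit sphere in $\R^{d^2}$ (so the $e^{O(d^2)}$ net entropy enters \emph{additively} in the exponent), and for $\beta_1,\beta_2$ it applies the same fixed-set spectral concentration to each subset of size $2\eps N$ and union-bounds over all $\binom{N}{2\eps N}$ subsets, balancing the entropy $O(\eps N\log 1/\eps)$ against the tail $e^{-C\eps N\min(T^2,\sqrt{T})}$. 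You instead dualize the subset conditions into a uniform-over-directions control of the average of the top $2\eps N$ values of $|\langle v, Y_i-\mu\rangle|$ (sub-exponential) and $\langle u, Y_i-\mu\rangle^2$ (sub-Weibull of shape $1/2$), using order-statistic/excess-over-threshold bounds plus a net; this is a valid equivalent reformulation (the worst subset for a fixed direction is the top-$k$ set, and the lower deviation for $\beta_2$ is trivially $O(\|\Sigma_Z\|)=O(1)$), and it hinges on exactly the same tradeoff, since $\eps N\log(1/\eps)\gtrsim d^2$ at the stated sample size. The order-statistics view is arguably more transparent about where $\log 1/\eps$ and $\log^2 1/\eps$ come from; the paper's subset union bound is shorter given that the fixed-set spectral bound is already in hand.

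One quantitative caveat: for $\gamma_2$ you propose matrix Bernstein on the rank-one summands $(Y_i-\E Y_i)(Y_i-\E Y_i)^\top$ after truncation. The variance proxy there is $\bigl\|\E[\|Y-\mu\|_2^2\,(Y-\mu)(Y-\mu)^\top]\bigr\| = \Theta(d^2)$, and matrix Bernstein multiplies this by the dimensional factor $\log d^2$, so at $N=\Omega\bigl(d^2/(\eps^2\log 1/\eps)\bigr)$ you get $\gamma_2 = O(\eps\sqrt{\log(1/\eps)\log d})$ rather than the stated $O(\eps\sqrt{\log 1/\eps})$ (and the truncation also requires a small bias correction). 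This loss is absorbable in the main theorem's $\widetilde\Omega(d^2/\eps^2)$ sample complexity but does not literally yield the lemma as stated; replacing matrix Bernstein by the paper's net-plus-Hanson--Wright argument (where the $d^2$ net entropy does not multiply the variance) removes the extra $\sqrt{\log d}$. The remaining pieces ($\gamma_1$, the norm bound $\|\bar X_i\|_2^2 \le O(d\log N)$, and the recentering from $\E Y$ to $\mu(S_g)$) are handled correctly.
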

\noindent
A key consequence of Assumption~\ref{as:regularity} will be that the set of points $S_g$ will satisfy strong goodness conditions, even after rotations are applied.
Specifically:
\begin{lemma}\label{thm:goodness-rotational-invariant}
Let $S$ and $S_g$ be as in Assumption~\ref{as:regularity}, and let $\Sigma$ be positive definite.
Then if we let $Z_i = (\Sigma^{1/2} X_i) \otimes (\Sigma^{1/2} X_i)$, then the set $\{ Z_1, \ldots, Z_N \}$ is $(\eps, O(\sqrt{\eps}), O(1))$-good with respect to $D_\Sigma$.

In addition, if $\Sigma$ satisfies $\| \Sigma - I \| \leq \xi$ for some $\xi < 1$, then the set $\{ Z_1, \ldots, Z_N \}$ is
\[
(\eps, O(\eps \sqrt{\log 1 / \eps}), O(\eps \sqrt{\log 1 / \eps}) + 6 \zeta, O(\log 1 / \eps), O(\log^2 1/ \eps) + 6 \zeta)\mbox{-good with respect to $D_I$} \; .
\]
\end{lemma}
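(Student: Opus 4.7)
The plan is to exploit the linear action of $\Sigma^{1/2}$ on tensor products. Setting $M := \Sigma^{1/2}\otimes\Sigma^{1/2}$ and $\bar Z_i := \bar X_i\otimes\bar X_i$, we have $Z_i = M \bar Z_i$, $\|M\| = \|\Sigma\|$, and $M$ intertwines all the relevant statistics: $\mu_{D_\Sigma} = M\mu_{D_I}$, and for any reweighting $w$ the weighted empirical mean and covariance satisfy $\mu_w(\{Z_i\}) = M\mu_w(\{\bar Z_i\})$ and $\mathrm{Cov}_w(\{Z_i\}) = M\,\mathrm{Cov}_w(\{\bar Z_i\})\,M^\top$. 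This observation reduces both parts of the lemma to the algebraic transfer of goodness conditions through the linear map $M$.

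For part~(i), the target distribution is $D_\Sigma$ itself. Each of the four inequalities in the goodness definition for $\{\bar Z_i\}$ with respect to $D_I$ transforms directly: the mean deviations acquire a factor of $\|M\| = \|\Sigma\|$, the covariance deviations a factor of $\|M\|^2 = \|\Sigma\|^2$. Using Isserlis' theorem, $\Sigma_{D_\Sigma} = (I + K)(\Sigma \otimes \Sigma)$ where $K$ is the commutation matrix on $\R^{d^2}$, so the appropriate scaling in the goodness definition is $\sigma = \Theta(\|\Sigma\|)$. With this choice, the hypothesis parameters $\gamma_1 = \gamma_2 = O(\eps\sqrt{\log 1/\eps})$ immediately dominate the required $(O(\sqrt{\eps}), O(1))$ for $\eps$ sufficiently small.

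For part~(ii), the target is $D_I$ rather than $D_\Sigma$, so we invoke the triangle inequality on each goodness quantity: $\|T(\{Z_i\}) - T_{D_I}\| \le \|T(\{Z_i\}) - T_{D_\Sigma}\| + \|T_{D_\Sigma} - T_{D_I}\|$. The first summand is bounded as in part~(i), now using $\|M\| \le 1 + O(\xi) \le 2$. For the second summand, Isserlis' identity gives $\Sigma_{D_\Sigma} - \Sigma_{D_I} = (I+K)\bigl[(\Sigma - I)\otimes\Sigma + I\otimes(\Sigma - I)\bigr]$, whose spectral norm is at most $2(\|\Sigma - I\|\|\Sigma\| + \|\Sigma - I\|) \le 6\xi$ when $\|\Sigma - I\| \le \xi < 1$. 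This accounts precisely for the additive $+6\zeta$ terms in the $\gamma_2$ and $\beta_2$ parameters (with $\zeta = \xi$); the mean-level deviations $\mu_{D_\Sigma} - \mu_{D_I} = \mathrm{vec}(\Sigma - I)$ are controlled in the same manner for $\gamma_1$ and $\beta_1$, and the subset conditions $\beta_1, \beta_2$ transfer by applying the same argument to any mass-$2\eps$ subset.

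The main obstacle is the algebraic bookkeeping in part~(ii): applying Isserlis' theorem to pin down $\Sigma_{D_\Sigma}$, then expanding $\Sigma\otimes\Sigma - I\otimes I = (\Sigma - I)\otimes\Sigma + I\otimes(\Sigma - I)$ to extract a bound linear in $\|\Sigma - I\|$, and tracking the constant factor coming from the $(I+K)$ prefactor. Once this decomposition is in hand, every other step reduces to $M$-equivariance combined with the triangle inequality.
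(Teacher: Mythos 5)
Your reduction to the linear map $M=\Sigma^{1/2}\otimes\Sigma^{1/2}$ is exactly the paper's route: part (i) is the affine-invariance of $(\gamma_1,\gamma_2)$-goodness (with $\sigma=\Theta(\|\Sigma\|)$ for $D_\Sigma$), and your treatment of the second-moment parameters in part (ii) — transfer through $M$ with $\|M\|\le 2$, then a triangle inequality against the shift of the population covariance, bounded by $6\zeta$ — coincides with the paper's use of Lemma~\ref{lem:covariance-relation} (your Isserlis/commutation-matrix computation is just a more explicit derivation of that $6\zeta$ bound).

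The gap is in your handling of the first-moment parameters $\gamma_1,\beta_1$ in part (ii). You propose to control the mean shift $\mu_{D_\Sigma}-\mu_{D_I}=\mathrm{vec}(\Sigma-I)$ ``in the same manner,'' but this is an $\ell_2$ (i.e.\ Frobenius) quantity: $\|\mathrm{vec}(\Sigma-I)\|_2=\|\Sigma-I\|_F$, which is only bounded by $\sqrt{d}\,\zeta$ under the hypothesis $\|\Sigma-I\|\le\zeta$, and in the regime where the lemma is applied ($\zeta$ as large as $O(\sqrt{\eps})$) this is far larger than the stated $\gamma_1=O(\eps\sqrt{\log 1/\eps})$ and $\beta_1=O(\log 1/\eps)$, which carry no additive $\zeta$ (let alone $\sqrt{d}\,\zeta$) term. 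So the triangle-inequality step fails for the mean parameters. The paper's proof never introduces this term: the first-moment deviations are measured against the true mean of $D_\Sigma$ (only the second-moment reference is shifted to $2I$), i.e.\ the ``goodness with respect to $D_I$'' in the statement should be read as: mean comparisons relative to $\E[Z_i]=\mathrm{vec}(\Sigma)$, covariance comparisons relative to $2I$. This hybrid reading is also what the downstream application needs — in Lemma~\ref{lem:phase-2} the QUE filter of Lemma~\ref{lem:robust-mean-2} is used to estimate $\E[Z_i]$ itself, while only the covariance proxy must be close to the identity. To repair your argument, simply drop the recentering of the means to $\mu_{D_I}$ and keep $\gamma_1,\beta_1$ defined relative to $\mu_{D_\Sigma}$, transferring them through $M$ exactly as you did in part (i).
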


\subsection{Algorithm Description}
We now describe the crucial subroutines which will allow us to achieve Theorem~\ref{thm:main-thm}.
We will use two phases of iterative
refinement steps (\cref{subsec:sqrt-eps-error,subsec:eps-log-error}), which we will describe and analyze separately.
The first phase will allow us to estimate the covariance relatively coarsely.
Then, the second phase, we will use the fact that if our estimation $\Sigma_{t}$
is already close to $\Sigma$, then $Y_{i}=\Sigma_{t}^{-1/2}X_{i}$
has covariance close to the identity matrix. 
This allows us to invoke the stronger QUE scoring algorithm, which allows us to refine the estimate all the way down to $O(\eps \log 1 / \eps)$.

\begin{algorithm}
\caption{Robust Covariance Estimation\label{alg:main}}

\begin{algorithmic}[1]
\State \textbf{Input:} $S=\{X_1,X_2,\ldots, X_n\},\eps$

\State $T_{1}\leftarrow O(\log \kappa + \log d),T_{2}\leftarrow T_1+O(\log\log(1/\eps))$

\State Compute an initial upper bound $\Sigma_{0}$ use \cref{lem:initial-upperbound}.

\For{$t=1,\ldots,T_{1}-1$}

	\State $\Sigma_{t+1}\leftarrow\textsc{FirstPhase}(S,\Sigma_{t})$ \Comment{\cref{alg:reduction1}}

\EndFor

\State $\zeta_{T_1}\leftarrow O(\sqrt{\eps})$

\For{$t=T_{1},\ldots,T_{2}$}

	\State $\widehat{\Sigma}_{t+1},\Sigma_t,\zeta_{t+1}\leftarrow\textsc{SecondPhase}(S,{\Sigma}_{t+1},\zeta_{t})$\Comment{\cref{alg:reduction2}}

\EndFor

\State \Return\textbf{ $\widehat{\Sigma}_{T_2}$}

\end{algorithmic}
\end{algorithm}
\noindent
First, we need to get a rough estimation of $\Sigma$ as the initial point,
so that we can apply the iterative refinement steps.
We invoke the following lemma:
\begin{lemma}[Lemma 3.1 of \cite{DBLP:conf/colt/0002D0W19}]
\label{lem:initial-upperbound}
Consider the same setting as in \cref{thm:main-thm}. 
We can compute
a matrix $\Sigma_{0}$ in $\wt O( T(N,d))$ such that, with
high probability, $\Sigma\preceq\Sigma_{0}\preceq(\kappa\poly(d))\Sigma$
and $\|\Sigma_{0}\| \leq\poly(d)\|\Sigma\|.$
\end{lemma}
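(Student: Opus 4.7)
The plan is to produce $\Sigma_0$ as a spherical matrix $\Sigma_0 = c_0 R^2 I$, where $R^2$ is a coordinate-free robust estimate of the overall scale of the data that can be computed in a single linear scan. The guiding observation is that, since $\eps < 1/2$, the statistic $R^2 := \mathrm{median}_{i \in [N]} \|X_i\|_2^2$ is sandwiched between $\Omega(\|\Sigma\|)$ from below and $O(d \|\Sigma\| \log N)$ from above with high probability, which is exactly the relationship needed to produce an upper bound of the prescribed quality.

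First I would compute $\|X_i\|_2^2 = X_i^\top X_i$ for every $i \in [N]$ and take the median; this costs $O(Nd)$ time, which is absorbed into $T(N,d)$. The analysis of $R^2$ splits into two directions. For the upper bound, Hanson--Wright concentration yields $\|X\|_2^2 \leq \mathrm{tr}(\Sigma) + O(\|\Sigma\| \log N) \leq O(d \|\Sigma\| \log N)$ with probability $1 - 1/\poly(N)$ for each clean sample $X \sim \+N(\mathbf{0},\Sigma)$; a union bound over the $(1-\eps)N$ clean samples, combined with $\eps < 1/2$, shows the median of the full (corrupted) dataset is at most this quantity. For the lower bound, I use the pointwise inequality $\|X\|_2^2 \geq \lambda_{\max}(\Sigma)\, Y^2$, where $Y$ is the standard-Gaussian component of $X$ along the top eigendirection of $\Sigma$ (obtained by writing $X = \Sigma^{1/2} Z$ with $Z \sim \+N(\mathbf{0},I)$ and projecting onto the top eigenvector). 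A direct univariate calculation then gives $\Pr[\|X\|_2^2 < \|\Sigma\|/c_0] \leq 1/10$ for an absolute constant $c_0$, so Chernoff on the clean subset, combined with $\eps < 1/2$, implies that strictly more than half of the $N$ total samples satisfy $\|X_i\|_2^2 \geq \|\Sigma\|/c_0$, and hence $R^2 \geq \|\Sigma\|/c_0$ with high probability.

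Second, I would set $\Sigma_0 := c_0 R^2 I$. The lower bound on $R^2$ gives $\Sigma_0 \succeq \|\Sigma\| I \succeq \Sigma$, while the upper bound gives $\Sigma_0 \preceq O(d \log N) \|\Sigma\| I$. Using $\lambda_{\min}(\Sigma) \geq \|\Sigma\|/\kappa$, i.e.\ $I \preceq (\kappa/\|\Sigma\|)\Sigma$, this becomes $\Sigma_0 \preceq O(d \kappa \log N)\, \Sigma = (\kappa \poly(d))\,\Sigma$, where $\log N = O(\log(d/\eps))$ is absorbed into $\poly(d)$. The spectral-norm bound $\|\Sigma_0\| \leq \poly(d)\,\|\Sigma\|$ follows from the same chain, and the total runtime is $O(Nd) \leq T(N,d)$.

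The main subtlety is making the lower bound on $R^2$ hold \emph{uniformly} in the spectrum of $\Sigma$: one must handle the degenerate case where almost all mass of $\Sigma$ lies in a single direction (so $\mathrm{tr}(\Sigma) \approx \|\Sigma\|$ rather than $\approx d\|\Sigma\|$), in which naive arguments that try to lower bound $R^2$ by a fraction of $\mathrm{tr}(\Sigma)$ would degrade with the spectral spread. Routing the lower bound through the top eigendirection sidesteps this entirely, since the univariate Gaussian anti-concentration used is an absolute constant independent of the geometry of $\Sigma$.
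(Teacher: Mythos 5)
The paper does not actually prove this lemma: it is imported verbatim as Lemma 3.1 of the cited prior work on faster robust covariance estimation, so there is no internal proof to compare against. Your proposal supplies a self-contained and essentially correct argument, and it is a perfectly reasonable way to establish the statement: a single pass computing $\|X_i\|_2^2$ and taking the median is coordinate-free, runs in $O(Nd)\leq \wt O(T(N,d))$ time, the anti-concentration step through the top eigendirection correctly gives $R^2\geq \|\Sigma\|/c_0$ with high probability, and the chain $\Sigma\preceq c_0R^2 I$ together with $I\preceq(\kappa/\|\Sigma\|)\Sigma$ yields the two-sided sandwich and the spectral-norm bound of the stated $\kappa\,\poly(d)$ form.

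Two small points deserve tightening. First, your lower-bound step as written invokes ``$\eps<1/2$'' with a per-sample failure probability of $1/10$; for the median of the full corrupted set to exceed the threshold you need $(1-\eps)(1-1/10-o(1))>1/2$, which fails as $\eps\to 1/2$. This is harmless here because the theorem's hypothesis is $\eps<c$ for a small universal constant, but you should either say that or pick the anti-concentration constant after fixing the breakdown point. Second, the upper bound via a per-sample Hanson--Wright tail at level $1/\poly(N)$ introduces a $\log N$ factor, and since the lemma only assumes $N=\Omega(d^2/\eps^2)$ (no upper bound on $N$), ``$\log N=O(\log(d/\eps))$'' is not literally justified; it is cleaner to bound the median directly by a constant-probability bound such as $\Pr\bigl[\|X\|_2^2>C\,\tr(\Sigma)\bigr]\leq 1/C$ (Markov, or the median of $Z^\top\Sigma Z$), which gives median $\leq O(\tr\Sigma)\leq O(d)\|\Sigma\|$ with no dependence on $N$ and makes the $\poly(d)$ claims unconditional. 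With these adjustments the argument is complete.
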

\noindent
We first give an algorithm \textsc{FirstPhase}, which, given an upper bound on $\Sigma$, outputs a relatively coarse approximation to $\Sigma:$
\begin{restatable}[First Phase]{theorem}{firstPhase}
\label{lem:phase-1} Let $S$ be a set of points satisfying Assumption~\ref{as:regularity}.
Moreover, let $\Sigma_{t}\in\R^{d\times d}$ be such that $\Sigma\preceq\Sigma_{t}$.
Then there is an algorithm \textsc{FirstPhase}, which given $S$ and $\Sigma_t$, runs in time $\wt O(T(N,d)\log\log\kappa)$ and outputs
a new upper bound matrix $\Sigma_{t+1}$ and a approximate covariance
matrix $\wh{\Sigma}$ such that, with probability $1-1/\poly(d,\log \kappa)$, 
\[
\Sigma\preceq\Sigma_{t+1}\preceq\Sigma+O(\sqrt{\eps})\Sigma_{t}\; , \qquad \text{and} \qquad
 \|\widehat{\Sigma}-\Sigma\|_{F}\leq O(\sqrt{\eps})\|\Sigma_{t}\| \; .
\]
\end{restatable}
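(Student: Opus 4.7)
The plan is to reduce to one invocation of Lemma~\ref{lem:robust-mean-1} applied to the preconditioned, tensorized samples. Concretely, I would set $Y_i = \Sigma_t^{-1/2} X_i$ and $Z_i = Y_i \otimes Y_i \in \R^{d^2}$; since $\Sigma \preceq \Sigma_t$, the uncorrupted $Y_i$ are Gaussian with covariance $A := \Sigma_t^{-1/2} \Sigma \Sigma_t^{-1/2} \preceq I$, so the $Z_i$ have mean $\mathrm{vec}(A)$ and a covariance of spectral norm $O(1)$. By Lemma~\ref{thm:goodness-rotational-invariant} applied with this preconditioning, the set $\{Z_i\}$ is $(\eps, O(\sqrt{\eps}), O(1))$-corrupted good with respect to $D_A$, which is precisely the regularity hypothesis of Lemma~\ref{lem:robust-mean-1}.

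Given an approximate score oracle for this instance, Lemma~\ref{lem:robust-mean-1} returns $\hat{\mu} \in \R^{d^2}$ with $\|\mathrm{mat}(\hat{\mu}) - A\|_F \leq O(\sqrt{\eps})$ (using that the covariance upper bound $\sigma^2 = O(1)$ here). I then set $\widehat{\Sigma} = \Sigma_t^{1/2} \mathrm{mat}(\hat{\mu}) \Sigma_t^{1/2}$ and $\Sigma_{t+1} = \widehat{\Sigma} + C \sqrt{\eps}\, \Sigma_t$ for a sufficiently large absolute constant $C$. The Frobenius guarantee $\|\widehat{\Sigma} - \Sigma\|_F \leq O(\sqrt{\eps}) \|\Sigma_t\|$ follows from $\|\Sigma_t^{1/2} N \Sigma_t^{1/2}\|_F \leq \|\Sigma_t\| \cdot \|N\|_F$, while the spectral inequality $-O(\sqrt{\eps})\,I \preceq \mathrm{mat}(\hat{\mu}) - A \preceq O(\sqrt{\eps})\,I$, conjugated by $\Sigma_t^{1/2}$, yields $-O(\sqrt{\eps})\,\Sigma_t \preceq \widehat{\Sigma} - \Sigma \preceq O(\sqrt{\eps})\,\Sigma_t$ and hence the desired sandwich $\Sigma \preceq \Sigma_{t+1} \preceq \Sigma + O(\sqrt{\eps})\,\Sigma_t$.

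The main obstacle is the runtime: Lemma~\ref{lem:robust-mean-1} is stated in ambient dimension $m = d^2$, so a naive implementation would cost $T(N, d^2)$ per oracle call, which is far too slow. The crucial structural observation is that each $Z_i$ is rank-one, so for any symmetric $U \in \R^{d^2 \times d^2}$ the quadratic form $Z_i^\top U Z_i$ reduces to a contraction involving only the $d \times d$ matrix $Y_i Y_i^\top$. To build $\cOap$ one must (i) maintain $U_t$ implicitly via a truncated polynomial or Taylor expansion of its matrix exponential, expressed through the $d \times d$ weighted second-moment operators $M(w_s)$ acted on via this contraction; and (ii) use Johnson--Lindenstrauss-style sketching to evaluate all $N$ scores and the approximation $\tilde{\lambda} \approx \|M(w_0) - I\|$ in $\widetilde{O}(T(N,d))$ time, in the spirit of the implementations in~\cite{DBLP:conf/colt/0002D0W19,DHL19}. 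Combined with computing $\Sigma_t^{-1/2}$ to sufficient precision via Newton iteration (this is the source of the $\log\log\kappa$ factor) and the fact that Lemma~\ref{lem:robust-mean-1} requires only $\widetilde{O}(1)$ oracle calls plus $\widetilde{O}(N)$ extra work, the total runtime is $\widetilde{O}(T(N,d)\,\log\log\kappa)$, as claimed.
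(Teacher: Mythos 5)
Your overall route is the same as the paper's: precondition by $\Sigma_t^{-1/2}$, tensorize, use Lemma~\ref{thm:goodness-rotational-invariant} to get $(\eps, O(\sqrt{\eps}), O(1))$-corrupted goodness with respect to $D_{\Sigma_t^{-1/2}\Sigma\Sigma_t^{-1/2}}$, run the coarse QUE filter (Lemma~\ref{lem:robust-mean-1}) through the fast tensor score oracle of Theorem~\ref{thm:fast-oracle}, reshape, conjugate back by $\Sigma_t^{1/2}$, and pass from the Frobenius bound to the PSD sandwich via $\|\cdot\|\le\|\cdot\|_F$. That part of your argument is correct and matches the paper.

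However, there is a genuine gap: you skip the \textsc{NaivePrune} preprocessing step, and without it the hypotheses and runtime of Lemma~\ref{lem:robust-mean-1} are not under control. That lemma requires $\|Z_i\|_2\le R$ for \emph{all} points of the input set, and its cost scales with $\log(R/\sigma)$; the corrupted points are adversarial, so $R$ can be made arbitrarily large and the claimed $\wt O(T(N,d)\log\log\kappa)$ bound does not follow from your argument as written. The paper fixes this by first running \textsc{NaivePrune} on the $Z_i$ (implemented in $\wt O(T(N,d)\log(1/\delta))$ time for tensored inputs, Lemma~\ref{lem:naive-prune}), using the bound $\|X_i\|_2^2\le O(d\log N)$ for good points from Assumption~\ref{as:regularity} to guarantee that all surviving points satisfy $\|Z_i\|_2\le O(\sigma_t^2\, d\log N)$ while no good point is removed; only then is $R$ polynomially bounded and the runtime of Lemma~\ref{lem:robust-mean-1} controlled. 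Relatedly, your accounting of the $\log\log\kappa$ factor and the failure probability is off: in the paper this factor comes from calling \textsc{ApproximateScore} and \textsc{NaivePrune} with failure parameter $\delta = 1/\poly(d,\log\kappa)$ (needed so one can union bound over the $O(\log\kappa+\log d)$ first-phase iterations), since the oracle's runtime carries a $\log(1/\delta)$ factor; attributing it to a Newton iteration for $\Sigma_t^{-1/2}$ is speculative and does not address the required success probability $1-1/\poly(d,\log\kappa)$ at all. Adding the pruning step and the explicit $\delta$ bookkeeping would close the gap.
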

\noindent
In the second phase,
since we already have a somewhat accurate estimation of $\Sigma$, we show that we can use this to get a matrix with $O(\eps\log1/\eps)$ error.
\begin{restatable}[Second Phase]{theorem}{secondPhase}
\label{lem:phase-2}Let $S$ be a set of points satisfying Assumption~\ref{as:regularity}.
Let $0<\zeta<\zeta_{0}$ for some universal constant $\zeta_{0}$. Given
$\zeta_{t}$ and $\Sigma_{t}$ where $\Sigma\preceq\Sigma_{t}\preceq(1+\zeta_{t})\Sigma$
as input, \Cref{alg:reduction2} runs in time $\wt O( T(N,d))$ and
outputs a new upper bound matrix $\Sigma_{t+1}$ and a approximate
covariance matrix $\wh{\Sigma}$ such that, with probability $1-1/Nd$,
for $\zeta_{t+1}=O(\sqrt{\eps\zeta_t}+\eps\log1/\eps)$, we have
\[
\Sigma\preceq\Sigma_{t+1}\preceq\Sigma+\zeta_{t+1}\Sigma_{t}\; , \qquad \text{and} \qquad \|\Sigma^{-1/2}\widehat{\Sigma}\Sigma^{-1/2}-I\|_{F}\leq\zeta_{t+1} \; .
\]
\end{restatable}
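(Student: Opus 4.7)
The plan is to reduce to a robust mean estimation problem via the Kronecker product, but on \emph{preconditioned} samples so that we can invoke the sharper QUE filtering guarantee of \cref{lem:robust-mean-2}. Let $Y_i = \Sigma_t^{-1/2} X_i$ and $Z_i = Y_i \otimes Y_i$; since $\Sigma \preceq \Sigma_t \preceq (1+\zeta_t)\Sigma$, the matrix $\Sigma^{*} = \Sigma_t^{-1/2}\Sigma\Sigma_t^{-1/2}$ satisfies $\|\Sigma^{*} - I\| \leq \zeta_t$, so the uncorrupted $Z_i$ arise from a distribution $D_{\Sigma^{*}}$ that is near-isotropic. Applying \cref{thm:goodness-rotational-invariant} with the rotation $\Sigma^{*}$ shows that the transformed good set is $(\eps, O(\eps\sqrt{\log 1/\eps}), O(\eps\sqrt{\log 1/\eps}) + O(\zeta_t), O(\log 1/\eps), O(\log^2 1/\eps) + O(\zeta_t))$-good with respect to $D_I$, which is precisely the regularity condition required by \cref{lem:robust-mean-2}.

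\textbf{Parameter bookkeeping.} Plugging these parameters into \cref{eq:xi-def} gives $\xi = O(\zeta_t + \eps\log^2 1/\eps)$, so \cref{lem:robust-mean-2} produces $\hat\mu$ estimating $\mu^{*} = \mathrm{vec}(\Sigma^{*})$ with
\[
\|\hat\mu-\mu^{*}\|_2 \;\leq\; O\!\left(\eps\sqrt{\log 1/\eps} + \sqrt{\eps\bigl(\zeta_t + \eps\log^2 1/\eps\bigr)}\right) \;=\; O\bigl(\sqrt{\eps\zeta_t} + \eps\log 1/\eps\bigr) \;=\; O(\zeta_{t+1}).
\]
Symmetrize the matricization of $\hat\mu$ to obtain $\widehat{\Sigma^{*}}\in\R^{d\times d}$ (which only decreases the Frobenius error, since $\Sigma^{*}$ is itself symmetric), and set $\widehat\Sigma = \Sigma_t^{1/2}\widehat{\Sigma^{*}}\Sigma_t^{1/2}$.

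\textbf{Transferring the error back to $\Sigma$-coordinates.} Writing $\Sigma^{-1/2}\widehat\Sigma\Sigma^{-1/2} - I = (\Sigma^{-1/2}\Sigma_t^{1/2})(\widehat{\Sigma^{*}}-\Sigma^{*})(\Sigma_t^{1/2}\Sigma^{-1/2})$ and using $\|\Sigma^{-1/2}\Sigma_t^{1/2}\|^2 \leq 1+\zeta_t = O(1)$, the Frobenius bound $\|\Sigma^{-1/2}\widehat\Sigma\Sigma^{-1/2} - I\|_F \leq O(\zeta_{t+1})$ follows. For the PSD upper bound, the spectral norm of $\widehat{\Sigma^{*}}-\Sigma^{*}$ is dominated by its Frobenius norm, so $\Sigma^{*} \preceq \widehat{\Sigma^{*}} + C\zeta_{t+1} I$ for an absolute constant $C$; conjugating by $\Sigma_t^{1/2}$ yields $\Sigma \preceq \widehat\Sigma + C\zeta_{t+1}\Sigma_t$. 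Setting $\Sigma_{t+1} := \widehat\Sigma + C\zeta_{t+1}\Sigma_t$ and absorbing constants into $\zeta_{t+1}$ gives $\Sigma\preceq\Sigma_{t+1}\preceq\Sigma+O(\zeta_{t+1})\Sigma_t$, as required.

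\textbf{Main obstacle: implementing $\cOaug$ in nearly matrix-multiplication time.} The hardest step is constructing an approximate augmented score oracle for the $d^2$-dimensional vectors $Z_i$ without ever materializing $d^2\times d^2$ objects. A direct evaluation of $U_t = \exp\bigl(\alpha\sum_{s<t}M(w_s)\bigr)/\tr(\cdot)$ would be prohibitive, but the Kronecker structure of the $Z_i$ lets us represent each $M(w_s)$ as a linear map on $\R^{d\times d}$ and evaluate $U_t$ via a truncated Taylor series (safe because of the spectral bound in \cref{eq:alpha}), so that applying $U_t$ to a vector reduces to a short sequence of $d\times d$ matrix products and weighted rank-one updates over the samples, each costing $\wt O(T(N,d))$. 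The QUE scores $\tau_{t,i}$ and the augmented quantity $\tilde q_t$ can then be extracted by combining this with Johnson--Lindenstrauss sketching, exactly as in \cite{DHL19}. Since \cref{lem:robust-mean-2} makes only $O(\log d\log R)$ oracle calls with $R=\poly(d,\log N)$, the total runtime is $\wt O(T(N,d))$, and a union bound over the sketch failures brings the overall failure probability to $1/(Nd)$.
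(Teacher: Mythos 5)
Your proposal is correct and follows essentially the same route as the paper's proof: precondition by $\Sigma_t^{-1/2}$, tensor the samples, invoke Lemma~\ref{thm:goodness-rotational-invariant} to get the strengthened goodness parameters with respect to $D_I$, feed these into Lemma~\ref{lem:robust-mean-2} via the fast approximate augmented score oracle (Theorem~\ref{thm:fast-oracle}), and transfer the Frobenius error back through $\Sigma_t^{1/2}$ using $\|AB\|_F\leq\|A\|\|B\|_F$ and $\|\Sigma_t^{-1/2}\Sigma\Sigma_t^{-1/2}-I\|\leq\zeta_t$. Your explicit parameter bookkeeping for $\xi$ and your construction of $\Sigma_{t+1}=\widehat\Sigma+C\zeta_{t+1}\Sigma_t$ for the PSD sandwich are details the paper treats only implicitly, so they are welcome additions rather than deviations.
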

\noindent
We defer the proof of \cref{lem:phase-1} to \cref{subsec:sqrt-eps-error}
and the proof of \cref{lem:phase-2} to \cref{subsec:eps-log-error}. Now, assuming \cref{lem:initial-upperbound,lem:phase-1,lem:phase-2}, we are ready to prove \cref{thm:main-thm}.

\begin{proof}[Proof of Theorem~\ref{thm:main-thm}]
By Lemma~\ref{lem:conc}, our dataset satisfies Assumption~\ref{as:regularity} with probability $1 - 1 / d^3$.
Condition on this event holding for the remainder of the proof.
	By \Cref{lem:initial-upperbound}, we have $\Sigma\preceq \Sigma_0\preceq (\kappa\poly(d)) \Sigma$. For any given covariance upperbound matrix $\Sigma_t$, we use $\textsc{FirstPhase}$ to get a more accurate upperbound $\Sigma_{t+1}\preceq \Sigma+O(\sqrt{\eps})\Sigma_t$, then after $O(\log\kappa + \log d)$ iterations, we have $\Sigma_{T_1}\preceq (1+O(\sqrt \eps))\Sigma_{t+1}$.

	Since we already have a good estimation on covariance where $\zeta_{T_1}=O(\sqrt{\eps})$, we use $\textsc{SecondPhase}$ to obtain a better estimation. By \Cref{lem:phase-2}, we know that $\zeta_{t+1}=O(\sqrt{\eps\zeta_t}+\eps\log(1/\eps))$, then after $\log\log(1/\eps)$ iterations, we have $\zeta_{T_2}\leq O(\eps\log(1/\eps))$. Then, by the guarantee on $\wh \Sigma$, we have 
	\[
		\|\Sigma^{-1/2}\widehat{\Sigma}_{T_2}\Sigma^{-1/2}-I\|_{F}=O(\sqrt{\eps\zeta_{T_2}})= O(\eps\log(1/\eps).
		\]
	
	Now, we consider the probability of success. By \cref{lem:initial-upperbound}, we compute $\Sigma_0$ with probability $1-\frac{1}{d}$. In the first phase, each iteration succeed with probability at least $1-\frac{1}{\poly(d,\log k)}$ by \cref{lem:phase-1}, since we have $O(\log d+\log \kappa)$ iterations in the first phase, then first phase succeed with probability $1-\frac{1}{\poly(d)}$. Similarly, by \cref{lem:phase-2}, each iteration succeed with probability $1-\frac{1}{Nd}$ and we runs this for $\log\log(1/\eps)$ iterations, since $N=\Omega(d^2/\eps^2)$, then all the iterations of second phase succeed with probability at least $1-\frac{1}{d}$. By union bound over all failure probability, we conclude that \cref{alg:main} succeed with probability at least $1-O(1/d)\geq 0.99$.
	
	For the running time, note that we can compute $\Sigma_0$ in $O( T(N,d))$ time and we run $O(\log \kappa +\log d+\log\log(1/\eps))$ iterations in total. In each iteration, we either call $\textsc{FirstPhase}$ or $\textsc{SecondPhase}$, where both of them have runtime $\wt O( T(N,d)\log\log \kappa)$. Thus, the overall runtime is $\wt O( T(N,d)\log \kappa)$.
\end{proof}

\clearpage
\bibliographystyle{alpha}
\bibliography{mybib}
\newpage
\appendix
\section{Proof of \texorpdfstring{\cref{lem:naive-prune}}{{Lemma 3.1}}} \label{sec:naive-prune-proof}
\begin{proof}[Proof of \cref{lem:naive-prune}]
The algorithm is straightforward: choose a random point in $S$, and check if strictly more than $n/2$ points lie within a ball of radius $2r$ around this point.
If so, include all points with distance at most $4r$ from this point.
Note that we cannot calculate the $\ell_2$ distance directly, instead, using JL-lemma, we project all points onto $\R^{O(\log d)}$. 
If not, repeat, and run for $O(\log 1 / \delta)$ iterations.

Similar to proof of \cref{lem:approx-score-fast-implementation}, let $J\in \R^{r\times d^2}$ matrix whose each entries are i.i.d.\, entries from $\+N(0,1/r)$ where $r=O(\log d)$.  Note that $Zv = (Y\diag(v)Y^\top)^\flat$, then we can compute $J\cdot Z$ using fact rectangular matrix multiplication by multiply each row of $J$ to $Z$. Then, this takes $\wt O(T(N,d))$ time. Note that for each iteration, we compute $O(N)$ many $\ell_2$ distance, which takes time $\wt O(N)$. Thus, the total time is $\wt O(T(N,d)+N\log(1/\delta))= \wt O(T(N,d)\log(1/\delta))$.

By the triangle inequality, if we ever randomly select a point from $S'$, then we terminate, and in this case it is easy to see that the output satisfies the desired property.
Thus, it is easy to see that the probability we have not terminated after $t$ iterations is at most $2^{-t}$.
Suppose we have terminated.
Then in that iteration, we selected a point $X \in S$ that has distance at most $2r$ to more than $n/2$ other points in $S$.
This implies that it has distance at most $2r$ to some point in $S'$.
By triangle inequality, this implies that all points in $S'$ are at distance at most $4r$ from $X$, and so the output in this iteration must satisfy the claims of the Lemma.
\end{proof}

\section{Proof of Lemma~\ref{lem:conc}}
\label{sec:regularity}
In this section, we prove Lemma~\ref{lem:conc}.
In fact, we will prove something slightly more general:
\begin{theorem}
    \label{thm:goodness-params}Let $X_{1},X_{2},\ldots,X_{n}\sim\mathcal{N}(\boldsymbol{0_{d}},I)$
    and $Z_{i}=X_{i}\otimes X_{i}$.
    Let $D$ be the corresponding distribution of $Z_{i}$. Then for any
    $\eps$ that is sufficiently small, we have that $S=\{Z_{1},Z_{2},\ldots,Z_{n}\}$
    is $(\eps,\gamma_{1},\gamma_{2},\beta_{1},\beta_{2})$-good with probability
    $1-\delta$, where 
    \begin{align*}
    \gamma_{1} & =O\left(\max\left\{ \sqrt{\frac{d^{2}+\log1/\delta}{n}},\frac{d^{2}+\log1/\delta}{n}\right\} \right),\\
    \gamma_{2} & =O\left(\max\left\{ \sqrt{\frac{d^{2}+\log1/\delta}{n}},\left(\frac{d^{2}+\log1/\delta}{n}\right)^{2}\right\} \right),\\
    \beta_{1} & =O\left(\max\left\{ \sqrt{\frac{d^{2}+\log1/\delta}{\eps n}}+\sqrt{\log1/\eps},\frac{d^{2}+\log1/\delta}{\eps n}+\log1/\eps\right\} \right),\\
    \beta_{2} & =O\left(\max\left\{ \sqrt{\frac{d^{2}+\log1/\delta}{\eps n}}+\sqrt{\log1/\eps},\left(\frac{d^{2}+\log1/\delta}{\eps n}\right)^{2}+\log^{2}1/\eps\right\} \right).
    \end{align*}
\end{theorem}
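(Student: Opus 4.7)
The plan is to verify the four concentration bounds in \cref{def:goodness} separately for the lifted data $Z_i = X_i \otimes X_i \in \R^{d^2}$, using sub-exponential concentration in the tensor space. First I would record the moments of $Z = X \otimes X$ for $X \sim \+N(\boldsymbol{0}_d, I_d)$: by Isserlis' theorem, $\-E[Z] = \mathrm{vec}(I_d)$, and $\mathrm{Cov}(Z)$ viewed as a $d^2 \times d^2$ operator has the closed form $I_{d^2} + K_{d,d}$ (with $K_{d,d}$ the commutation tensor), hence spectral norm $O(1)$; we may take $\sigma = O(1)$ in \cref{def:goodness}. Each coordinate $Z_{ij} = X_i X_j$ is a degree-$2$ Gaussian polynomial, so by Hanson--Wright $\|Z - \-E Z\|_2$ is sub-exponential with Orlicz-$1$ norm $O(d)$, and $\|(Z - \-E Z)(Z - \-E Z)^\top\|$ has Orlicz-$1/2$ (sub-exponential squared) tails.

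For $\gamma_1 = \|\mu(S) - \-E Z\|_2$, I would apply a vector Bernstein inequality in $\R^{d^2}$. Since $\-E \|Z - \-E Z\|_2^2 = \mathrm{tr}\, \mathrm{Cov}(Z) = O(d^2)$ and $\|Z - \-E Z\|_2$ has Orlicz-$1$ norm $O(d)$, a direct application yields $\|\mu(S) - \-E Z\|_2 = O\bigl(\sqrt{(d^2+\log 1/\delta)/n} + (d^2+\log 1/\delta)/n\bigr)$ with probability $1-\delta$, which matches $\gamma_1$. For $\gamma_2$, I would apply matrix Bernstein on the $d^2 \times d^2$ centered summands $(Z_i - \-E Z)(Z_i - \-E Z)^\top - \mathrm{Cov}(Z)$; their Orlicz-$1/2$ tails produce the quadratic large-deviation term seen in the stated $\gamma_2$. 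Passing from centering at $\-E Z$ to centering at $\mu(S)$ only costs an additive $O(\gamma_1^2)$, which is absorbed.

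For the subset bounds $\beta_1, \beta_2$, I would use a standard truncation-plus-uniformity argument. Given any $T \subseteq S$ with $|T| = 2\eps n$, split each $Z_i$ into a ``bulk'' part truncated at radius of order $\log 1/\eps$ in the appropriate norm and a ``tail'' part. The tail contribution is controlled deterministically using the sub-exponential tail applied to the top $2\eps n$ order statistics of $\|Z_i - \mu\|_2$, producing the additive $\sqrt{\log 1/\eps}$ or $\log 1/\eps$ terms in $\beta_1$ (resp.\ $\log^2 1/\eps$ in $\beta_2$, where the truncation radius enters squared). For the bulk I would combine Bernstein with a union bound over all $\binom{n}{2\eps n} \le (e/(2\eps))^{2\eps n}$ subsets, which yields the $\sqrt{(d^2+\log 1/\delta)/(\eps n)} + (d^2+\log 1/\delta)/(\eps n)$ statistical rates characteristic of restricted-mean concentration. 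The same pipeline, but with $W_i = (Z_i - \mu(S))(Z_i - \mu(S))^\top$ and matrix Bernstein, yields $\beta_2$.

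The main obstacle will be the two matrix Bernstein steps in dimension $d^2 \times d^2$ for $\gamma_2$ and $\beta_2$, which concern the centered fourth-moment tensor of a Gaussian: one must carefully track how the Orlicz-$1/2$ norm enters the effective matrix variance and Bernstein deviation so that the quadratic exponent in $(d^2+\log 1/\delta)/n$ is recovered. The most delicate piece is the $\beta_2$ argument, since the $\log^2 1/\eps$ factor arises from composing truncation at level $\log 1/\eps$ with the second-moment scaling, and one must verify that the entropy cost from the union bound over $2\eps n$-subsets does not overwhelm the statistical $\sqrt{d^2/(\eps n)}$ rate.
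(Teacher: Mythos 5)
Your treatment of the first-moment parameters ($\gamma_1$, and the subset bound $\beta_1$) is essentially sound and, once unwound, is the same computation the paper does: the paper controls fixed projections $\langle v,Z_i\rangle$ by Hanson--Wright and unions over an $e^{O(d^2)}$ net plus over the $\binom{n}{2\eps n}$ subsets, which is what a vector Bernstein inequality packages. The genuine gap is in the second-moment parameters $\gamma_2$ and $\beta_2$, where matrix Bernstein in dimension $d^2\times d^2$ cannot deliver the stated rates. The per-sample matrix variance of $(Z_i-\-E[Z])(Z_i-\-E[Z])^\top$ is of order $d^2$: its norm is $\sup_{\|v\|_2=1}\-E[\|Z-\-E[Z]\|_2^2\langle v,Z-\-E[Z]\rangle^2]=\Theta(d^2)$, and this comes from the typical behaviour $\|Z-\-E[Z]\|_2=\Theta(d)$, so no truncation removes it. Matrix Bernstein therefore gives a deviation of order $\sqrt{d^2(\log d+\log 1/\delta)/n}$ for the empirical second moment, a $\sqrt{\log d}$ factor worse than the claimed $\gamma_2=O(\sqrt{(d^2+\log 1/\delta)/n})$; with $\delta=d^{-3}$ and $n=\Theta(d^2/(\eps^2\log 1/\eps))$ this yields $\eps\sqrt{\log(1/\eps)\log d}$ rather than $\eps\sqrt{\log 1/\eps}$, so \cref{lem:conc} and the stated sample size would no longer follow. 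The failure is more severe for $\beta_2$: the union bound over the $\binom{n}{2\eps n}$ subsets must be paid inside the Bernstein exponent, and since the sub-Gaussian branch of the deviation scales like $\sqrt{\sigma^2 t/(\eps n)}$ with $\sigma^2=\Theta(d^2)$ and $t$ at least the entropy $\Theta(\eps n\log 1/\eps)$, you obtain an additive term of order $d\sqrt{\log 1/\eps}$ instead of the claimed dimension-free $\log^2(1/\eps)$. Relatedly, truncating ``at radius of order $\log 1/\eps$'' is vacuous for the tensored points themselves, since every $\|Z_i-\mu\|_2$ concentrates at $\Theta(d)$; any truncation would have to act on one-dimensional marginals $\langle v,Z_i-\mu\rangle$, not on the vectors $Z_i$.

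The missing idea (and the paper's route) is to reduce the spectral-norm statements to fixed one-dimensional projections and to put every union bound into the exponent of a quadratic-form tail. For a fixed unit $v$ (equivalently a fixed test matrix $U$), the average of $\langle v,Z_i-\mu\rangle^2$ over any fixed index set of size $m$ concentrates with tail $2\exp(-Cm\min(T^2,\sqrt T))$, the sub-exponential analogue of Hanson--Wright (\cref{lem:Hanson-Wright-for-SE}); one then unions over an $e^{O(d^2)}$ net of directions and, for $\beta_1,\beta_2$, over the $e^{O(\eps n\log 1/\eps)}$ subsets of size $2\eps n$. Because the large-deviation branch of this tail is $\exp(-Cm\sqrt T)$, the subset entropy only forces $T=\Omega(\log^2 1/\eps)$, which is exactly the dimension-free additive term in $\beta_2$ (the $\min(T,T^2)$ tail for linear forms gives the $\log 1/\eps$ term in $\beta_1$), while the net contributes $d^2$ only additively inside the exponent, producing the stated $\sqrt{(d^2+\log 1/\delta)/(\eps n)}$ statistical rates; no truncation or order-statistics argument is needed. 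If you replace the two matrix-Bernstein steps by this projection-plus-net scheme, the rest of your outline goes through.
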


In particular, we note that when we let $\delta=d^{-3}$ and $N =\Omega(\frac{d^{2}}{\eps^{2}\log1/\eps}),$
then \cref{thm:goodness-params} implies $N$ i.i.d.\,samples from $D$
is $(\eps,\eps\sqrt{\log1/\eps},\eps\sqrt{\log1/\eps},\log1/\eps,\log^{2}1/\eps)$-good
with probability $1-d^{-3}$, which immediately implies Lemma~\ref{lem:conc}.

Before we prove Theorem~\ref{thm:goodness-params}, we need the following preliminaries.
\begin{lemma}[Hanson-Wright]
\label{lem:Hanson-Wright}Let $X_{1},X_{2},\ldots X_{n}$ be i.i.d.\,random
vectors in $\R^{d}$ where $X_{i}\sim\mathcal{N}(0,\Sigma)$ and $\Sigma\preceq I$.
Let $U\in\R^{d\times d}$ and $U\succeq0$ and $\|U\|_{F}=1$. Then,
there exists a universal constant $C$ so that for all $T>0$, we
have 
\[
\Pr\left[\left|\frac{1}{n}\sum_{i=1}^{n}\tr(X_{i}X_{i}^{\top}U)-\tr(U)\right|>T\right]\leq2\exp(-Cn\min(T,T^{2})).
\]
\end{lemma}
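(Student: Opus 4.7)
The plan is to reduce the averaged Gaussian quadratic form to a single standard Gaussian chaos, and then invoke the classical Hanson--Wright tail bound.

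First, I will decouple the quadratic form from the covariance. Since $X_i \sim \mathcal{N}(0,\Sigma)$, write $X_i = \Sigma^{1/2} Z_i$ for $Z_i \sim \mathcal{N}(0, I_d)$ i.i.d. Then
\[
\tr(X_i X_i^\top U) \;=\; X_i^\top U X_i \;=\; Z_i^\top A Z_i, \qquad A := \Sigma^{1/2} U \Sigma^{1/2}.
\]
Since $U \succeq 0$ and $\Sigma \preceq I$, the matrix $A$ is PSD with $\tr(A) = \tr(\Sigma U)$. The key norm bounds are $\|A\|_F \le \|\Sigma^{1/2}\|_{op}^2 \, \|U\|_F \le 1$ and $\|A\|_{op} \le \|U\|_{op} \le \|U\|_F = 1$, using $\|\Sigma\|_{op} \le 1$ throughout.

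Second, I will aggregate the $n$ independent quadratic forms into one. Let $\tilde Z = (Z_1, \dots, Z_n) \in \mathbb{R}^{nd}$, so $\tilde Z \sim \mathcal{N}(0, I_{nd})$, and let $\tilde A$ be block-diagonal with $n$ copies of $A$. Then $\sum_{i=1}^n Z_i^\top A Z_i = \tilde Z^\top \tilde A \tilde Z$, with $\tr(\tilde A) = n\tr(A)$, $\|\tilde A\|_F^2 = n\|A\|_F^2 \le n$, and $\|\tilde A\|_{op} = \|A\|_{op} \le 1$. The standard Hanson--Wright inequality for a single centered Gaussian chaos then gives
\[
\Pr\bigl[\,|\tilde Z^\top \tilde A \tilde Z - n\tr(A)| > s\,\bigr] \;\le\; 2\exp\!\Bigl(-c\,\min\bigl(s^2/\|\tilde A\|_F^2,\; s/\|\tilde A\|_{op}\bigr)\Bigr) \;\le\; 2\exp\!\bigl(-c \min(s^2/n, s)\bigr).
\]
Substituting $s = nT$ and dividing by $n$ yields the desired bound $2\exp(-Cn \min(T, T^2))$, with centering at $\tr(A) = \tr(\Sigma U)$.

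Finally, I would reconcile the centering: the lemma is stated with $\tr(U)$ rather than $\tr(\Sigma U)$. Since $\tr(U) - \tr(\Sigma U) = \tr((I-\Sigma)U) \ge 0$, this only inflates the effective deviation, and by a triangle inequality the same tail bound applies up to an additive shift of $\tr((I-\Sigma)U)$; in the intended application (e.g.\ when $\Sigma = I$, which is the setting of \cref{thm:goodness-params}) the two centerings coincide. The main obstacle is not any deep estimate but the bookkeeping in this last step and in verifying the norm bounds on $A$; the heavy lifting is outsourced entirely to the classical Hanson--Wright tail bound for Gaussian chaos, so no fresh concentration argument is required.
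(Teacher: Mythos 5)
Your proof is correct, and it matches the paper's treatment: the paper states this lemma without proof as the classical Hanson--Wright inequality, and your reduction---writing $X_i = \Sigma^{1/2} Z_i$, bounding $\|A\|_F \le 1$ and $\|A\| \le 1$ for $A = \Sigma^{1/2} U \Sigma^{1/2}$, and stacking the $n$ quadratic forms into a single block-diagonal Gaussian chaos with $\|\tilde A\|_F^2 \le n$ and $\|\tilde A\| \le 1$ before substituting $s = nT$---is exactly the standard derivation the paper implicitly relies on. Your observation about the centering is also apt: as literally written the bound around $\tr(U)$ rather than $\tr(\Sigma U)$ only holds verbatim when $\Sigma = I$ (otherwise one incurs the additive shift $\tr((I-\Sigma)U)$ you describe), and indeed the lemma is only invoked in that isotropic setting (via \cref{cor:D-tail-bound} under the hypotheses of \cref{thm:goodness-params}).
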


\begin{corollary}
\label{cor:D-tail-bound} Under the same setting as \cref{thm:goodness-params},
let $v\in\R^{d^{2}}$ be an arbitrary unit vector. Then, there exists
a universal constant $C>0$ so that for all $T>0$, we have 
\[
\Pr\left[\left|\frac{1}{n}\sum_{i=1}^{n}\langle v,Z_{i}\rangle - \-E [\langle v, Z \rangle ] \right|>T\right]\leq2\exp(-Cn\min(T,T^{2})).
\]
\end{corollary}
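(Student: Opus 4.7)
The plan is to reduce the corollary directly to the Hanson–Wright bound in Lemma~\ref{lem:Hanson-Wright}. The key observation is that for any $v \in \R^{d^{2}}$ we can reshape $v$ into a matrix $V \in \R^{d \times d}$ via the standard vec/unvec isomorphism, under which $\|V\|_{F} = \|v\|_{2} = 1$ and $\langle v, X \otimes X\rangle = X^{\top} V X = \tr(X X^{\top} V)$. Moreover, since $X \sim \+N(0, I)$, we have $\-E[\langle v, Z\rangle] = \langle V, I\rangle = \tr(V)$. Thus the quantity appearing in the corollary is exactly a centered quadratic form in the Gaussian vectors $X_{i}$, which is the setting of Hanson–Wright.

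The only obstacle is that Hanson–Wright, as stated in Lemma~\ref{lem:Hanson-Wright}, requires the matrix $U$ to be positive semidefinite, whereas the reshaping $V$ of an arbitrary unit vector need not even be symmetric. I would first write $V = V_{s} + V_{a}$ where $V_{s} = (V + V^{\top})/2$ is symmetric and $V_{a} = (V - V^{\top})/2$ is antisymmetric. Since $X_{i}^{\top} V_{a} X_{i} = 0$ and $\tr(V_{a}) = 0$, we may replace $V$ by $V_{s}$ without changing either the random sum or its mean. I would then decompose $V_{s} = V_{+} - V_{-}$ into its positive and negative parts via its spectral decomposition, and note that $V_{+}, V_{-} \succeq 0$ with
\[
\|V_{+}\|_{F}^{2} + \|V_{-}\|_{F}^{2} = \|V_{s}\|_{F}^{2} \leq \|V\|_{F}^{2} = 1 \; .
\]

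With this decomposition in hand, the proof is a straightforward union bound. I would apply Lemma~\ref{lem:Hanson-Wright} separately to $U = V_{+}/\|V_{+}\|_{F}$ and $U = V_{-}/\|V_{-}\|_{F}$ (with threshold $T/(2\|V_{\pm}\|_{F})$ in each, after noting that by the triangle inequality at least one of the two deviations must exceed $T/2$). Since $\|V_{\pm}\|_{F} \leq 1$, for each $\pm$ the resulting exponent is
\[
C n \min\!\bigl(T/(2\|V_{\pm}\|_{F}),\, T^{2}/(4 \|V_{\pm}\|_{F}^{2})\bigr) \geq \tfrac{C}{4}\, n \min(T, T^{2}) \; ,
\]
and summing the two bounds and absorbing the factor of $4$ into a slightly smaller universal constant $C'$ yields the stated inequality $2\exp(-C' n \min(T, T^{2}))$. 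The main ``delicate'' step is simply the symmetric/antisymmetric and positive/negative decomposition; everything else is substitution, and no additional probabilistic input beyond Hanson–Wright is required.
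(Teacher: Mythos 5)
Your proposal is correct and follows essentially the same route as the paper, which simply says the corollary follows by applying Lemma~\ref{lem:Hanson-Wright} with $U$ taken to be the reshaping of $v$. The extra care you take with the symmetric/antisymmetric split and the $V_{+}-V_{-}$ decomposition addresses the fact that the lemma as stated requires $U \succeq 0$ with unit Frobenius norm, a point the paper's one-line proof glosses over, and your constant bookkeeping (absorbing the factor of $4$ into the universal constant, trivially when the exponent is small) is standard and sound.
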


\begin{proof}
This follows by letting the $U$ in the statement \cref{lem:Hanson-Wright} be the flattening of $U$.
\end{proof}
\begin{lemma}[Proposition 1.1 of \cite{1903.05964}]
\label{lem:Hanson-Wright-for-SE} Under the same setting as \cref{thm:goodness-params},
there exists a universal constant $C$ so that for all the $T>0$,
we have 
\[
\Pr\left[\left|\frac{1}{n}\sum_{i=1}^{n}\tr(Z_{i}Z_{i}^{\top}U)-\tr(U)\right|>T\right]\leq2\exp(-Cn\min(T^{2},\sqrt{T})).
\]
\end{lemma}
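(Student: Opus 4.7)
The plan is to view the left-hand side as a centered, normalized average of i.i.d.\ quartic Gaussian chaoses, and then combine a single-summand moment bound from Gaussian hypercontractivity with a Bernstein-type tail inequality for sub-Weibull sums.

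First I would expose the polynomial structure of a single summand. Writing $Z_i = \mathrm{vec}(X_i X_i^\top)$ and identifying $U \in \R^{d^2 \times d^2}$ with a linear map $\mathcal{U} : \R^{d \times d} \to \R^{d \times d}$ via $\mathrm{vec}(Y)^\top U \mathrm{vec}(Y) = \langle Y, \mathcal{U}(Y)\rangle_F$, one has
\[
\tr(Z_i Z_i^\top U) \;=\; Z_i^\top U Z_i \;=\; \langle X_i X_i^\top, \mathcal{U}(X_i X_i^\top)\rangle_F,
\]
which is a polynomial of total degree exactly $4$ in the standard Gaussian entries of $X_i$. After subtracting its expectation, each centered summand $Y_i := \tr(Z_i Z_i^\top U) - \mathbb{E}\tr(Z_1 Z_1^\top U)$ is a mean-zero Gaussian chaos of degree at most $4$.

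Next I would apply Gaussian hypercontractivity: any mean-zero degree-$k$ polynomial $f$ of a standard Gaussian vector satisfies $\|f\|_p \leq (p-1)^{k/2}\|f\|_2$ for $p \geq 2$. With $k = 4$ this yields $\|Y_i\|_p \leq (p-1)^2 \|Y_i\|_2$. A direct Gaussian Wick-calculus calculation of $\mathbb{E} Y_i^2$ bounds $\|Y_i\|_2$ by a universal multiple of $\|U\|_F^2 = 1$. Combining these estimates with Markov's inequality and optimizing the choice of $p$ shows that $Y_i$ is sub-Weibull with index $1/2$: $\Pr[|Y_i| > t] \leq 2\exp(-c\sqrt{t})$ for $t$ sufficiently large, with a Gaussian $\exp(-ct^2)$ tail in the small-$t$ regime.

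Finally I would aggregate the $n$ independent copies via a Bernstein-type inequality for unbounded sub-Weibull$_{1/2}$ summands (for instance, in the style of Adamczak or Kuchibhotla--Chakrabortty). The resulting two-regime bound
\[
\Pr\!\left[\left|\tfrac{1}{n}\sum_{i=1}^n Y_i\right| > T\right] \;\leq\; 2\exp\!\left(-Cn \min(T^2,\,\sqrt{T})\right)
\]
is exactly the claim, where the $T^2$ regime controls small deviations through the $L^2$ bound and the $\sqrt{T}$ regime controls large deviations through the heavier quartic tail. The main obstacle, and where the technical work really lies, is the $L^2$ estimate: one must verify by direct Gaussian moment computation that $\mathbb{E} Y_i^2 \leq C \|U\|_F^2$, rather than something worse scaling with $\|U\|$ times a dimension factor. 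This is what keeps the small-deviation constant dimension-free and correctly degenerates to the usual Hanson--Wright scaling when the degree of the underlying chaos is reduced from $4$ to $2$.
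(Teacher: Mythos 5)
First, note that the paper offers no proof of this statement at all: it is imported wholesale as Proposition~1.1 of \cite{1903.05964}. So your sketch must stand on its own, and its load-bearing step fails. The $L^2$ estimate that you yourself flag as the crux, $\mathrm{E}\,Y_i^2\le C\|U\|_F^2$ with a dimension-free constant, is false for the uncentered vectors $Z_i=X_i\otimes X_i$. Take $U=\tfrac1d\,\mathrm{vec}(I_d)\mathrm{vec}(I_d)^\top$, which is PSD with $\|U\|_F=1$; then $Z_i^\top U Z_i=\|X_i\|_2^4/d$, whose variance is $d(d+2)(8d+24)/d^2=\Theta(d)$. The culprit is precisely the dimension factor you hoped to rule out: since $\mathrm{E}[Z_i]=\mathrm{vec}(I_d)$ has norm $\sqrt d$, writing $X_iX_i^\top=I+W_i$ produces the cross term $2\,\mathrm{vec}(I)^\top U\,\mathrm{vec}(W_i)$, a degree-$2$ chaos whose variance is controlled only by $\|U\,\mathrm{vec}(I)\|_2\le\sqrt d\,\|U\|_F$, not by $\|U\|_F$. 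Consequently hypercontractivity plus a sub-Weibull Bernstein bound cannot give a dimension-free small-deviation regime for arbitrary unit-Frobenius PSD $U$: with this $U$ and $n\gg d$, at $T\asymp\sqrt{d/n}$ the left-hand side is a constant while the claimed bound is $e^{-\Omega(d)}$. (The same example shows $\mathrm{E}[Z^\top U Z]\ne\tr(U)$, so your re-centering at the mean already departs from the literal statement --- rightly, since that is how the lemma is used.) What is actually provable, and what the downstream applications (\cref{lem:Y-cov-conc}, \cref{thm:goodness-params}) need --- where $U=vv^\top$ comes from an $\eps$-net and the goodness conditions concern centered second moments --- is the statement for centered vectors: $\langle v,Z_i-\mathrm{E}Z_i\rangle=X_i^\top VX_i-\tr V$ is $O(\|V\|_F)$-subexponential by ordinary Hanson--Wright, so its square is $\psi_{1/2}$ with $O(1)$ norm, and only then does your aggregation step apply.

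A second gap sits in that aggregation step itself: Bernstein-type inequalities for $\psi_{1/2}$ summands (Adamczak, Kuchibhotla--Chakrabortty) give exponent $\min(nT^2,\sqrt{nT})$, not $n\min(T^2,\sqrt T)$. The $n\sqrt T$ branch you promise is unattainable by any argument: already for $d=1$, $U=1$, a single summand $X_1^4$ exceeds $nT$ with probability roughly $e^{-\sqrt{nT}/2}$, which dwarfs $e^{-Cn\sqrt T}$ once $n$ is large. Your last step silently upgrades $\sqrt{nT}$ to $n\sqrt T$; the discrepancy is inherited from the lemma's loose transcription of the cited proposition (applied to the block-diagonal quadratic form scaled by $1/n$, it yields $\exp(-c\min(nT^2,\sqrt{nT}))$), but a correct write-up has to target that form, or the centered rank-one version described above, rather than the display as printed.
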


Using a standard $\eps$-net argument (see e.g. {[}Ver10{]}), we get
the following concentration bounds for the empirical mean and covariance
of $D$.
\begin{lemma}
\label{lem:Y-mean-conc}Under the same setting as \cref{thm:goodness-params},
there exist universal constants $A,C>0$ so that for all $T>0$, we
have
\[
\Pr\left[\left\Vert \frac{1}{n}\sum_{i=1}^{n}Z_{i}-\mu_{Z}\right\Vert>T\right]\leq2\exp(Ad^{2}-Cn\min(T,T^{2})).
\]
\end{lemma}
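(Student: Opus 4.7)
The plan is the standard $\epsilon$-net discretization combined with the single-direction tail bound from \cref{cor:D-tail-bound}. Let $W = \frac{1}{n}\sum_{i=1}^{n} Z_i - \mu_Z \in \R^{d^2}$. Since the norm of a vector can be written as a supremum of linear functionals, $\|W\| = \sup_{v \in S^{d^2-1}} \langle v, W \rangle$, and a standard $\epsilon$-net argument (with, say, $\epsilon = 1/2$) gives a net $\mathcal{N} \subseteq S^{d^2-1}$ of size $|\mathcal{N}| \leq 5^{d^2}$ so that $\|W\| \leq 2 \sup_{v \in \mathcal{N}} \langle v, W \rangle$.

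First, I would apply \cref{cor:D-tail-bound} to a fixed unit vector $v \in \R^{d^2}$: this yields
\[
\Pr\left[\, \left| \langle v, W \rangle \right| > T/2 \,\right] \leq 2 \exp(-C' n \min(T, T^2)),
\]
for some universal constant $C' > 0$ (absorbing the factor of $2$ into the constant by using $T/2$ and adjusting $C'$). Next, I would take a union bound over all $v \in \mathcal{N}$, which introduces a multiplicative factor of $|\mathcal{N}| \leq 5^{d^2} = \exp(d^2 \log 5)$ in the failure probability. This gives
\[
\Pr\left[\, \sup_{v \in \mathcal{N}} \left|\langle v, W \rangle\right| > T/2 \,\right] \leq 2 \exp\bigl( A d^2 - C n \min(T, T^2)\bigr),
\]
with $A = \log 5$ and $C = C'$. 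Finally, the net comparison $\|W\| \leq 2 \sup_{v \in \mathcal{N}} \langle v, W \rangle$ promotes this into the desired tail bound on $\|W\|$.

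The main (and only genuine) content of the argument is really encapsulated in \cref{cor:D-tail-bound}, which is itself an immediate consequence of the Hanson-Wright inequality \cref{lem:Hanson-Wright}; the rest is boilerplate $\epsilon$-net machinery. The only subtlety to watch out for is that the linear functional in \cref{cor:D-tail-bound} is defined in terms of the flattening of a matrix $U$ with $\|U\|_F = 1$, and here $v$ is a unit vector in $\R^{d^2}$: these are the same object after identifying $\R^{d^2}$ with $d \times d$ matrices under the Frobenius inner product, so the application is direct. No further structural assumption on $v$ (e.g.\ symmetry or PSD-ness) is needed because $Z_i$ itself is the flattening of a symmetric PSD matrix, so only the symmetric part of (the unflattening of) $v$ contributes to $\langle v, Z_i \rangle$, and replacing $v$ with its symmetrization can only decrease its Frobenius norm.
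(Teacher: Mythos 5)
Your proposal is correct and follows exactly the route the paper intends: the paper dispatches \cref{lem:Y-mean-conc} with a one-line appeal to ``a standard $\eps$-net argument'' applied to the directional tail bound of \cref{cor:D-tail-bound}, and your write-up simply fills in that boilerplate (half-net of size $5^{d^2}$, union bound absorbing $\exp(O(d^2))$, constant adjustment), including a reasonable note on the flattening identification that the paper leaves implicit.
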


\begin{lemma}
\label{lem:Y-cov-conc}Under the same setting as \cref{thm:goodness-params},
there exists universal constants $A,C>0$ so that for all $T>0$,
we have 
\[
\Pr\left[\left\Vert \frac{1}{n}\sum_{i=1}^{n}Z_{i}Z_{i}^{\top}-\Sigma_{Z}\right\Vert>T\right]\leq2\exp(Ad^{2}-Cn\min(T^{2},\sqrt{T})).
\]
\end{lemma}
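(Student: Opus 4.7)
The plan is to bound the spectral norm by the standard variational characterization, reduce to a point-wise quadratic-form estimate via \cref{lem:Hanson-Wright-for-SE}, and then invoke an $\eps$-net argument on the unit sphere in $\R^{d^{2}}$. Concretely, since $\frac{1}{n}\sum_{i} Z_{i} Z_{i}^{\top} - \Sigma_{Z}$ is symmetric, its spectral norm equals $\sup_{v : \|v\|_{2}=1}\left|v^{\top}\!\left(\frac{1}{n}\sum_{i} Z_{i} Z_{i}^{\top} - \Sigma_{Z}\right) v\right|$, where the supremum is over unit vectors $v \in \R^{d^{2}}$. For a fixed such $v$, observe that $v^{\top} Z_{i} Z_{i}^{\top} v = \tr(Z_{i} Z_{i}^{\top} U_{v})$ where $U_{v} = v v^{\top} \succeq 0$ and $\|U_{v}\|_{F} = 1$, so the hypotheses of \cref{lem:Hanson-Wright-for-SE} apply and yield the point-wise deviation bound $2 \exp(-Cn \min(T^{2},\sqrt{T}))$ for the centered quadratic form $\frac{1}{n}\sum_{i}(v^{\top} Z_{i})^{2} - v^{\top}\Sigma_{Z} v$.

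Next I would fix a $1/4$-net $\mathcal{M}$ of the unit sphere in $\R^{d^{2}}$; a standard volume argument gives $|\mathcal{M}| \leq 9^{d^{2}}$. Taking a union bound over $\mathcal{M}$, with failure probability at most $2 \cdot 9^{d^{2}} \exp(-Cn \min(T^{2},\sqrt{T}))$, every $v \in \mathcal{M}$ simultaneously satisfies the point-wise bound. To transfer this from the net to the whole sphere I would invoke the standard fact that $\|M\| \leq 2 \sup_{v \in \mathcal{M}} |v^{\top} M v|$ for any symmetric matrix $M$ whenever $\mathcal{M}$ is a $1/4$-net of the sphere (proved via a triangle-inequality argument on the maximizer). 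Rescaling $T$ by a constant factor to absorb this factor of $2$ and writing $9^{d^{2}} = \exp(d^{2} \log 9)$ gives the claimed bound with $A = \log 9$ (or any sufficiently large universal constant) and an appropriately reduced universal $C$.

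The proof is routine and there is no significant obstacle; it is essentially the same template as for \cref{lem:Y-mean-conc}, with only the sub-Gaussian tail there replaced here by the sub-exponential $\min(T^{2},\sqrt{T})$ tail. The latter arises because $v^{\top} Z_{i}$ is already a quadratic form in the Gaussian $X_{i}$, so $(v^{\top} Z_{i})^{2}$ is a degree-four polynomial in Gaussians and hence only sub-exponential in the Bernstein sense used by~\cite{1903.05964}. This dependence is already captured in \cref{lem:Hanson-Wright-for-SE} and propagates through the $\eps$-net discretization unchanged, since the discretization cost is purely combinatorial and affects only the $Ad^{2}$ term in the exponent.
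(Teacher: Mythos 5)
Your proposal is correct and is essentially the argument the paper intends: the paper dispenses with this lemma by citing ``a standard $\eps$-net argument'' applied to the pointwise sub-exponential bound of \cref{lem:Hanson-Wright-for-SE}, which is exactly what you carry out (fixing $U_v = vv^\top$, a $1/4$-net of size $9^{d^2}$, a union bound, and the factor-$2$ transfer from net to sphere absorbed into the constants $A,C$). No further comparison is needed.
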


\begin{proof}[Proof of \cref{thm:goodness-params}]
The parameter of $\gamma_1$ directly follows from \cref{lem:Y-mean-conc} by solving the right hand side less than $\delta$ for $T$. For $\gamma_2$, we solve the right hand side of \cref{lem:Y-cov-conc} and note that by $\|\Sigma_Z-I\|\leq \tau$ and triangle inequality, we get the desired value. 
Now, we prove the bound on $\beta_2$. By applying \cref{lem:Y-cov-conc} for any fixed set $S \subset [N]$ of size $2\eps N$, we have 
\[
\Pr\left[\left\|\frac{1}{|S|}\sum_{i\in S}Z_iZ_i^\top -\Sigma_Z \right\|>T\right] \leq 2\exp(Ad^2-C\eps N \min(T^2,\sqrt{T})).
\]
Taking the union bound over all subsets of size $2\eps N$, we get 
\begin{align*}
    &\quad \Pr\left[\exists S: |S|=2\eps N \text{ and }\left\|\frac{1}{|S|}\sum_{i\in S}Z_iZ_i^\top -\Sigma_Z \right\| >T\right]\\ &\leq 2\exp(Ad^2 + \log\binom{2\eps N}{N}-C\eps N \min(T^2,\sqrt{T})) \\ 
    &\leq 2\exp(Ad^2+O(N\cdot \eps\log 1/\eps) + -C\eps N \min(T^2,\sqrt{T})).
\end{align*}
By our choice of parameters and an application of the triangle inequality, this is at most $O(\delta)$.
The proof for $\beta_1$ is similar, and so we omit it here.
\end{proof}

\section{Proof of Lemma~\ref{thm:goodness-rotational-invariant}}\label{sec:ngoodness-rotational-invariant}
\noindent
Before we prove this lemma, we require the following pair of technical lemmata.
The first is a standard fact about the covariance of $X \otimes X$ for $X$ Gaussian.
\begin{fact}[see e.g.~\cite{DBLP:conf/colt/0002D0W19}]
\label{fact:cov-of-cov}
Let $X \sim \+N(0, \Sigma)$.
Then the covariance of $X \otimes X$ is $2 \Sigma \otimes \Sigma$.
\end{fact}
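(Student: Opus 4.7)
The plan is to read off the covariance entrywise via Isserlis' (Wick's) theorem for centered Gaussians and then identify the resulting $d^2\times d^2$ matrix with $2\,\Sigma\otimes\Sigma$. Writing $Y=X\otimes X\in\R^{d^2}$ with $Y_{(i,j)}=X_iX_j$, one has $\-E[Y_{(i,j)}]=\Sigma_{ij}$, so the outer product of the mean has $((i,j),(k,\ell))$-entry $\Sigma_{ij}\Sigma_{k\ell}$. It therefore suffices to compute $\-E[X_iX_jX_kX_\ell]$ and subtract.

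By Isserlis' theorem, summing over the three pairings of $\{i,j,k,\ell\}$ yields
\[
\-E[X_iX_jX_kX_\ell] \;=\; \Sigma_{ij}\Sigma_{k\ell}+\Sigma_{ik}\Sigma_{j\ell}+\Sigma_{i\ell}\Sigma_{jk},
\]
so the covariance of $Y$ has $((i,j),(k,\ell))$-entry $\Sigma_{ik}\Sigma_{j\ell}+\Sigma_{i\ell}\Sigma_{jk}$. The first summand is exactly $(\Sigma\otimes\Sigma)_{(i,j),(k,\ell)}$; the second summand equals $((\Sigma\otimes\Sigma)K)_{(i,j),(k,\ell)}$, where $K$ is the commutation matrix on $\R^{d^2}$ defined by $K(a\otimes b)=b\otimes a$ (equivalently, $K\,\mathrm{vec}(A)=\mathrm{vec}(A^{\top})$). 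Hence as $d^2\times d^2$ matrices, $\mathrm{Cov}(X\otimes X)=(\Sigma\otimes\Sigma)(I+K)$.

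To match the stated form $2\,\Sigma\otimes\Sigma$, I use that $X\otimes X$ always lies in the symmetric subspace $\mathrm{Sym}^2(\R^d)\subset\R^{d^2}$, on which $K$ acts as the identity; equivalently $I+K=2\,\Pi_{\mathrm{sym}}$, and $\mathrm{Cov}(X\otimes X)$ is supported on $\mathrm{Sym}^2(\R^d)$ (it annihilates the antisymmetric complement, which is orthogonal to the affine hull of the support of $X\otimes X$). Since $\Sigma\otimes\Sigma$ commutes with $K$ and preserves $\mathrm{Sym}^2(\R^d)$, the operator $(\Sigma\otimes\Sigma)(I+K)$ agrees with $2\,\Sigma\otimes\Sigma$ on the natural ambient space of $X\otimes X$, which is the sense in which the stated equality holds. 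I do not expect any real obstacle: the proof is three Wick contractions plus this notational identification, and as a sanity check one can reduce to the isotropic case via $X=\Sigma^{1/2}Z$ with $Z\sim\+N(\boldsymbol{0},I)$, using $X\otimes X=(\Sigma^{1/2}\otimes\Sigma^{1/2})(Z\otimes Z)$ and the conjugation formula for covariance.
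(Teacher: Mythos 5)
Your computation is correct, and in fact there is nothing in the paper to compare it against: the statement is given as a citation-only Fact (to~\cite{DBLP:conf/colt/0002D0W19}) with no proof, and the standard argument in that literature is exactly the one you give (Wick/Isserlis contractions, or equivalently reduction to the isotropic case via $X=\Sigma^{1/2}Z$). The value of your write-up is that it makes explicit a subtlety the paper glosses over: as a literal $d^2\times d^2$ matrix identity the Fact is false, since the three pairings give $\mathrm{Cov}(X\otimes X)=(\Sigma\otimes\Sigma)(I+K)$ with $K$ the commutation matrix, a matrix of rank $d(d+1)/2$ that annihilates the antisymmetric subspace, whereas $2\,\Sigma\otimes\Sigma$ is full rank for $\Sigma\succ0$. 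Your resolution --- that $X\otimes X$ and all the centered vectors $Z_i-\mu$ live in $\mathrm{Sym}^2(\R^d)$, on which $I+K$ acts as $2I$ and which $\Sigma\otimes\Sigma$ preserves, so the equality holds as an identity of quadratic forms on the symmetric subspace --- is precisely the convention under which the Fact is used downstream. For instance, in the proof of \cref{lem:covariance-relation} the assertion that the eigenvalues of $\Sigma_Z$ are $2\lambda_i\lambda_j$, and hence the bound $\|\Sigma_Z-2I\|\leq 6\zeta$, are only true after restricting to symmetric tensors (on all of $\R^{d^2}$, $0$ is an eigenvalue of $\Sigma_Z$ with multiplicity $d(d-1)/2$, so $\|\Sigma_Z-2I\|\geq 2$); this is harmless because the filtering algorithm only ever evaluates these quadratic forms on symmetric vectors, but your proof is the careful statement of why. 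The only stylistic remark is that your final paragraph hedges (``I do not expect any real obstacle'') where it could simply assert the conclusion, since the argument as given is already complete.
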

\noindent
This implies:
\begin{lemma}\label{lem:covariance-relation}
	Let $X\sim \+N(\boldsymbol{0},\Sigma)$ and $Z=X\otimes X$. Let $\Sigma_Z\in \R^{d^2\times d^2}$ be the covariance matrix of $Z$. We have: \\
	1. If $\Sigma\preceq I$, then $\Sigma_Z\preceq 2I$.\\
	2. If $\|\Sigma-I\| \leq \zeta$ for $0\leq \zeta <1$, then \(\|\Sigma_Z- 2 I\| \leq 6\zeta\).
\end{lemma}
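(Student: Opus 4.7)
The plan is to reduce both parts to the explicit identity $\Sigma_Z = 2\,\Sigma \otimes \Sigma$ provided by Fact~\ref{fact:cov-of-cov}, and then do a short eigenvalue/norm computation for the Kronecker product.

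For part 1, I would argue as follows. Since $\Sigma$ is PSD, its eigenvalues are nonnegative, and $\Sigma \preceq I$ is equivalent to saying all eigenvalues of $\Sigma$ lie in $[0,1]$. The eigenvalues of $\Sigma \otimes \Sigma$ are exactly the products $\lambda_i(\Sigma)\lambda_j(\Sigma)$, which therefore also lie in $[0,1]$. Hence $\Sigma \otimes \Sigma \preceq I_{d^2}$, and multiplying through by $2$ gives $\Sigma_Z \preceq 2I$.

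For part 2, the key step is the telescoping identity
\[
\Sigma \otimes \Sigma - I = \Sigma \otimes (\Sigma - I) + (\Sigma - I) \otimes I \; ,
\]
which is the natural ``product rule'' for Kronecker products. Applying the triangle inequality together with the submultiplicative property $\|A \otimes B\| = \|A\|\|B\|$ gives
\[
\|\Sigma \otimes \Sigma - I\| \leq \|\Sigma\|\,\|\Sigma - I\| + \|\Sigma - I\| \leq (1+\zeta)\zeta + \zeta = \zeta(2+\zeta) \leq 3\zeta \; ,
\]
using $\|\Sigma\| \leq 1 + \zeta$ and $\zeta < 1$. Multiplying by the factor of $2$ from $\Sigma_Z = 2 \Sigma \otimes \Sigma$ yields $\|\Sigma_Z - 2I\| \leq 6\zeta$, as desired.

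Neither part poses a real obstacle, since everything reduces to well-known facts about Kronecker products once Fact~\ref{fact:cov-of-cov} is invoked; the only thing to be slightly careful about is writing the ``product rule'' decomposition in a form where each summand has a cleanly bounded spectral norm. One could alternatively give an entirely eigenvalue-based proof for part 2 by noting that the eigenvalues of $\Sigma \otimes \Sigma - I$ are $\lambda_i \lambda_j - 1$ with $\lambda_i, \lambda_j \in [1-\zeta,1+\zeta]$, so $|\lambda_i \lambda_j - 1| \leq \max\{(1+\zeta)^2 - 1, 1 - (1-\zeta)^2\} \leq 3\zeta$; I would pick whichever presentation reads more cleanly in the write-up.
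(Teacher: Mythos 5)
Your proof is correct. For part 1 you argue exactly as the paper does (via Fact~\ref{fact:cov-of-cov} and the fact that the eigenvalues of $\Sigma\otimes\Sigma$ are the products $\lambda_i\lambda_j\in[0,1]$); for part 2 your main route is slightly different in presentation: you work at the matrix level with the telescoping identity $\Sigma\otimes\Sigma-I=\Sigma\otimes(\Sigma-I)+(\Sigma-I)\otimes I$ plus $\|A\otimes B\|=\|A\|\,\|B\|$ and the triangle inequality, whereas the paper computes directly with the eigenvalues of $\Sigma_Z$, bounding $|2\lambda_i\lambda_j-2|$ via $|\lambda_i-1|\le\zeta$. The two arguments are equivalent in content (your bound $\zeta(2+\zeta)\le 3\zeta$ is the same quantity the paper bounds spectrally), and the eigenvalue alternative you sketch at the end is precisely the paper's proof; the matrix-identity version has the minor advantage of not needing $\zeta<1$ beyond the final simplification and reads a bit more mechanically, while the eigenvalue version is arguably more transparent since $\Sigma_Z=2\Sigma\otimes\Sigma$ and $I$ are simultaneously diagonalizable. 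Either write-up is fine.
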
 
\begin{proof}
The first claim follows directly from Fact~\ref{fact:cov-of-cov}, as  $\| \Sigma_Z \| = 2 \| \Sigma \| \leq 2$.
To prove the second statement, note that if $\lambda_1, \ldots, \lambda_d$ are the eigenvalues of $\Sigma$, the assumption implies that $|\lambda_i - 1| \leq \zeta$ for all $i$, and also that  $\lambda_i < 2$ for all $i$.
But all the eigenvalues of $\Sigma_Z$ are given by $2 \lambda_i \lambda_j$ for $i, j \in [d]$, and $|2 \lambda_i \lambda_j - 2| \leq 2 (\zeta^2 + |\lambda_i| \zeta + |\lambda_j| \zeta) \leq 6 \zeta $. 
This proves the claim.
\end{proof}

\begin{proof}[Proof of Lemma~\ref{thm:goodness-rotational-invariant}]
The first claim follows because $(\gamma_1, \gamma_2)$-goodness is affine invariant.
We now turn our attention to the second claim.
First, we show that the $\gamma_1$ and $\beta_1$ parameters are changed by at most a constant multiplicative factor. 
Since $X_i=\Sigma^{-1/2}\ov{X}_i$, then we have 
\[
    Z_i=(\Sigma^{1/2}\otimes\Sigma^{1/2})(\ov{X}_i\otimes \ov{X}_i)=(\Sigma^{1/2}\otimes\Sigma^{1/2})\ov{Z}_i.
\]
Then, we have 
\[
    \gamma_1(Z_i) = \left\|\mu(Z_i)-\mu_{Z_i}\right\|_2=\left\|(\Sigma^{1/2}\otimes\Sigma^{1/2}) (\mu(\ov Z_i)-\mu_{\ov Z_i})\right\|_2\leq  O(\eps \log 1 / \eps),
\]
where the last step follows by $\|A\otimes B\|\leq \|A\|\|B\|$ and $\|\Sigma\|\leq 2$. Similarly, we have that the $\beta_1$ parameter increases by at most a constant multiplicative factor.

Now, we consider the second moment parameters $\gamma_2$ and $\beta_2$. Note that we have 
$    Z_iZ_i^\top =(\Sigma\otimes \Sigma)^{1/2}\ov{Z}_i\ov{Z}_i^\top(\Sigma\otimes \Sigma)^{1/2}.$
By the goodness of $\ov Z_i$, then we have 
\[
\left\|\frac{1}{|S|}\sum_{i\in S}(Z_{i}-\mu(S))(Z_{i}-\mu(S))^{\top}-2 (\Sigma\otimes \Sigma)\right\|\leq \|\Sigma\|^2 \cdot  O(\eps \sqrt{\log 1 / \eps})= O(\eps \sqrt{\log 1 / \eps}) \; .
\]
Then, by \cref{lem:covariance-relation}, we have $\|\Sigma\otimes \Sigma - 2I\| \leq 6\zeta$, so by, we have 
\[
\left\|\frac{1}{|S|}\sum_{i\in S}(Z_{i}-\mu(S))(Z_{i}-\mu(S))^{\top}-2I\right\|\leq O(\eps \sqrt{\log 1 / \eps}) + 6\zeta \; ,
\]
as claimed. 
The bound on the $\beta_2$ parameter is identical, and omitted.
\end{proof}
\section{Proof of Theorem~\ref{lem:phase-1} and Theorem~\ref{lem:phase-2}}
\subsection{Approximate Score Oracles for Tensor Inputs}
\noindent
A key algorithmic ingredient to implementing both \cref{lem:phase-1} and \cref{lem:phase-2} will be the following. 
We will be given access to a set of points $X_1, \ldots, X_N$, and we will need approximate augmented score oracles for the tensored set of points $X_1 \otimes X_1, \ldots, X_N \otimes X_N$.
Note that we cannot even afford to write down the tensored versions of the $X_i$ in the desired runtime.
Despite this, we show that we can construct these approximate augmented score oracles very efficiently:
\begin{theorem}\label{thm:fast-oracle}
Let $\delta > 0$.
Let $X_1, \ldots, X_N \in \R^d$, and let $Z_i = X_i \otimes X_i$ for all $i = 1, \ldots, N$.
Let $t > 0$, and let $w_1, \ldots, w_t \in \Gamma_N$. Let $\alpha$ be such that $\alpha$ satisfies \cref{eq:alpha}.
Then, there is an algorithm $\textsc{ApproximateScore}$ which takes as input  $\alpha, \delta, \{ X_1, \ldots, X_n \},$ and $w_1, \ldots, w_t$, which runs in time $\widetilde{O} (t^2 \cdot T(N, d) \log 1 / \delta)$, and with probability $1 - \delta$, is an approximate augmented score oracle for $\{Z_1, \ldots, Z_N\}$ with weights $w_1, \ldots, w_t$.
\end{theorem}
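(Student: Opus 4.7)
The plan is to exploit the tensor structure $Z_i = X_i \otimes X_i$ throughout, so that every $d^2$-dimensional vector we manipulate can be reshaped as a $d \times d$ matrix on which operations reduce to fast rectangular matrix products. The cornerstone is that under the canonical $\operatorname{vec}/\operatorname{mat}$ bijection $v \leftrightarrow V$, one has $(Z_j Z_j^\top) v \leftrightarrow (X_j^\top V X_j) \cdot X_j X_j^\top$. First I would turn this into a primitive that returns $M(w_s) v$ in time $T(N,d)$: pre-compute the matrix form $\bar U_s := \operatorname{mat}(\mu(w_s))$ once in $T(N,d)$ time; evaluate the $N$ quadratic forms $x_j = X_j^\top V X_j$ via the $d \times N$ product $VY$ (with $Y$ the data matrix) and reading off diagonals of $Y^\top(VY)$; assemble $Y \operatorname{diag}((w_s)_j x_j) Y^\top$ as another rectangular product; and finally subtract the rank-one correction $\langle \bar U_s, V\rangle \bar U_s$. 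Composing over $s$ gives a single application of $A = \alpha \sum_{s<t} M(w_s)$ at cost $O(t \cdot T(N,d))$.

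Second, I would approximate $\exp(\cdot)$ on $[0, O(t)]$ by its degree-$K$ Chebyshev truncation with $K = O(t + \log(N/\delta))$ --- valid since $\|A/2\| \leq O(t)$ by~\eqref{eq:alpha} --- yielding an operator $B$ that approximates $U_t^{1/2} = \exp(A/2)/\sqrt{\tr \exp(A)}$ and is applicable to a single vector in $\widetilde{O}(t^2 T(N,d))$ time. The normalization $\tr(\exp A)$ is estimated to a $(1 \pm 0.01)$-factor via the Hutchinson estimator $g^\top \exp(A) g$ (with standard Gaussian $g \in \R^{d^2}$) using $O(\log(1/\delta))$ samples; standard multiplicative concentration for Hutchinson on PSD matrices applies. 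To produce all $\tilde{\tau}_{t,i}$ simultaneously, I would draw a Johnson--Lindenstrauss matrix $G \in \R^{r \times d^2}$ with $r = O(\log(N/\delta))$, apply the $B$-matvec primitive to each of the $r$ rows $g_j$ of $G$, and reshape the resulting vectors as matrices $V_j \in \R^{d \times d}$. For each $j$, all $N$ values $X_i^\top V_j X_i$ are computable in bulk in $T(N,d)$ time, and subtracting the precomputed shift $g_j^\top B \mu(w_t)$ yields the squared norms $\|GB(Z_i - \mu(w_t))\|^2$, which are $(1 \pm 0.05)$-multiplicative approximations to $\tau_{t,i}$ for all $i$ at once, within the target runtime.

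The two remaining outputs are $\tilde\lambda$ and $\tilde q_t$. For $\tilde\lambda$ I would run $O(\log(d/\delta))$ steps of power iteration (or Lanczos) on the symmetric operator $M(w_0) - I$, reusing the matvec primitive. For the augmented value $\tilde q_t$, I expect the subtlest step: merely averaging $\tilde\tau_{t,i}$ only approximates $\langle M(w_t), U_t \rangle = q_t + 1$ multiplicatively, with error $\approx 0.05(q_t+1)$, which in general is \emph{not} bounded by the required $0.1 q_t + 0.05\|M(w_t) - I\|$ when $\|M(w_t) - I\|$ is small. Instead I would write $q_t = \tr\bigl(U_t^{1/2}(M(w_t) - I) U_t^{1/2}\bigr)$ and run a \emph{separate} Hutchinson estimator based on $h^\top (M(w_t) - I) h$ with $h = Bg$. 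The Frobenius bound $\|U_t^{1/2}(M(w_t) - I) U_t^{1/2}\|_F^2 \leq \|U_t^{1/2}(M(w_t) - I) U_t^{1/2}\|_2 \cdot \|U_t^{1/2}(M(w_t) - I) U_t^{1/2}\|_* \leq \|M(w_t) - I\|^2 \cdot \tr(U_t)^2 = \|M(w_t) - I\|^2$ gives variance $O(\|M(w_t) - I\|^2)$, so $O(\log(1/\delta))$ samples with median-of-means boosting yield the required additive error $0.05 \|M(w_t) - I\|$.

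The main obstacle I anticipate is keeping these four layers of approximation --- polynomial truncation, JL, Hutchinson trace estimation, and power iteration --- mutually consistent: in particular, controlling the polynomial-truncation error on $\exp(A/2)$ so that it is absorbed into the JL and Hutchinson tolerances without blowing the degree $K$ past $\widetilde{O}(t)$, and pinning down the Frobenius bound that makes the Hutchinson variance for $\tilde q_t$ scale with $\|M(w_t) - I\|^2$ rather than $\|M(w_t)\|^2$. Once these are settled, each randomized component fails with probability at most $\delta/\poly(N, d)$, and a final union bound over the $O(\log(1/\delta))$ boosting repetitions, the $O(r)$ JL rows, the trace-estimation calls, and the power-iteration outputs yields the claimed $1 - \delta$ success probability within the stated $\widetilde{O}(t^2 T(N,d) \log(1/\delta))$ runtime.
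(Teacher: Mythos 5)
Your proposal is correct and its skeleton coincides with the paper's: the paper's \textsc{ApproximateScore} (\cref{alg:approx-oracle}) likewise exploits $Zv=(Y\diag(v)Y^{\top})^{\flat}$ to get an $O(T(N,d))$ matvec with $M(w_i)$ (\cref{lem:approx-covariance-multiplication}), approximates the exponential by a degree-$O(t)$ polynomial $P_\ell$ applied to $A=\tfrac{\alpha}{2}\sum_{i<t}M(w_i)$, hits it with a Johnson--Lindenstrauss sketch $J$ so that all $N$ scores come out of one rectangular product $S_{r,\ell}(Z-\mu(w_t)\mathbf{1}^{\top})$, and gets $\tilde\lambda$ by power iteration through the same matvec. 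The genuine divergence is in the two trace-type outputs. For the normalization the paper uses the deterministic identity $\tr(S_{r,\ell}S_{r,\ell}^{\top})=\sum_i\|S_{r,\ell}e_i\|_2^2$ and JL norm preservation on the basis vectors, where you use Hutchinson; these are interchangeable. More substantively, for $\tilde q_t$ the paper does exactly the thing you flagged as dangerous---it sums the (weighted) sketched scores---but its correctness argument (\cref{lem:score-oracle-correctness}) does not rely on the per-$i$ multiplicative guarantees: because the same $J$ appears in both the score sum and the trace term, the error is the deviation of the single quadratic form $\langle M(w_t)-I,\,P_\ell(A)J^{\top}JP_\ell(A)\rangle$ from its mean, which concentrates at scale $\|P_\ell(A)(M(w_t)-I)P_\ell(A)\|_F\leq\|M(w_t)-I\|\tr\exp(2A)$, exactly the $\|M(w_t)-I\|$-proportional error demanded by~\cref{eq:q-def}. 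Your alternative---a separate Hutchinson estimate of $\tr\bigl(U_t^{1/2}(M(w_t)-I)U_t^{1/2}\bigr)$ with the variance bound $\|U_t^{1/2}(M(w_t)-I)U_t^{1/2}\|_F^2\leq\|M(w_t)-I\|^2$---is valid and rests on the same Frobenius-norm fact that makes the paper's $\xi$-bound true; it buys a more transparent justification of the $\|M(w_t)-I\|$ scaling (the paper's stated conditioning events alone do not obviously imply it) at the price of a handful of extra polynomial matvecs, which is absorbed in the $\widetilde{O}(t^2\,T(N,d)\log 1/\delta)$ budget either way.
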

\noindent
We defer the proof to Section~\ref{sec:oracle-impl}.

\subsection{Getting \texorpdfstring{$O(\sqrt{\protect\eps})$}{O(sqrt(epsilon))}~ error\label{subsec:sqrt-eps-error}}
In this section, we describe and analyze the routine \textsc{FirstPhase}, which achieves a coarse estimate of the true covariance.
We restate the theorem below for convenience.
\firstPhase*
\noindent
We give the pseudocode for \textsc{FirstPhase} in Algorithm~\ref{alg:reduction1}.
Our algorithm is simple: we simply run a naive pruning step on the tensored inputs, then apply Lemma~\ref{lem:robust-mean-1} to the remaining tensored inputs.

\begin{algorithm}
	\caption{Robust Covariance Estimation With Bounded Covariance\label{alg:reduction1}}
	
	\begin{algorithmic}[1]
	\Procedure{FirstPhase}{$S=\{X_1,X_2,\ldots, X_n\},\eps,\Sigma_t$} \Comment{\cref{lem:phase-1}}
	\For{$i=1,\ldots, N$}
		\State $Y_i \leftarrow \Sigma^{-1/2}_tX_i$
		\State $Z_i  \leftarrow Y_i\otimes Y_i$
	\EndFor
	\State $S'\leftarrow \textsc{NaivePrune}(Z_1,\ldots,Z_N,4d^2N^2,{1}/{(dN\log\kappa)}) $ \Comment{\cref{lem:naive-prune}}
	\State Let $\cOaug = \textsc{ApproximateScore}$ with $\delta = 1 / \poly (d, \log \kappa)$
	\State Let $\widetilde{\Sigma}$ be the estimate of $S'$ computed by \cref{lem:robust-mean-1} with score oracle $\cOaug$.
	\State $\widehat{\Sigma} \leftarrow \Sigma_t^{1/2}\widetilde{\Sigma}\Sigma_t^{1/2}$.
	\State $\Sigma_{t+1} \leftarrow \widehat{\Sigma}+O(\sqrt{\eps})\Sigma_t$
	\State \Return $\wh{\Sigma},{\Sigma}_{t+1}$
	\EndProcedure
	\end{algorithmic}
\end{algorithm}



\begin{proof}[Proof of \cref{lem:phase-1}]
By Assumption~\ref{as:regularity} and Lemma~\ref{thm:goodness-rotational-invariant}, the points $Z_1, \ldots, Z_N$ are $O(\eps,O(\sqrt{\eps}),O(1))$-corrupted good with respect to $\Sigma_t^{-1/2} \Sigma \Sigma_t^{-1/2}$.
Let $\sigma_t^2 = \| \Sigma_t^{-1/2} \Sigma \Sigma_t^{-1/2} \|$.
Let $S'$ be the output of applying naive pruning to the $Z_i$.
Observe that for all uncorrupted $i \in [N]$, we have that $\| Z_i \| \leq O(\sigma_t^2 d \log N)$.
Thus, by the guarantee of \cref{lem:naive-prune}, we have $\|Z_i\|_2\leq O(\sigma_t^2 d \log N)$ for all $Z_i\in S'$, and $S'$ contains all remaining uncorrupted points in $[N]$.

Now, we have $S'$ satisfy all the conditions of \cref{lem:robust-mean-1} with $\sigma^2 = \sigma_t^2$ and $R\leq O(d \sigma^2 \log N)$. 
Let $\widetilde{\Sigma}$ be the estimation of $Z_i$ computed by \cref{lem:robust-mean-1} reshaped into $d\times d$ matrix. 
Condition on the event that the output of \textsc{ApproximateScore} is a valid output of an approximate augmented score oracle in every iteration it is called in, which by our choice of parameters, occurs with probability $1 - 1 / \poly (d, \log \kappa)$.
In this event, by the guarantee on $\wt \Sigma$, we have 
\[
\|\wt \Sigma - \Sigma_Y\|_F = \|\wt \Sigma - \Sigma_t^{-1/2}\Sigma\Sigma_t^{-1/2}\|_F \leq O(\sqrt{\eps}).
\]
This immediately implies $\wh \Sigma -O(\sqrt\eps)\Sigma_t \preceq \Sigma \preceq \wh\Sigma+O(\sqrt{\eps})\Sigma_t$. 
Moreover, using the fact that $\|AB\|_F\leq \|A\|\|B\|_F$, we have 
\[
\|\wh{\Sigma}-\Sigma\|_F = \|\Sigma^{1/2}_t\wt \Sigma \Sigma^{-1/2}_t - \Sigma\|_F\leq O(\sqrt{\eps})\|\Sigma_t\|,
\]
This proves the correctness guarantee.
The runtime guarantee follows by combining Lemma~\ref{lem:naive-prune}, Lemma~\ref{lem:robust-mean-1}, and Theorem~\ref{thm:fast-oracle}.
\end{proof}

\subsection{Getting \texorpdfstring{$O(\protect\eps\log1/\protect\eps)$}{{O(epsilon*log 1/epsilon)}}~ error\label{subsec:eps-log-error}}

We now turn our attention to \textsc{SecondPhase}, which allows us to refine a coarse estimate down to error $O(\eps \log 1 / \eps)$.
We restate the theorem below for convenience.
\secondPhase*

\begin{algorithm}
	\caption{Robust Covariance Estimation With Approximately Known Covariance\label{alg:reduction2}}
	
	\begin{algorithmic}[1]
	\Procedure{SecondPhase}{$S=\{X_1,X_2,\ldots, X_n\},\eps,\Sigma_t,\zeta_t$} \Comment{\cref{lem:phase-2}}
	\For{$i=1,\ldots, N$}
		\State $Y_i \leftarrow \Sigma^{-1/2}_tX_i$
		\State $Z_i  \leftarrow Y_i\otimes Y_i$
	\EndFor
	\State $S'\leftarrow \textsc{NaivePrune}(Z_1,\ldots,Z_N, O(d \log N),1/dN) $ \Comment{\cref{lem:naive-prune}}
	\State Let $\cOaug = \textsc{ApproximateScore}$ with $\delta = 1 / \poly (d, N)$
	\State Let $\widetilde{\Sigma}$ be the estimation of $S'$ computed by \cref{lem:robust-mean-2}.
	\State $\widehat{\Sigma} \leftarrow \Sigma_t^{1/2}\widetilde{\Sigma}\Sigma_t^{1/2}$.
	\State \Return $\widehat{\Sigma}$
	
	\EndProcedure
	\end{algorithmic}
	\end{algorithm}

\begin{proof}[Proof of \cref{lem:phase-2}]
The proof is very similar to that of \cref{lem:phase-1}.
	By the definition of $Y_i$ and condition on $\Sigma_t$, we know that 
	\[
		\|\Sigma_Y-I\| = \|\Sigma_t^{-1/2}\Sigma\Sigma_t^{-1/2} - I\|\leq \zeta_t.	
	\]
	Since $\|\Sigma_Y-I\|\leq \zeta$, by \cref{thm:goodness-rotational-invariant}, the set $\{Z_1,\ldots, Z_N\}$ is 
	$O(\eps,O(\eps\sqrt{\log1/\eps}),O(\eps\sqrt{\log1/\eps}+\zeta),O(\log1/\eps),O(\log^2 1/\eps+\zeta)$-corrupted good with respect to $D_I$.
	
	Therefore, $S'$ satisfies all condition of \cref{lem:robust-mean-2}.
	Condition on the event that the output of \textsc{ApproximateScore} is a valid output of an approximate augmented score oracle in every iteration it is called.
	By \cref{lem:robust-mean-2} and our choice of parameters, this occurs with probability $1 - 1 / \poly (d, N)$.
	Then, \cref{lem:robust-mean-2} guarantees that we can find a $\widetilde{\Sigma}$ where \[
		\|\wt{\Sigma}^\flat-\-E[Z_i]\|_2=\|\wt{\Sigma}-\Sigma_Y\|_F\leq O(\sqrt{\zeta_t\eps}+\eps\log 1/\eps).
	\] 
	Then, we have 
	\[
		\|\Sigma^{-1/2}\widehat{\Sigma}\Sigma^{-1/2}-I\|_{F}=\|\Sigma^{-1/2}\Sigma_{t}^{1/2}\wt{\Sigma}\Sigma^{1/2}_t\Sigma^{-1/2}-I\|_{F}\leq O(\sqrt{\zeta_t\eps}+\eps\log 1/\eps),
	\]
	where the last inequality follows by $\|AB\|_F\leq \|A\|\|B\|_F$ and $\|\Sigma-\Sigma_t\|\leq \zeta_t = O(1)$.
	This proves the correctness guarantee.
Finally, the runtime guarantee follows by combining Lemma~\ref{lem:naive-prune}, Lemma~\ref{lem:robust-mean-2}, and Theorem~\ref{thm:fast-oracle}.
\end{proof}

\section{Fast Implementations for Tensor Inputs}\label{sec:oracle-impl}
In this section we describe how to implement the approximate score oracles described in \cref{sec:score-oracle-defs} fast when the input is given as tensor products.
Recall that an approximate score oracle takes as input a set of points $S$ of size $n$, and a sequence of weights $w_0, \ldots, w_{t - 1}, w_t$, and computes $\tilde{\lambda}$ so that $\tilde{\lambda} \approx_{0.1} \| M(w_0) - I \| $, and $\tilde{\tau}_t \in \R^n$ where $\tilde{\tau}_{t, i} \approx_{0.1} \tau_{t, i}$ for all $i$ and $\tilde{q_t}$ where
\begin{align}
\tau_{t, i} &= \Paren{Z_i - \mu(w_t)}^\top U_t \Paren{Z_i - \mu(w_t)} \; , \\
    |\tilde{q_t}-q_t|&\leq 0.1 q_t +0.05\|M(w_t)-I\|, \text{ where } q_t=\langle M(w_t)-I,U_t\rangle \; , 
\end{align}
and
\[
U_t = \exp \Paren{c \Id + \frac{1}{1.1 \alpha} \sum_{i = 0}^{t - 1} M(w_i)} = \frac{\exp \Paren{\alpha \sum_{i = 0}^{t - 1} M(w_i)}}{\tr \exp \Paren{\alpha \sum_{i = 0}^{t - 1} M(w_i)}} \; ,
\]
where $\alpha > 0$ is a parameter, and $c$ is chosen so that $\tr (U_t) = 1$.


\subsection{Several Ingredients}

To implement the approximate score oracle efficient, we need several
ingredients. The first one is Taylor series approximation for the
matrix exponential:
\begin{lemma}[\cite{AroraK16}]
\label{lem:taylor-k-approx}Let $0<\eps<1$, let $X$ be a PSD matrix where $\|A\|<M$, there is a degree-$\ell$ polynomial $P_\ell$, where $\ell =O(\max(M, \log 1/\eps))$, such that
\[
(1-\eps)\exp(X)\preceq P_{\ell}(x)\preceq(1+\eps)\exp(X).
\]
\end{lemma}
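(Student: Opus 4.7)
The plan is to reduce the matrix statement to a scalar statement about truncating the Taylor series of $e^x$, then use standard tail estimates for the exponential series. Define $P_\ell(x) = \sum_{k=0}^\ell x^k / k!$ to be the degree-$\ell$ Taylor polynomial of $\exp$ about $0$. Since $X$ is PSD (hence symmetric), we may write $X = Q \Lambda Q^\top$ with $\Lambda = \mathrm{diag}(\lambda_1, \ldots, \lambda_d)$ and $0 \leq \lambda_i \leq \| X \| < M$. Both $\exp(X)$ and $P_\ell(X)$ are functions applied coordinatewise to $\Lambda$ in this basis, so the desired PSD sandwich $(1-\eps)\exp(X) \preceq P_\ell(X) \preceq (1+\eps)\exp(X)$ follows from the one-dimensional inequality $(1-\eps) e^\lambda \leq P_\ell(\lambda) \leq (1+\eps) e^\lambda$ holding simultaneously for every $\lambda \in [0, M]$.

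Equivalently, the scalar goal reduces to establishing $|e^\lambda - P_\ell(\lambda)| \leq \eps\, e^\lambda$ uniformly on $[0,M]$. For this I would use the Lagrange form of the Taylor remainder: for each $\lambda \in [0,M]$ there exists $\xi \in [0, \lambda]$ with
\[
e^\lambda - P_\ell(\lambda) \;=\; \frac{e^\xi \lambda^{\ell+1}}{(\ell+1)!} \;\leq\; e^\lambda \cdot \frac{M^{\ell+1}}{(\ell+1)!}\,.
\]
Thus it suffices to choose $\ell$ such that $M^{\ell+1}/(\ell+1)! \leq \eps$. Using the crude Stirling bound $(\ell+1)! \geq ((\ell+1)/e)^{\ell+1}$, we obtain $M^{\ell+1}/(\ell+1)! \leq (eM/(\ell+1))^{\ell+1}$, so once $\ell + 1 \geq 2eM$ this quantity is at most $2^{-(\ell+1)}$. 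Demanding in addition $\ell + 1 \geq \log_2(1/\eps)$ gives the needed bound, and both conditions are achieved by taking $\ell = O(\max(M, \log 1/\eps))$.

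The final step is cosmetic: non-negativity of the coefficients of $P_\ell$ together with non-negativity of $e^\lambda - P_\ell(\lambda)$ on $[0,M]$ means that the one-sided inequality $P_\ell(\lambda) \leq e^\lambda$ automatically holds, so the two-sided PSD statement amounts to the single tail bound above. Lifting back through the eigendecomposition of $X$ completes the proof. I do not expect any of these steps to pose real difficulty; the only thing to be mindful of is ensuring the scalar argument is uniform in $\lambda \in [0,M]$ (so that one polynomial degree works for every eigenvalue of $X$ simultaneously), which is exactly what the Lagrange-remainder bound in terms of $M$ provides.
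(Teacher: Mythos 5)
Your proof is correct: the reduction to the scalar inequality is valid because $\exp(X)$ and $P_\ell(X)$ are simultaneously diagonalizable, and the Lagrange-remainder plus Stirling computation gives the claimed degree bound uniformly over $[0,M]$. The paper does not prove this lemma at all (it is imported from \cite{AroraK16}), and your argument is essentially the standard Taylor-truncation proof underlying that citation, so there is nothing to flag.
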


Another difficulty is that writing down the matrix $U_{t}$ takes
the time $\Omega(d^{4})$. Then, we need the Johnson-Lindenstrauss
Lemma \cite{JL84} to construct a matrix in much lower dimension: 
\begin{lemma}[Johnson-Lindenstrauss Lemma (JL Lemma)]
\label{lem:JL-lemma} Let $J\in\R^{r\times d}$ be a matrix whose
each entries are i.i.d.\,samples from $\mathcal{N}(0,1/r)$. For
every vector $v\in\R^{d}$ and every $\eps\in(0,1)$, 
\[
\Pr[\|Jv\|_{2}\approx_{\eps}\|v\|_{2}]>1-\exp(-\Omega(\eps^{2}r)).
\]
\end{lemma}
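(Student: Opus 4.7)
The plan is to follow the classical proof of the Gaussian Johnson–Lindenstrauss lemma, which reduces the claim to a $\chi^2$ concentration inequality. First, by homogeneity, I would assume without loss of generality that $\|v\|_2 = 1$, since both sides of $\|Jv\|_2 \approx_\eps \|v\|_2$ scale the same way under $v \mapsto c v$.

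The key observation is that, for a unit vector $v$, each row $J_i$ of $J$ satisfies $\langle J_i, v\rangle \sim \mathcal{N}(0, 1/r)$, and these inner products are independent across rows $i = 1,\ldots, r$ because the rows are independent. Therefore
\[
r\, \|Jv\|_2^2 \;=\; \sum_{i=1}^r \bigl(\sqrt r\, \langle J_i, v\rangle\bigr)^2
\]
is distributed as $\chi^2_r$, a chi-squared random variable with $r$ degrees of freedom and mean $r$. This is the main structural step, and it is immediate from the rotational invariance of the Gaussian together with $\|v\|_2 = 1$.

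Next, I would apply a standard concentration bound for $\chi^2_r$ (for example, the Laurent–Massart bounds, or a direct moment generating function computation using the fact that $\chi^2_r$ is sub-exponential). These yield, for all $\eps \in (0,1)$,
\[
\Pr\!\left[\,\bigl|\|Jv\|_2^2 - 1\bigr| > \eps\,\right] \;\le\; 2\exp(-c\,\eps^2 r)
\]
for a universal constant $c > 0$. Converting the additive bound on $\|Jv\|_2^2$ into the multiplicative bound $\|Jv\|_2 \approx_\eps 1$ requires only the elementary inequalities $1 - \eps \le (1+\eps)^{-2}$ and $1 + \eps \le (1+\eps')^{2}$ for $\eps' = \Theta(\eps)$, after which a rescaling of the constant in the exponent gives the advertised failure probability $\exp(-\Omega(\eps^2 r))$.

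There is no substantive obstacle here: the argument is entirely classical, and the only point requiring even a moment of care is the passage from the additive concentration of $\|Jv\|_2^2$ around $1$ to the multiplicative concentration of $\|Jv\|_2$ around $1$ in the sense of $\approx_\eps$, which costs at most a constant factor inside the exponent. I would also remark that exactly the same argument (via a union bound over all pairwise differences) yields the standard Johnson–Lindenstrauss statement for a finite point set with $r = O(\log(n)/\eps^2)$, which is the form actually invoked elsewhere in the paper.
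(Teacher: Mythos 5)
Your argument is correct, and in fact the paper offers no proof to compare it against: \cref{lem:JL-lemma} is stated as a known result cited from the original Johnson--Lindenstrauss work, so your chi-squared route (rotation invariance giving $r\|Jv\|_2^2 \sim \chi^2_r$ for unit $v$, then Laurent--Massart/sub-exponential concentration, then additive-to-multiplicative conversion) is exactly the standard proof one would supply. One small slip in the write-up: the inequality ``$1-\eps \le (1+\eps)^{-2}$'' is stated in the wrong direction and fails for $\eps$ close to $1$, but the rescaling $\eps' = \Theta(\eps)$ you already invoke (e.g.\ applying the additive bound at deviation $\eps/2$) repairs this at the cost of only a constant in the exponent, so the claimed $\exp(-\Omega(\eps^2 r))$ bound stands.
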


\begin{lemma}[Tellegen's Theorem, \cite{DBLP:books/daglib/0090316}]
\label{lem:transposition-principle}Fix a matrix $A\in\R^{r\times c}$,
if we can compute matrix-vector product $Ax$ for any $x\in\R^{c}$
in time $t$. Then, we can compute $A^{\top}y$ for any $y\in\R^{r}$
in time $O(t)$.
\end{lemma}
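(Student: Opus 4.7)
The plan is to invoke the standard transposition principle (Tellegen's theorem) for linear arithmetic circuits. Since the map $x \mapsto Ax$ is $\R$-linear in $x$ and the entries of $A$ are fixed in advance, any time-$t$ algorithm computing $Ax$ can be realized by a linear straight-line program (SLP) $\mathcal{C}$ of size $O(t)$ whose only gates are (i) binary additions of intermediate values and (ii) scalar multiplications by constants depending only on $A$. This reduction is standard and incurs only a constant-factor overhead: any computation whose inputs are independent of $x$ can be precomputed into the circuit's constants, and since the output is linear in $x$, no genuinely nonlinear operation on $x$ is ever required.

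\textbf{Transposing the circuit.} I would then construct the transposed circuit $\mathcal{C}^{\top}$ by the following mechanical recipe: reverse every edge in the underlying DAG; swap the roles of input and output nodes (so the $c$ inputs of $\mathcal{C}$ become outputs and the $r$ outputs become inputs); replace every addition gate by a duplication (fan-out) node; replace every fan-out node by an addition gate; and leave every scalar-multiplication gate unchanged, keeping its scalar. The resulting $\mathcal{C}^{\top}$ has exactly the same number of gates as $\mathcal{C}$, so it runs in time $O(t)$. Correctness, namely that $\mathcal{C}^{\top}$ evaluates the map $y \mapsto A^{\top} y$, follows by induction on the topology of the DAG: each elementary gate of $\mathcal{C}$ corresponds to a small linear map whose matrix transpose is precisely the map implemented by its transposed gate, and the identity $(MN)^{\top} = N^{\top} M^{\top}$ applied iteratively along the DAG completes the argument.

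\textbf{Main obstacle.} The only nontrivial step is the first reduction: compiling a general time-$t$ algorithm (say, in the algebraic RAM model with constant-time real arithmetic) into a linear SLP of size $O(t)$. This relies on a standard normalization argument showing that when the map $x \mapsto Ax$ is linear and $A$ is fixed in advance, one may restrict without loss of generality to circuits built only from additions of intermediate values and multiplications by scalar constants precomputed from $A$. Once this normalization is in place, the remainder of the proof is a purely syntactic graph transformation whose correctness is essentially a gate-by-gate application of $(MN)^{\top} = N^{\top} M^{\top}$, and the runtime claim is immediate from the gate-count-preservation of the transposition recipe.
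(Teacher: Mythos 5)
The paper does not prove this lemma at all: it is imported as a black box from the cited textbook on algebraic complexity, so there is no internal proof to compare against. Your sketch is the standard proof of the transposition principle, and the circuit-transposition recipe (reverse the DAG, swap additions with fan-outs, keep scalar multiplications, argue correctness gate-by-gate via $(MN)^{\top}=N^{\top}M^{\top}$) is correct and gate-count preserving. The one place where you are waving your hands is exactly the step you flag: the claim that \emph{any} time-$t$ algorithm for $x\mapsto Ax$ compiles into a linear straight-line program of size $O(t)$. That is false in general computational models (an algorithm may branch on the input or use nonlinear intermediate quantities), and the rigorous statement of Tellegen's theorem is about linear circuits; the reduction from general arithmetic circuits to linear ones over an infinite field is a genuine theorem (Strassen's homogenization/avoidance-of-divisions argument), not a triviality. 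In this paper the gap is harmless, because every matrix-vector product to which the lemma is applied (e.g.\ computing $(Z-\mu(w_i)\mathbf{1}^{\top})v$ via $Zv=(Y\diag(v)Y^{\top})^{\flat}$ and fast rectangular matrix multiplication) is manifestly a linear circuit in $v$, so the transposition can be applied directly without the normalization step. If you want a self-contained proof, either state the lemma for linear circuits from the outset or explicitly restrict to the linear-circuit algorithms actually used.
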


\begin{lemma}[Power method]\label{lem:power-method}
For any matrix $A\in \R^{m\times m}$, there exists an randomized algorithm, with probability $1-\delta$, outputs its $1\pm \eps$ approximation using $O(\log m \log (1/\delta)/\eps)$ many matrix-vector multiplications.
\end{lemma}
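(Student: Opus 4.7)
The plan is to invoke the classical power iteration with random Gaussian initialization, and then amplify the success probability by independent repetitions. The quantity being approximated (in the contexts where this lemma is used) is the spectral norm $\|A\|$, so I will reduce to approximating the largest eigenvalue of the symmetric PSD matrix $B = A^{\top} A$, whose top eigenvalue is $\|A\|^2$; each multiplication by $B$ costs two multiplications by $A$, which only changes constants in the final bound.

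For the base procedure, I would sample $v_0 \sim \mathcal{N}(\boldsymbol{0}, I_m)$, iterate $v_{k+1} = Bv_k$ for $k = \Theta(\log m/\eps)$ rounds, and output the Rayleigh quotient $R_k = v_k^{\top} B v_k / v_k^{\top} v_k$. Let $\lambda_1 \geq \cdots \geq \lambda_m \geq 0$ be the eigenvalues of $B$ with eigenvectors $u_1,\ldots,u_m$, and decompose $v_0 = \sum_i \alpha_i u_i$. Each $\alpha_i$ is an independent standard Gaussian, so with constant probability (say $\geq 2/3$) simultaneously $|\alpha_1| \geq 1/2$ and $\|v_0\|_2^2 \leq 4m$. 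Conditioning on this event, expand $R_k = \sum_i \lambda_i^{2k+1}\alpha_i^2 / \sum_i \lambda_i^{2k}\alpha_i^2$, split indices into $H = \{i : \lambda_i \geq (1-\eps)\lambda_1\}$ and its complement $L$, and note that the $H$-contribution to the numerator is already at least $(1-\eps)\lambda_1 \cdot \alpha_1^2 \lambda_1^{2k}$, while the $L$-contribution to the denominator is at most $(1-\eps)^{2k}\lambda_1^{2k} \cdot \sum_{i \in L} \alpha_i^2 \leq (1-\eps)^{2k} \cdot 4m \lambda_1^{2k}$. Choosing $k = \Theta(\log m / \eps)$ renders the $L$-term negligible compared to $\alpha_1^2 \lambda_1^{2k} \geq \lambda_1^{2k}/4$, so $R_k \geq (1-\eps)\lambda_1$; the matching upper bound $R_k \leq \lambda_1$ is automatic by Courant--Fischer.

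For amplification, I would run $T = O(\log(1/\delta))$ independent copies of the base procedure and return the maximum Rayleigh quotient. Every such output is a valid lower bound on $\lambda_1$, so the maximum equals $(1\pm\eps)\lambda_1$ whenever at least one copy succeeds, which by independence happens with probability $\geq 1 - (1/3)^T \geq 1 - \delta$. Each copy uses $O(\log m/\eps)$ multiplications by $B$, i.e.\ $O(\log m/\eps)$ multiplications by $A$, yielding a total cost of $O(\log m \log(1/\delta)/\eps)$ matrix-vector products. A final square root turns a multiplicative $(1\pm\eps)$ approximation of $\lambda_1(B) = \|A\|^2$ into one for $\|A\|$, after rescaling $\eps$ by a constant.

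The main technical hurdle is controlling $\alpha_1^2/\|v_0\|_2^2$ without assuming any spectral gap --- the Kuczy\'nski--Wo\'zniakowski step. The key observation is that the standard Gaussian density is bounded near $0$, so $\Pr[|\alpha_1| \leq t] \leq O(t)$, and $\|v_0\|_2 = \Theta(\sqrt{m})$ by standard concentration; together these give $\alpha_1^2 / \|v_0\|_2^2 = \Omega(1/m)$ with constant probability, which is exactly the initial disadvantage that the damping factor $(1-\eps)^{2k}$ can overcome in $O(\log m / \eps)$ iterations regardless of the distribution of the lower eigenvalues of $B$.
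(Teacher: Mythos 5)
The paper offers no proof of this lemma at all --- it is invoked as a folklore ingredient, alongside the cited Taylor-series and Johnson--Lindenstrauss lemmas --- so there is nothing internal to compare against. On its own terms, your proposal is the standard Kuczy\'nski--Wo\'zniakowski-style gap-free analysis of power iteration with a Gaussian start, and its overall structure is sound: reducing to the top eigenvalue of $B=A^{\top}A$ is the right reading of ``$1\pm\eps$ approximation'' given that the paper applies the lemma to estimate $\|M(w_t)-I\|$, and taking the maximum over $O(\log(1/\delta))$ independent runs is a valid amplification precisely because a Rayleigh quotient of the PSD matrix $B$ never exceeds $\lambda_1$, so the estimate is one-sided.

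Two points need repair. First, in the main estimate your numerator bound keeps only the $i=1$ term: from $\mathrm{num}_H\ge(1-\eps)\lambda_1\,\alpha_1^2\lambda_1^{2k}$ and ``$\mathrm{den}_L$ negligible compared to $\alpha_1^2\lambda_1^{2k}$'' one cannot conclude $R_k\ge(1-\eps)\lambda_1$, because the denominator also contains $\mathrm{den}_H=\sum_{i\in H}\lambda_i^{2k}\alpha_i^2$, which can be as large as $\Theta(m)\lambda_1^{2k}$ (e.g.\ when many eigenvalues lie within a $(1-\eps)$ factor of $\lambda_1$), and then the ratio you wrote down degrades to $\Theta(\lambda_1/m)$. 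The fix is the usual one: bound $\mathrm{num}_H\ge(1-\eps)\lambda_1\,\mathrm{den}_H$ term by term, so that $R_k\ge(1-\eps)\lambda_1\cdot\mathrm{den}_H/(\mathrm{den}_H+\mathrm{den}_L)$, and only then play $\mathrm{den}_H\ge\alpha_1^2\lambda_1^{2k}\ge\lambda_1^{2k}/4$ against your bound $\mathrm{den}_L\le 4m(1-\eps)^{2k}\lambda_1^{2k}$. Second, with that fix, driving the $\mathrm{den}_L$ term below an $\eps$ fraction of $\mathrm{den}_H$ forces $k=\Theta(\log(m/\eps)/\eps)$ rather than $\Theta(\log m/\eps)$, so as written you prove the lemma with $\log m$ replaced by $\log(m/\eps)$. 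This coincides with the stated bound whenever $\eps\ge 1/\poly(m)$ --- in particular for the fixed accuracy $0.1$ at which \textsc{ApproximateScore} actually invokes the power method --- but the literal statement for arbitrarily small $\eps$ requires the sharper Kuczy\'nski--Wo\'zniakowski bound (expected relative error $O(\log m/k)$ after $k$ steps) combined with Markov's inequality. Both repairs are local and change neither your algorithm nor the claimed count of matrix-vector products; minor constants, such as the ``$\ge 2/3$'' probability of the event $\{|\alpha_1|\ge 1/2,\ \|v_0\|_2^2\le 4m\}$, are immaterial since any constant success probability suffices for the amplification.
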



\subsection{Fast approximate score oracle}
Observe that $\cOaug$ is strictly more powerful than $\cOap$. In this section, we describe how to implement $\cOaug$ fast.

\begin{algorithm}
\caption{Approximate Score Oracle\label{alg:approx-oracle}}

\begin{algorithmic}[1]

\Procedure{\textsc{ApproximateScore}}{$\delta, \alpha, S=\{Y_{1,}\ldots,Y_{N}\},w_{1},w_{2},\ldots,w_{t}$}

\State$r\leftarrow O(\log 1/\delta),\ell \leftarrow t$

\State Let $J\in\R^{r\times d^{2}}$matrix whose each entries are
i.i.d.\,samples from $\mathcal{N}(0,1/r)$.

\State Let $Z_{i}=Y_{i}\otimes Y_{i}$ and  $\mu(w_{i})=\frac{1}{|w_{i}|}\sum_{j=1}^{N}w_{i,j}Z_{j}$,

\State Let $M(w_{i})=\frac{1}{|w_{i}|}\sum_{j=1}^{N}w_{i,j}(Z_{j}-\mu(w_{i}))(Z_{j}-\mu(w_{i}))$
 and  $A=\frac{\alpha}{2}\sum_{i=0}^{t-1}M(w_{i})$.

\State Compute $S_{r,\ell}$ where $S_{r,\ell}=J\cdot P_{\ell}(A)$

\State Compute $\nu=\tr(S_{r,\ell}S_{r,\ell}^{\top})$.

\State Compute $B=S_{r,\ell}(Z-\mu(w_{t})\boldsymbol{1}^{\top})$

\For{$i=1,\ldots,N$}

	\State Let $\wt{\tau}_{t,i}=\frac{1}{\nu}\|B_{:i}\|_{2}^{2}$

\EndFor

\State Let $\tilde{q}_{t}=\sum_{i=1}^{N}(\tilde{\tau}_{t,i}-1).$

\State Compute $\tilde{\lambda}\approx_{0.1} \|M(w_t)-I\|$ using power method with $r$ many iterations. \label{line:power-method}

\State \textbf{Output $\tilde{\tau}_{t},\tilde{q}_{t},\tilde{\lambda}.$}
\EndProcedure
\end{algorithmic}
\end{algorithm}

\begin{lemma}\label{lem:approx-score-fast-implementation}
Assuming $\alpha$ is chosen such that it always satisfies \cref{eq:alpha}, then $\textsc{ApproximateScore}(S,w_{1},\ldots,w_{t})$ runs in time $\wt O(t^2\cdot  T(N,d)\log 1/\delta)$.
\end{lemma}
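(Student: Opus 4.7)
The plan is to bound the cost of each step of \textsc{ApproximateScore} and show that the bottleneck is the computation of $S_{r,\ell} = J \cdot P_\ell(A)$, which will cost $\wt O(t^2 T(N,d) \log 1/\delta)$. The pivotal observation is that even though $A = \frac{\alpha}{2}\sum_{i=0}^{t-1} M(w_i)$ is a $d^2 \times d^2$ matrix that we cannot afford to materialize, a single matrix--vector product $A v$ for $v \in \R^{d^2}$ can be evaluated in $\wt O(t \cdot T(N, d))$ time by leveraging the tensor structure $Z_j = Y_j \otimes Y_j$. Concretely, to compute $M(w_i) v$, we first reshape $v$ into a $d \times d$ matrix $V$, observe that $\langle Z_j, v\rangle = Y_j^\top V Y_j$, and evaluate all $N$ such inner products simultaneously by forming $VY$ via one rectangular multiplication (cost $T(N,d)$) and then reading off the diagonal of $Y^\top (VY)$ in $O(Nd)$ time. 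Once the scalars $\alpha_j = \langle Z_j - \mu(w_i), v\rangle$ are in hand, the output
\[
M(w_i) v = \frac{1}{|w_i|} \sum_j w_{i,j}\alpha_j (Z_j - \mu(w_i))
\]
can be assembled, since its first term equals $\mathrm{vec}(Y D Y^\top)$ for the diagonal matrix $D = \mathrm{diag}(w_{i,j}\alpha_j / |w_i|)$ --- again one $T(N,d)$ multiplication --- and its second term is a scalar multiple of $\mu(w_i)$. Summing over the $t$ summands of $A$ gives the claimed cost for a single $Av$.

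Given a fast $Av$ oracle, one application of the degree-$\ell$ polynomial $P_\ell(A)$ costs $\ell \cdot t \cdot T(N,d)$ via Horner-style iterated multiplication. With $\ell = t$, and since $P_\ell(A)$ is symmetric, we compute $S_{r,\ell} = J P_\ell(A)$ by applying $P_\ell(A)$ to each of the $r = O(\log 1/\delta)$ rows of $J$ --- or equivalently by invoking Tellegen's principle (\cref{lem:transposition-principle}) --- for a total cost of $\wt O(t^2 T(N,d) \log 1/\delta)$. The matrix $S_{r,\ell}$ is then stored explicitly as an $r \times d^2$ array, which is cheap since $rd^2 \ll T(N,d)$ in the regime $N = \Omega(d^2/\eps^2)$.

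The remaining steps are cheaper. First, $\mu(w_t) = \mathrm{vec}(Y \cdot \mathrm{diag}(w_t/|w_t|) \cdot Y^\top)$ is computed in one $T(N,d)$ multiplication, and $B = S_{r,\ell}(Z - \mu(w_t)\mathbf{1}^\top)$ is obtained by reshaping each of the $r$ rows of $S_{r,\ell}$ into a $d\times d$ matrix $\tilde S_k$ and evaluating $Y_j^\top \tilde S_k Y_j$ for all $j$ by exactly the diagonal-of-bilinear-form trick used above, at cost $r \cdot T(N,d)$. Computing $\nu$, the scores $\tilde\tau_{t,i} = \|B_{:i}\|_2^2/\nu$ from column norms of $B$, and the sum $\tilde q_t$, adds only $O(rN)$. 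Finally, the power method on line~\ref{line:power-method} runs $r$ iterations against $M(w_t) - I$ (\cref{lem:power-method}), each of which is a single invocation of the $M(w)v$ primitive from the first paragraph, contributing $\wt O(T(N,d) \log 1/\delta)$. Summing all contributions yields the claimed $\wt O(t^2 T(N,d) \log 1/\delta)$ runtime. The main technical obstacle is the first paragraph: carefully unpacking the tensor structure of the $Z_j$ to implement the $Av$ primitive without ever touching the $d^2 \times d^2$ ambient space --- this is precisely what allows matrix-multiplication-time robust covariance estimation in the first place.
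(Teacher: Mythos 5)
Your proposal is correct and follows essentially the same route as the paper's proof: the heart of the argument is a fast matrix--vector oracle for $M(w_i)v$ exploiting the tensor structure (cf.\ \cref{lem:approx-covariance-multiplication}), after which $P_\ell(A)$ is applied to the $r=O(\log 1/\delta)$ JL sketch vectors, and the scores, $\nu$, $\tilde q_t$, and the power-method estimate of $\|M(w_t)-I\|$ are all charged to lower-order terms exactly as in the paper. The only (harmless) difference is that you implement the transpose product $Z^\top v$ directly via the reshape-and-diagonal-of-$Y^\top V Y$ trick, where the paper instead invokes Tellegen's principle (\cref{lem:transposition-principle}) to transfer the $Zu=(Y\diag(u)Y^\top)^\flat$ identity to the transpose.
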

\noindent
First, we show how to compute $M(w_{i})v$ for any $v$ by utilizing the fast rectangular matrix multiplication.

\begin{lemma}\label{lem:approx-covariance-multiplication}
For any vector $v\in \R^{d^2}$ and $w_i\in \R^N$. We can compute the matrix-vector product $M(w_i)\cdot v$ in $O( T(N,d))$ time. 
\end{lemma}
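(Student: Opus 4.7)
The plan is to avoid ever materializing $Z_i = Y_i \otimes Y_i$ or the $d^2 \times d^2$ matrix $M(w_i)$ explicitly, and instead to express everything in terms of the $d \times N$ data matrix $Y = [Y_1 \mid \cdots \mid Y_N]$, so that all the work reduces to a constant number of rectangular matrix products, each computable in $T(N,d)$ time. The key identification is that a vector $v \in \R^{d^2}$ reshapes into a $d \times d$ matrix $V$, and under this reshaping $\langle Z_j, v \rangle = \langle Y_j \otimes Y_j, v \rangle = Y_j^\top V Y_j$, while a sum of the form $\sum_j c_j Z_j$ reshapes into $Y \diag(c) Y^\top$. Both of these objects can be computed in the desired time: the first by forming $V Y$ and then reading off the diagonal of $Y^\top (V Y)$, which costs $O(T(N,d))$ (multiplying $d \times d$ by $d \times N$ is dominated by $T(N,d)$ via blocking, and then the diagonal of $Y^\top (VY)$ takes an extra $O(Nd)$); the second is directly a multiplication of a $d \times N$ matrix by an $N \times d$ matrix with a diagonal rescaling, which is exactly $T(N,d)$.

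Given these primitives, I would expand
\[
M(w_i) v \;=\; \frac{1}{|w_i|} \sum_{j} w_{i,j} (Z_j - \mu(w_i)) \langle Z_j - \mu(w_i), v \rangle
\]
and proceed as follows. First, compute the reshaped mean $\bar{\mu}_i := \frac{1}{|w_i|} Y \diag(w_i) Y^\top \in \R^{d \times d}$, which is one call of the second primitive. Second, reshape the input $v$ into $V$ and compute the scalars $s_j := Y_j^\top V Y_j$ for all $j$ using the first primitive; from these and $\mu_v := \langle \mu(w_i), v \rangle = \tr(\bar{\mu}_i V)$, form the coefficients $\alpha_j := w_{i,j} (s_j - \mu_v)$ in $O(N)$ time. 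Third, compute $\frac{1}{|w_i|} Y \diag(\alpha) Y^\top$ using the second primitive, and subtract $\frac{\sum_j \alpha_j}{|w_i|} \bar{\mu}_i$. Reshaping the resulting $d \times d$ matrix back to a length-$d^2$ vector gives $M(w_i) v$.

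The whole procedure uses a constant number of calls to each primitive plus $O(Nd)$ additional work, so the total runtime is $O(T(N,d))$ as claimed. The main obstacle to watch for is making sure that the ``mixed'' multiplication of a $d \times d$ matrix times a $d \times N$ matrix really does fit in the $T(N,d)$ budget (as opposed to only the $d \times N$ times $N \times d$ product that $T(N,d)$ is literally defined to be); this is standard via splitting $Y$ into $d \times d$ blocks, but it is the one place where one needs to be a bit careful about how fast rectangular multiplication is invoked. Everything else is a matter of recognizing the tensor-product structure and reshaping.
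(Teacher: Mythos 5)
Your decomposition is essentially the paper's: both reduce $M(w_i)v$ to a constant number of products of the form $Y\diag(c)Y^\top$ (using $\sum_j c_j Z_j = (Y\diag(c)Y^\top)^\flat$) plus rank-one mean corrections, and your algebra for the coefficients $\alpha_j = w_{i,j}(s_j-\mu_v)$ is correct. The one place you diverge is the ``transpose'' direction, i.e.\ computing all $s_j=\langle Z_j-\mu(w_i),v\rangle$: you do this directly via the mixed product $VY$ (a $d\times d$ times $d\times N$ multiplication) followed by reading off $\diag(Y^\top(VY))$, whereas the paper avoids this product entirely by invoking Tellegen's transposition principle (\cref{lem:transposition-principle}): since $c\mapsto (Z-\mu(w_i)\one^\top)c = (Y\diag(c)Y^\top)^\flat - (\one^\top c)\mu(w_i)$ runs in $O(T(N,d))$, so does $v\mapsto (Z-\mu(w_i)\one^\top)^\top v$.

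This matters because your justification of the mixed product is not sufficient as written. Splitting $Y$ into $N/d$ square blocks bounds $VY$ by $O((N/d)d^{\omega}) = O(Nd^{\omega-1})$, but $T(N,d)$ is defined via fast \emph{rectangular} multiplication and in the paper's regime $N\approx d^2/\eps^2$ it is $\wt O(Nd^{1.252})$, which is strictly smaller than $Nd^{\omega-1}\approx Nd^{1.37}$; so the blocking bound does not stay within $O(T(N,d))$. The step you flag as delicate is therefore a real gap in the argument, though not in the claim: the product of a $d\times d$ by a $d\times N$ matrix has the same asymptotic cost as a $d\times N$ by $N\times d$ product, because the complexity of multiplying matrices of shape $(m,n,p)$ is symmetric under permuting the three dimensions (symmetry of the matrix-multiplication tensor), so $VY$ is indeed computable in $O(T(N,d))$. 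Either cite that symmetry explicitly, or sidestep the issue as the paper does by routing the transpose computation through the transposition principle, which only ever multiplies in the ``native'' $Y\diag(\cdot)Y^\top$ direction.
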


\begin{proof}
Let $Z\in\R^{d^{2}\times N}$be the matrix whose $i$-th column is
$Z_{i}$ and $Y\in\R^{d\times N}$be the matrix whose $i$-th column
is $Y_{i}.$ 
Note that $M(w_{i})=(Z-\mu(w_{i})\boldsymbol{1^{\top}})\diag(w_{i}/|w_{i}|)(Z-\mu(w_{i})\boldsymbol{1^{\top}})^{\top}.$

By \cref{lem:transposition-principle}, $(Z-\mu(w_{i})\boldsymbol{1^{\top}})^{\top}v$
has the same running time as $(Z-\mu(w_{i})\boldsymbol{1^{\top}})v$.
We have $(Z-\mu(w_{i})\boldsymbol{1^{\top}})v=Zv-(\boldsymbol{1^{\top}}v)\mu(w_{i}).$
We observe that $Zv=(Y\diag(v)Y^{\top})^{\flat}$, then we can compute
$Zv$ in $O( T(N,d))$ time. 
Then we can
compute $\diag(w_{i}/|w_{i}|)(Z-\mu(w_{i})\boldsymbol{1^{\top}})^{\top}v$
in time $O( T(N,d))$ since multiply a diagonal matrix by
a vector can be done in $O(N).$ Thus, we can compute $M(w_{i})v$
for any $v$ in time $O( T(N,d))$.
\end{proof}

\begin{proof}[Proof of \cref{lem:approx-score-fast-implementation}]
By \cref{lem:approx-covariance-multiplication}, we can compute $Av$ for
any $v$ in time $O(t T(N,d))$. By repeatedly multiplying $A$ on the left, we can compute
$A^{k}v$ in time $O(k\cdot t T(N,d)).$ 
Since we take $\ell=O(t)$, we can compute $P_{\ell}(A)v$ for any $v$
in time $\wt O(t^2\cdot T(N,d)).$ We can compute $J\cdot P_{\ell}(A)=(P_{\ell}(A)J^{\top})^{\top}$
by multiply each column of $J^{\top}$to $P_{\ell}(A)$. Thus, $S_{r,\ell}$
can be computed in time $\wt O(r\cdot t^2\cdot T(N,d))=\wt O(t^2\cdot T(N,d))$.
We can compute $Z^{\top}v$ in $O( T(N,d))$
by multiplying each row of $S_{r,\ell}$ with $Z$. 
Therefore, we can compute
matrix the $B$ in $\wt O( T(N,d)).$ 

Now, we consider how to compute $\nu$. Note that $\nu=\tr(S_{r,\ell}S_{r,\ell}^{\top})=\sum_{i=1}^{r}(S_{r,\ell}S_{r,\ell}^{\top})_{i,i}=\sum_{i=1}^{r}\|S_{r,\ell}e_{i}\|_{2}^{2}$. Thus, $\nu$ can also be computed in time $\wt O( T(N,d))$.
Once we have $B$ and $\nu$,
then we can compute $\tilde{\tau}$ and $\tilde{q}$ in $\wt O(N)$
time.

 Using power method (\cref{lem:power-method}), we can find a $1\pm 0.1$ approximation of $\|M(w_t)-I\|$ using $O(\log d\log 1/\delta)$ matrix-vector multiplications. By \cref{lem:approx-covariance-multiplication}, we can compute $(M(w)-I)\cdot v$ for any $v$ in time $O(T(N,d))$. Then, the total runtime is $O(T(N,d)) \cdot O(\log d \log (1/\delta)) = \wt O(T(N,d)\log (1/\delta))$.

Thus, the algorithm runs in time $\wt O(t^2\cdot T(N,d) \log 1/\delta).$
\end{proof}

\begin{lemma}\label{lem:score-oracle-correctness}
The output of $\textsc{ApproximateScore}(S,w_{1},\ldots,w_{t},\alpha)$ satisfies
$\tilde{\tau}\approx_{0.1}\tau$ and $\tilde{q}\approx_{0.1}q$ with probability $1-\delta$. 
\end{lemma}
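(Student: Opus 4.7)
My plan is to decompose the analysis into two independent approximation errors and control each one: (i) the Taylor truncation error from replacing $\exp(A)$ by $P_\ell(A)$, and (ii) the Johnson--Lindenstrauss sketching error from left-multiplication by $J$. A guiding observation is that since $A = \tfrac{\alpha}{2}\sum_{i=0}^{t-1} M(w_i)$ is symmetric, $U_t = \exp(2A)/\tr(\exp(2A)) = \exp(A)^2/\tr(\exp(A)^2)$, so the exact score can be rewritten as
\[
\tau_{t,i} = \frac{\|\exp(A)(Z_i-\mu(w_t))\|_2^2}{\tr(\exp(A)^2)},
\]
which is exactly the quantity the algorithm would output if it used $\exp(A)$ in place of $P_\ell(A)$ and the identity in place of $J^\top J$. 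Matching this structure turns the correctness proof into a pair of multiplicative approximation statements: the numerator $\|B_{:,i}\|_2^2$ should approximate $\|\exp(A)(Z_i-\mu(w_t))\|_2^2$, and the denominator $\nu$ should approximate $\tr(\exp(A)^2)$.

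For step (i), \cref{eq:alpha} gives $\|A\|\leq O(t)$, so \cref{lem:taylor-k-approx} (with $\ell$ taken as $\Theta(t + \log(Nd/\delta))$, which I would either absorb into the paper's choice $\ell=t$ via appropriate constants or note as a minor adjustment) yields $(1-\eps_1)\exp(A)\preceq P_\ell(A)\preceq (1+\eps_1)\exp(A)$ for $\eps_1 = 1/\poly(N,d,1/\delta)$. Squaring preserves the relation up to a factor $(1\pm 3\eps_1)$ on all quadratic forms and on the trace, so from here on I may freely replace $P_\ell(A)$ by $\exp(A)$ at the cost of a negligible multiplicative slack. For step (ii), I apply \cref{lem:JL-lemma} to each of the $N$ fixed vectors $v_i = \exp(A)(Z_i-\mu(w_t))$; with $r = O(\log(N/\delta))$, a union bound gives $\|Jv_i\|_2^2 \approx_{0.01} \|v_i\|_2^2$ simultaneously for all $i$ with failure probability at most $\delta/2$. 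The nontrivial piece is the denominator $\nu = \|J\,P_\ell(A)\|_F^2$: here I would stack the rows of $J$ into a single Gaussian vector $g\in\R^{rd^2}$ with i.i.d.\ $\mathcal{N}(0,1/r)$ entries, write $\nu = g^\top(I_r\otimes P_\ell(A)^2)g$, and apply a Hanson--Wright inequality. The expectation is $\tr(P_\ell(A)^2)$; the relevant Frobenius and operator norms satisfy $\|I_r\otimes P_\ell(A)^2\|_F^2 = r\|P_\ell(A)^2\|_F^2$ and $\|I_r\otimes P_\ell(A)^2\|_{\mathrm{op}} = \|P_\ell(A)\|_{\mathrm{op}}^2$, so after normalizing the deviation by $\tr(P_\ell(A)^2)$ the Hanson--Wright exponent collapses to $\Omega(r)$ regardless of the spectrum, giving $\nu\approx_{0.01}\tr(P_\ell(A)^2)$ with probability $1-\delta/2$.

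Combining (i) and (ii) via triangle inequality on the multiplicative errors yields $\tilde\tau_{t,i} = \|B_{:,i}\|_2^2/\nu \approx_{0.1} \tau_{t,i}$ uniformly over $i$, which is the first claim. For $\tilde q_t$, I would relate it to the $\tilde\tau_{t,i}$ by expanding $q_t = \langle M(w_t)-I, U_t\rangle = \tfrac{1}{|w_t|}\sum_j w_{t,j}\tau_{t,j} - 1$ (reading the $\tilde q_t$ formula in the algorithm with the intended weighting). The multiplicative approximation on each $\tilde\tau_{t,j}$ transfers to a bound of the form $|\tilde q_t-q_t|\leq 0.1|q_t| + \epsilon_{\mathrm{extra}}$, and the extra additive slack is controlled by $0.05\|M(w_t)-I\|$ after noting that the uniform multiplicative guarantee $\tilde\tau_{t,j}\approx_{0.1}\tau_{t,j}$ together with $\sum_j \tfrac{w_{t,j}}{|w_t|}\tau_{t,j} = \langle M(w_t),U_t\rangle \leq \|M(w_t)\|$ implies the remaining error is at most a constant times $\|M(w_t)-I\|\cdot\tr(U_t) = \|M(w_t)-I\|$.

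The main obstacle I anticipate is the trace concentration in step (ii): a naive argument that applies JL to each of the $d^2$ columns of $P_\ell(A)$ would cost a $\log(d^2)$ factor in $r$, inflating the oracle runtime. Handling this requires the Hanson--Wright route described above, together with a careful bookkeeping of the spectrum-dependent factors. The remaining pieces---Taylor approximation and per-vector JL---are essentially immediate applications of \cref{lem:taylor-k-approx} and \cref{lem:JL-lemma}, and the bound on $\tilde q_t$ is purely a bookkeeping consequence of the per-point guarantees.
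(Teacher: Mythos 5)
Your decomposition into a Taylor-truncation error and a JL-sketching error is the same skeleton as the paper's proof, and your Hanson--Wright treatment of the denominator $\nu=\|JP_\ell(A)\|_F^2$ is correct and a legitimate alternative: the paper instead conditions on JL holding for all $d^2$ standard basis vectors and sums the diagonal approximations $\|S_{r,\ell}e_i\|_2^2\approx_{0.03}\|\exp(M)e_i\|_2^2$, which only costs a $\log(d^2)$ factor inside $r$ and is harmless for a $\wt O(\cdot)$ runtime, so the "obstacle" you flag there is not actually an obstacle --- but your route is fine and spectrum-independent. The per-point claim $\tilde\tau_{t,i}\approx_{0.1}\tau_{t,i}$ goes through essentially as in the paper.

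The genuine gap is in your argument for $\tilde q_t$. The guarantee you must establish is \eqref{eq:q-def}: $|\tilde q_t-q_t|\leq 0.1\,q_t+0.05\|M(w_t)-I\|$, and the additive allowance $0.05\|M(w_t)-I\|$ can be arbitrarily small (late in the filtering, $M(w_t)$ is close to $I$). Transferring the uniform multiplicative guarantee on the $\tilde\tau_{t,j}$ gives only
\[
|\tilde q_t-q_t|\;\leq\;0.1\cdot\tfrac{1}{|w_t|}\textstyle\sum_j w_{t,j}\tau_{t,j}\;=\;0.1\,\langle M(w_t),U_t\rangle\;=\;0.1\,\bigl(q_t+\langle I,U_t\rangle\bigr)\;=\;0.1\,q_t+0.1,
\]
and bounding $\langle M(w_t),U_t\rangle$ by $\|M(w_t)\|\leq \|M(w_t)-I\|+1$ does not remove the constant $0.1$; your assertion that "the remaining error is at most a constant times $\|M(w_t)-I\|\cdot\tr(U_t)$" does not follow from per-point multiplicative errors, since the identity part $\langle I,U_t\rangle=1$ contributes an $O(1)$ additive term with no cancellation guaranteed. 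The point is that the per-point errors are correlated through the common sketch $JJ^\top$, and you must exploit this: the paper expands $\tilde q_t$ directly as $\langle M(w_t)-I,\,P_\ell(M)JJ^\top P_\ell(M)\rangle/\tr(S_{r,\ell}S_{r,\ell}^\top)$, so the subtraction of $I$ happens \emph{before} the error analysis, and the sketching/truncation error matrix $P_\ell(M)JJ^\top P_\ell(M)-\exp(2M)$ is then paired against $M(w_t)-I$, yielding an error $\xi$ with $|\xi|\leq 0.02\|M(w_t)-I\|\tr\exp(2M)$ and hence the required form after normalizing by $\tr(S_{r,\ell}S_{r,\ell}^\top)\approx_{0.03}\tr\exp(2M)$. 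To repair your proof, replace the "bookkeeping from the $\tilde\tau$'s" step with this direct quadratic-form analysis (or a Hanson--Wright bound on $\langle M(w_t)-I, P_\ell JJ^\top P_\ell-P_\ell^2\rangle$, whose deviation is naturally proportional to norms of $\exp(M)(M(w_t)-I)\exp(M)$ and hence to $\|M(w_t)-I\|$).
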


The correctness proof directly follows by the original correctness
proof in \cite{DHL19}; for completeness, we prove it below.

\begin{proof}
We condition on two events occurring.
Let $\widetilde{\lambda}$ be the output of Line~\ref{line:power-method} in \cref{alg:approx-oracle}.
\begin{itemize}
    \item $\|J\cdot P_{\ell}(A) v\|_2 \approx_{0.01} \|P_{\ell}(A) v\|_2$ for all $v \in \{e_1,\ldots, e_{d^2}\} \cup  \{X_1-\mu(w_t),\ldots, X_{N}-\mu(w_t)\}$.
    \item $\tilde{\lambda} \approx_{0.1} \lambda$.
\end{itemize}
Note that by our choice of parameters, both events occur individually with probability at $1-\frac{\delta}{3}$. Then, by a union bound over failure probability, these two events succeeds with probability at $1-\delta$.

The guarantee on $\tilde{\lambda}$ directly follows from correctness of power method. 
Now, we show $\tilde{\tau}\approx_{0.1}\tau.$ Let $M=\frac{\alpha}{2}\sum_{i=1}^{t-1}M(w_{i})$.
Then, we have 
\begin{align*}
\tau_{i}  =(X_{i}-\mu(w_{t}))^{\top}\frac{\exp(2M)}{\tr\exp(2M)}(X_{i}-\mu(w_{t}))
  =\frac{1}{\tr\exp(2M)}\|\exp(M)(X_{i}-\mu(w_{t}))\|_{2}^{2},
\end{align*}
where as $\tilde{\tau}_{i}  =\frac{1}{\tr(S_{r,\ell}S_{r,\ell}^{\top})}\|S_{r,\ell}(X_{i}-\mu(w_{t}))\|_{2}^{2}.$

Note that 
\begin{align}
\|S_{r,\ell}(X_{i}-\mu(w_{t}))\|_{2}^{2}  =\|J\cdot P_{\ell}(A)(X_{i}-\mu(w_{t}))\|_{2}^{2}\nonumber 
 & \approx_{0.01}\|P_{\ell}(A)(X_{i}-\mu(w_{t}))\|_{2}^{2}\nonumber \\
 & \approx_{0.03}\|\exp(M)(X_{i}-\mu(w_{t}))\|_{2}^{2},\label{eq:approx-l2-norm-squared}
\end{align}
where the first line follows by \cref{lem:JL-lemma} and the last line follows
by our choice of $\ell$ and \cref{lem:taylor-k-approx}.

Similarly, we have $\exp(2M)_{i,i}=\|\exp(M)e_{i}\|_{2}^{2}$ and
$(S_{r,\ell}S_{r,\ell}^{\top})_{i,i}=\|S_{r,\ell}e_{i}\|_{2}^{2}.$

By definition of $S_{r,\ell}$, we have 
\begin{align*}
\|S_{r,\ell}e_{i}\|_{2}^{2} & =\|J\cdot P_{\ell}(A)\cdot e_{i}\|_{2}^{2}
  \approx_{0.01}\|P_{\ell}(A)\cdot e_{i}\|_{2}^{2}
  \approx_{0.03}\|\exp(M)e_{i}\|_{2}^{2}
\end{align*}
and this immediately implies $\tr(S_{r,\ell}S_{r,\ell}^{\top})\approx_{0.03}\tr(\exp2M)$.
Thus, we have $\tilde{\tau}_{i}\approx_{0.07}\tau_{i}.$

Now, we show $\tilde{q}$is close to $q.$ Rewriting $\tilde{q},$we
get 
\begin{align*}
\tilde{q} & =\frac{1}{\tr(S_{r,\ell}S_{r,\ell}^{\top})}\sum_{i=1}^{N}(\|S_{r,\ell}(X_{i}-\mu(w_{t}))\|_{2}^{2}-\tr(S_{r,\ell}S_{r,\ell}^{\top}))\\
 & =\frac{1}{\tr(S_{r,\ell}S_{r,\ell}^{\top})}\left\langle M(w_{t})-I,P_{\ell}(M)JJ^{\top}P_{\ell}(M)\right\rangle \\
 & =\frac{1}{\tr(S_{r,\ell}S_{r,\ell}^{\top})}\text{\ensuremath{\left(\left\langle M(w_{t})-I,\exp(2M)\right\rangle +\xi\right)},}
\end{align*}
where $|\xi|\leq0.02\|M(w_{t})-I\| \cdot\tr\exp(2M).$ We complete
the proof by note that $\tr(S_{r,\ell}S_{r,\ell}^{\top})\approx_{0.03}\tr(\exp2M)$.
\end{proof}

\end{document}